\providecommand{\tabularnewline}{\\}
\providecommand{\algorithmname}{Algorithm}
\numberwithin{equation}{section}
\numberwithin{figure}{section}
\theoremstyle{plain}
\newtheorem{thm}{\protect\theoremname}[section]
\theoremstyle{plain}
\newtheorem{conjecture}[thm]{\protect\conjecturename}
\theoremstyle{plain}
\newtheorem{cor}[thm]{\protect\corollaryname}
\theoremstyle{definition}
\newtheorem{defn}[thm]{\protect\definitionname}
\theoremstyle{plain}
\newtheorem{lem}[thm]{\protect\lemmaname}
\theoremstyle{plain}
\newtheorem{fact}[thm]{\protect\factname}
\theoremstyle{remark}
\newtheorem*{rem*}{\protect\remarkname}
\setlist{nolistsep}
\providecommand{\conjecturename}{Conjecture}
\providecommand{\corollaryname}{Corollary}
\providecommand{\definitionname}{Definition}
\providecommand{\factname}{Fact}
\providecommand{\lemmaname}{Lemma}
\providecommand{\remarkname}{Remark}
\providecommand{\theoremname}{Theorem}
\begin{document}
\global\long\def\defeq{\stackrel{\mathrm{{\scriptscriptstyle def}}}{=}}%
\global\long\def\norm#1{\left\Vert #1\right\Vert }%
\global\long\def\R{\mathbb{R}}%
\global\long\def\otilde{\widetilde{O}}%
\global\long\def\Cov{\textrm{Cov}}%

\global\long\def\bdiag{\mathbf{Diag}}%
\global\long\def\nnz{\mathrm{nnz}}%
\global\long\def\mh{\mathbf{H}}%
\global\long\def\mi{\mathbf{I}}%
\global\long\def\mv{\mathbf{V}}%
\global\long\def\mw{\mathbf{W}}%
\global\long\def\ms{\mathbf{S}}%
\global\long\def\mproj{\mathbf{P}}%
\global\long\def\msigma{\mathbf{\Sigma}}%
\global\long\def\ma{\mathbf{A}}%
 
\global\long\def\Rn{\mathbb{R}^{n}}%
\global\long\def\tr{\mathrm{Tr}}%
\global\long\def\poly{\mbox{poly}}%
\global\long\def\diag{\mathrm{diag}}%
\global\long\def\cov{\mathrm{Cov}}%
\global\long\def\E{\mathbb{E}}%
\global\long\def\P{\mathbb{P}}%
\global\long\def\Var{\mathrm{Var}}%
\global\long\def\rank{\mathrm{rank}}%
\global\long\def\Ent{\mathrm{Ent}}%
\global\long\def\vol{\mathrm{vol}}%
\global\long\def\spe{\mathrm{op}}%
\global\long\def\op{\mathrm{op}}%

\title{Reducing Isotropy and Volume to KLS:\linebreak{}
Faster Rounding and Volume Algorithms}
\author{He Jia\thanks{Georgia Tech, \{hjia36, aladdha6, vempala\}@gatech.edu}\and
Aditi Laddha\footnotemark[1] \and Yin Tat Lee\thanks{University of Washington and Microsoft Research, yintat@uw.edu} \and
Santosh S. Vempala\footnotemark[1]}

\maketitle

\begin{abstract}
We show that the volume of a convex body in $\R^{n}$ in the general
membership oracle model can be computed to within relative error $\varepsilon$
using $\widetilde{O}(n^{3.5}\psi^{2} + n^3/\varepsilon^{2})$ oracle queries,
where $\psi$ is the KLS constant. With the current bound of $\psi=\widetilde{O}(1)$,
this gives an $\widetilde{O}(n^{3.5} + n^3/\varepsilon^{2})$ algorithm, improving
on the Lov\'{a}sz-Vempala $\widetilde{O}(n^{4}/\varepsilon^{2})$
algorithm from 2003. The main new ingredient is an $\widetilde{O}(n^{3}\psi^{2})$
algorithm for isotropic transformation of a well-rounded convex body; we apply this iteratively to isotropicize 
a general convex body.
Following this, we can apply
the $\widetilde{O}(n^{3}/\varepsilon^{2})$ volume algorithm of Cousins
and Vempala for well-rounded convex bodies. We also give an efficient
implementation of the new algorithm for convex polytopes defined by
$m$ inequalities in $\R^{n}$: polytope volume can be estimated in
time $\widetilde{O}(mn^{c}/\varepsilon^{2})$ where $c<3.7$ depends
on the current matrix multiplication exponent and improves on the
previous best bound.
\end{abstract}

\section{Introduction}

Computing the volume is a fundamental problem from antiquity, playing
a central role in the development of fields such as integral calculus,
thermodynamics and fluid dynamics. Over the past several decades,
numerical estimation of the volumes of high-dimensional bodies that
arise in applications has been of great interest. To mention one example
from systems biology, volume has been proposed as a promising parameter
to distinguish between the metabolic networks of normal and abnormal
individuals \cite{HCTFV2017} where the networks are modeled as very
high dimensional polytopes.

As an algorithmic problem for convex bodies in $\R^{n}$, volume computation
has a four-decade history. Early results by Bar\'{a}ny and by Bar\'{a}ny
and F\"{u}redi showed that any deterministic algorithm for the task
is doomed to have an exponential complexity, even to approximate the
volume to within an exponentially large factor. Then came the stunning
breakthrough of Dyer, Frieze, and Kannan showing that with randomization,
the problem can be solved in full generality (the membership oracle
model) to an arbitrary relative error $\varepsilon$ in time polynomial
in $n$ and $1/\varepsilon$. They used the Markov chain Monte Carlo
method and reduced the volume problem to sampling uniformly from a
sequence of convex bodies, and showed that the sampling itself can
be done in polynomial-time. Subsequent progress on the complexity
of volume computation has been accompanied by the discovery of several
techniques of independent interest, summarized below. The current
best complexity of $\widetilde{O}(n^{4}/\epsilon^{2})$\footnote{$\widetilde{O}$ suppresses polylogarithmic terms. $O^{*}$suppresses
dependence on error parameters as well as polylogarithmic terms.} for general
convex bodies is achieved by the 2003 algorithm of Lov\'{a}sz and
Vempala.

\begin{table}[h]
\centering{}%
\begin{tabular}{|c|l|c|}
\hline 
Year/Authors & New ingredients & Steps\tabularnewline
\hline 
\hline 
1989/Dyer-Frieze-Kannan \cite{DyerFK89} & Everything & $n^{23}$\tabularnewline
\hline 
1990/Lov\'{a}sz-Simonovits \cite{LS90} & Better isoperimetry & $n^{16}$\tabularnewline
\hline 
1990/Lov\'{a}sz \cite{L90} & Ball walk & $n^{10}$\tabularnewline
\hline 
1991/Applegate-Kannan \cite{ApplegateK91} & Logconcave sampling & $n^{10}$\tabularnewline
\hline 
1990/Dyer-Frieze \cite{DyerF90} & Better error analysis & $n^{8}$\tabularnewline
\hline 
1993/Lov\'{a}sz-Simonovits \cite{LS93} & Localization lemma & $n^{7}$\tabularnewline
\hline 
1997/Kannan-Lov\'{a}sz-Simonovits \cite{KLS97} & Speedy walk, isotropy & $n^{5}$\tabularnewline
\hline 
2003/Lov\'{a}sz-Vempala \cite{LV2} & Annealing, hit-and-run & $n^{4}$\tabularnewline
\hline 
2015/Cousins-Vempala \cite{CV2015} (well-rounded) & Gaussian Cooling & $n^{3}$\tabularnewline
\hline 
\end{tabular}\caption{\label{tab:volume} The complexity of volume estimation. The complexity
of all algorithms has an $\epsilon^{-2}$ factor. Each step uses $\widetilde{O}(n)$
bit of randomness and $\widetilde{O}(n^{2})$ arithmetic operations.
The last algorithm is for well-rounded convex bodies.}
\end{table}

The main subroutine for volume computation is sampling. Sampling is
done by random walks, notably the ball walk and hit-and-run. The rate
of convergence of random walks is determined by their conductance.
For a Markov chain with state space $\Omega$, transition function
$P_{u}(.)$ and stationary density $Q$, the conductance is
\[
\phi=\min_{A\subset\Omega:Q(A)\le\frac{1}{2}}\frac{\int_{u\in A}P_{u}(\Omega\setminus A)\,dQ(u)}{Q(A)}
\]
i.e., the minimum conditional escape probability with the stationary
density (probability of crossing from a set to its complement starting
from the stationary density in the set). The analysis of the ball
walk done in \cite{KLS97} shows that the conductance can be bounded
in terms of the isoperimetry of the stationary distribution, a purely
geometric parameter. For an $n$-dimensional measure $\nu,$ the reciprocal of its KLS
constant $\psi_{\nu}$ (i.e., the isoperimetric or Cheeger constant)
is
\[
\frac{1}{\psi_{\nu}}=\inf_{A\subset\R:Q(A)\le\frac{1}{2}}\frac{\nu_{n-1}(\partial A)}{\nu(A)}
\]
where $\partial A$ is the boundary of the set $A$, and $\nu_{n-1}$
is the induced $(n-1)$-dimensional measure. The conductance of the
Markov chain reduces to the isoperimetry of its stationary density\footnote{We use $\gtrsim$ to denote ``LHS greater than a constant factor
times RHS''.}:
\[
\phi\apprge\frac{1}{\psi\cdot n}
\]
with $\psi$ being the KLS constant of the stationary density of the
Markov chain. This implies a mixing rate of $O(n^{2}\psi^{2})$. Thus,
bounding the KLS constant becomes critical, and this consideration
originally motivated the KLS conjecture. With many unexpected connections
and applications since its formulation, the conjecture has become
a central part of asymptotic convex geometry and functional analysis.
(A distribution is said to be isotropic if it has zero mean and identity
covariance, see Def. \ref{def:isotropy}).
\begin{conjecture}[KLS Conjecture \cite{KLS95}]
The KLS constant of any isotropic logconcave density in any dimension
is bounded by an absolute constant. Equivalently, for a logconcave
density $q$ with covariance matrix $A$, we have $\psi_{q}\lesssim\norm A_{\op}^{1/2}$.
\end{conjecture}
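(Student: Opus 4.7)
The plan is to attack the KLS conjecture via Eldan's stochastic localization framework, which has driven essentially all progress on this problem. I would set up a continuous-time martingale of measures $\{\rho_{t}\}_{t\ge 0}$ with $\rho_{0}=q$, where each $\rho_{t}$ is a tilted version of $q$ of the form $\rho_{t}(x)\propto q(x)\exp(\langle c_{t},x\rangle-\tfrac{t}{2}\norm{x}^{2})$ and $c_{t}$ is driven by a stochastic differential equation chosen so that $\E[\rho_{t}(x)]=q(x)$ pointwise. As $t$ grows, each realization of $\rho_{t}$ concentrates onto a ``needle,'' where isoperimetry reduces to a one-dimensional question handled by the Lov\'asz--Simonovits localization lemma, while the mean-preserving property lets me bound the isoperimetric constant of $q$ by the expectation over the process of the isoperimetric constants of the needle pieces.

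The key analytic step is to study the covariance matrix $A_{t}=\Cov(\rho_{t})$, which satisfies an SDE of the form $dA_{t}=-A_{t}^{2}\,dt+dM_{t}$, where the martingale term $M_{t}$ is built from the third moments of $\rho_{t}$. An isoperimetric bound of the form $\psi\lesssim\norm{A_{0}}_{\op}^{1/2}$ would follow if I can show $\norm{A_{t}}_{\op}\lesssim\norm{A_{0}}_{\op}$ for all $t\lesssim 1/\norm{A_{0}}_{\op}$. To do this I would apply It\^o's formula to a smooth surrogate for the operator norm, such as $\Phi_{p}(A)=\tr(A^{p})$ for large $p$, and control $\E\Phi_{p}(A_{t})$ via a Gr\"onwall-type argument after splitting the drift into a favorable $-\tr(A_{t}^{p+1})$ piece and an unfavorable contribution from the quadratic variation of $M_{t}$.

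The main obstacle, which is precisely why this is the celebrated KLS conjecture rather than a lemma, is controlling the martingale contribution sharply. A naive bound on the third moments of a logconcave density uses only the current covariance and loses a factor of $\sqrt{n}$, yielding only $\psi\lesssim n^{1/2}$. Bootstrapping by feeding the current isoperimetric bound back into the third-moment estimate produces the known $\psi\lesssim n^{1/4}$. Pushing past this appears to require a genuinely new geometric input: either a direct polylogarithmic third-moment (``thin-shell'') estimate for isotropic logconcave measures, a modified SDE whose quadratic variation is easier to tame, or a replacement of the operator-norm potential $\Phi_{p}$ by an entropy- or information-theoretic quantity that absorbs the martingale fluctuations more tightly. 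I would concentrate the bulk of my effort there, since every other step in the outline above is essentially routine once the right potential is identified.
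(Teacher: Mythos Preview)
The statement you were asked to prove is not a theorem in this paper; it is the KLS \emph{conjecture}, stated explicitly as an open problem and used only as a benchmark. The paper makes no attempt to prove it and contains no proof to compare your proposal against. You yourself recognize this when you write that the obstacle ``is precisely why this is the celebrated KLS conjecture rather than a lemma,'' so what you have written is not a proof proposal but an accurate survey of the stochastic localization program together with a candid admission that the decisive step is missing.

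It is worth noting that the paper's appendix (Section~\ref{sec:KLS}) does deploy exactly the machinery you describe---the process $p_{t}(x)\propto e^{c_{t}^{\top}x-\frac{t}{2}\norm{x}^{2}}p(x)$, the covariance SDE of Lemma~\ref{lem:def-pt}, and the potential $\Phi_{t}=\tr A_{t}^{q}$---but only to prove the \emph{conditional} anisotropic bound of Theorem~\ref{thm:KLS_ani}: assuming $\psi_{n}=O(n^{\beta})$ for isotropic densities, one gets $\psi_{q}\lesssim\beta^{-1}\sqrt{\log n}\,\norm{A}_{1/(2\beta)}^{1/2}$ for general covariance $A$. That argument controls the martingale term via the third-moment estimates of Lemmas~\ref{lem:inq1}--\ref{lem:inq3}, which themselves consume the hypothesized isotropic bound; it does not, and cannot, bootstrap to the unconditional $\psi\lesssim\norm{A}_{\op}^{1/2}$ you were aiming at. Your diagnosis of where the difficulty lies is correct, but there is no proof here to grade.
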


Another equivalent formulation of the conjecture is that for any logconcave
density, a halfspace induced subset achieves the extremal isoperimetry
up to an absolute constant. The paper \cite{KLS95} also showed that
\[
\psi_{q}=O\left(\sqrt{\tr A}\right)
\]
which is $O(\sqrt{n})$ for isotropic logconcave densities. This implies
mixing of the ball walk (and hence sampling) in $n^{3}$ steps from
a warm start in an isotropic convex body $K$ containing a unit ball.
But how to find an isotropic transformation and maintain a warm start?
They do this with two essential ingredients: (i) a bound of $O(n^{2}R^{2})$
on the mixing time where $R^{2}=\E_{K}(\norm{x-\bar{x}}^{2})$ (ii)
interleaving the volume algorithm with isotropic transformation: the
algorithm starts with a simple isotropic body (like a ball) and then
goes through a sequence of $\widetilde{O}(n)$ convex bodies, maintaining
well-roundedness, i.e., $R=\widetilde{O}(\sqrt{n})$, and computing
ratios of volumes of consecutive bodies as well as isotropic transformation,
via sampling; a random sample from the current phase serves as a warm
start for the next phase. As $\widetilde{O}(n)$ samples are needed
to estimate each ratio and for the isotropic transformation, this
implies a volume/rounding algorithm of complexity $n^{3}\cdot n\cdot n=O^{*}(n^{5}).$

The algorithm of \cite{LV2} improves on this as follows: (a) they
separated the isotropic transformation from volume estimation by giving
an $\widetilde{O}(n^{4})$ algorithm for isotropic transformation
and (b) they replaced the sequence of convex bodies with a sequence
of logconcave densities (``simulated annealing''), specifically
exponential densities restricted to convex bodies; they showed that
a sequence of $\widetilde{O}(\sqrt{n})$ densities suffices, while
still maintaining a warm start. This reduced the overall complexity
to $n^{3}\cdot\sqrt{n}\cdot\sqrt{n}=O^{*}(n^{4})$. (Note that by
a simple variance analysis, the number of samples needed per phase
grows linearly with the number of phases \cite{DyerF90}.) The total
number of samples used in the algorithm is $\widetilde{O}(n)$, so
further improvements would require faster sampling. The concluding
remark from \cite{LV2} says:

\emph{``There is one possible further improvement on the horizon.
This depends on a difficult open problem in convex geometry, a variant
of the ``Slicing Conjecture'' }\cite{KLS95}\emph{. If this conjecture
is true... could perhaps lead to an $O^{*}(n^{3})$ volume algorithm.
But besides the mixing time, a number of further problems concerning
achieving isotropic position would have to be solved.''}

Since then there has been much progress on the KLS conjecture; following results by Lee and Vempala~\cite{LeeV17KLS,lee2024eldan}, Chen~\cite{chen2021almost}, 
Klartag and
Lehec \cite{Klartag2022,JLV2022}, the current best bound is 
$O(\sqrt{\log n})$ by Klartag \cite{klartag2023logarithmic}.
\begin{thm}[\cite{klartag2023logarithmic}]
For any isotropic logconcave density $p$, we have $\psi_{p}\lesssim \sqrt{\log n}.$
\end{thm}
Despite these improvements to the KLS constant and the mixing rate
of the ball walk, the complexity of volume computation remained at
$O^*(n^{4})$. Even outputting the first random point needs $n^{4}$ oracle
queries, since the mixing rate improvement is only from a warm start
in an isotropic body.

Progress in a different line, without using the KLS conjecture came
in 2015. Cousins and Vempala gave a volume algorithm with complexity
$O^{*}(n^{3})$ for any \emph{well-rounded} convex body, i.e., they
assume the input body contains the unit ball and is mostly contained
in a ball of radius $R=\widetilde{O}(\sqrt{n})$. This is a weaker
condition than (approximate) isotropic position, which requires that
the covariance matrix of $K$ is (close to) the identity. Their Gaussian
Cooling algorithm uses a sequence of Gaussians restricted to the body,
starting with a Gaussian of small variance almost entirely contained
in the body and flattening it to a near-uniform distribution. Notably,
they bypass the KLS conjecture, needing isoperimetry only for the
special case of the Gaussian density restricted to a convex body,
for which $\psi=O(1)$. The main open problem remaining after their
work was to find a faster algorithm to make the body well-rounded
(or isotropic). An improved rounding algorithm would directly imply
a faster volume algorithm.

\subsection{Main results}

We give a new algorithm for the isotropic transformation of any given
convex body. At the core is an algorithm for isotropic transformation of a well-rounded convex body, 
which we state first. 
\begin{thm}[Well-rounded to Isotropic]
There is an algorithm that takes as input a well-rounded convex body $K$, i.e., satisfying $B(0,1)\subseteq K$
with $\E_{K}\norm x^{2}=\widetilde O(n)$, and with high probability, finds an affine transformation
$T$ using $\widetilde{O}(n^{3}\psi^{2})$ oracle calls, s.t. $TK$
is $2$-isotropic.
\end{thm}
For a general convex body, using a standard method of considering a sequence of growing balls intersected 
with the body, we get faster algorithm for isotropic transformation of any convex body.
With the current bound of $\psi=\widetilde{O}(1)$, the complexity
is $O^{*}(n^{3.5})$. We remark that the time per query is based on maintaining an affine
transformation and computing a matrix vector product. It is standard
for all volume algorithms in the general oracle model.

\begin{thm}[Isotropy]
\label{thm:rounding}There is a randomized algorithm that takes as
input a convex body $K\subset\R^{n}$ given by a membership oracle
with initial point $x_{0}\in K$, bounds $r,R>0$ s.t.\[
x_{0}+rB_{n}\subseteq K\subseteq RB_{n}
\]
 and with probability $1-\delta$, computes an affine transformation
$T$ s.t., $TK$ is in near-isotropic position, i.e., for $x$ sampled
uniformly from $TK$, 
\[
I\preccurlyeq\cov(xx^{\top})\preccurlyeq2I.
\]
The algorithm uses $O(n^{3.5}\psi^{2}\log^{O(1)}(Rn/r)\log(1/\delta))$
membership oracle queries, which is $\widetilde{O}(n^{3.5})$ for the
current KLS constant bound of $\psi=\widetilde{O}(1)$. The time complexity
is $O(n^{2})$ per oracle query.
\end{thm}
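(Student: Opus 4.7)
The plan is to adapt the Gaussian annealing framework of Cousins--Vempala \cite{CV2015} to the rounding problem by interleaving it with incremental affine corrections. I would maintain a running linear transformation $T$ and a variance parameter $\sigma^{2}$, working with the auxiliary density
\[
f_{\sigma}(x)\propto e^{-\|x-x_{0}\|^{2}/(2\sigma^{2})}\mathbf{1}_{K}(x)
\]
pulled back through $T$. Start at $\sigma_{0}\approx r/\sqrt{n}$, for which $f_{\sigma_{0}}$ is essentially an isotropic Gaussian supported in the inscribed ball $x_{0}+rB_{n}$, and geometrically grow $\sigma_{i}^{2}$ by a factor of $1+1/\sqrt{n}$ per phase, up to some $\sigma_{M}\gtrsim R$ for which $f_{\sigma_{M}}$ is within constant total variation distance of the uniform distribution on $K$. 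By the standard annealing analysis of \cite{LV2}, this schedule keeps the $L^{2}$-ratio between consecutive densities bounded and gives $M=\otilde(\sqrt{n}\log(Rn/r))$ phases.

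At each phase I would draw a handful of approximately independent samples from $f_{\sigma_{i}}\circ T^{-1}$ using the ball walk, warm-started from the samples of the preceding phase, and use them to update $T$ so that the covariance of $f_{\sigma_{i}}$ stays within a constant factor of the identity in transformed coordinates. For intermediate phases the density is a Gaussian restricted to the (by the invariant) well-rounded body $TK$, whose KLS constant is $O(1)$ by the Gaussian isoperimetry used in \cite{CV2015}, so the per-sample ball-walk mixing cost is $\otilde(n^{2})$. For the late phases $f_{\sigma_{i}}$ is close to uniform on $TK$ and its KLS constant is $O(\psi)$ by the definition of $\psi$, giving per-sample cost $\otilde(n^{2}\psi^{2})$. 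Spending $\otilde(\sqrt{n})$ samples per phase keeps the total number of samples across the $M$ phases at $\otilde(n)$, and the dominating cost comes from the $\otilde(\sqrt{n})$ near-uniform late phases, contributing $\otilde(\sqrt{n})\cdot\otilde(\sqrt{n})\cdot\otilde(n^{2}\psi^{2})=\otilde(n^{3}\psi^{2})$ oracle queries in total, matching the claimed bound.

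The main technical obstacle is showing that an empirical covariance based on only $\otilde(\sqrt{n})$ samples per phase is enough to maintain near-isotropy of $f_{\sigma_{i}}$ in transformed coordinates throughout all $M$ phases, since $\otilde(n)$ samples would otherwise be needed for a constant-accuracy spectral estimate. The key point is that one is \emph{correcting}, not estimating from scratch: the true covariance of $f_{\sigma_{i+1}}$ differs from that of $f_{\sigma_{i}}$ by a multiplicative $1+O(1/\sqrt{n})$ factor, so the per-phase correction to $T$ has spectral norm $\otilde(1/\sqrt{n})$, and the noise of the $\otilde(\sqrt{n})$-sample empirical covariance fits within this budget up to an independent error. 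A matrix-Bernstein / martingale argument over the $M$ phases then shows that the accumulated errors keep $T$ bounded and $f_{\sigma_{i}}$ near-isotropic at every phase, so that the intermediate KLS constants stay at $O(1)$ and the $\otilde(n^{2}\psi_{i}^{2})$ conductance bound from \cite{KLS97} applies. A union bound on the per-phase failure probabilities gives the $\log(1/\delta)$ factor, and maintaining and applying $T$ during the ball walk accounts for the $O(n^{2})$ per-query arithmetic cost stated in the theorem.
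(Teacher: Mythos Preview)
Your proposal takes a route genuinely different from the paper's, but it has two gaps that I do not see how to close.

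First, the claim that intermediate phases enjoy an $O(1)$ KLS constant ``by the Gaussian isoperimetry used in \cite{CV2015}'' does not survive the transformation by $T$. The $O(1)$ bound in \cite{CV2015} is for an \emph{isotropic} Gaussian restricted to a convex body; after pulling $f_{\sigma}$ back through $T$, the Gaussian in the working coordinates has covariance $\sigma^{2}TT^{\top}$, and the Bakry--\'Emery / Brascamp--Lieb bound gives only $\psi\lesssim\sigma\|TT^{\top}\|_{\op}^{1/2}$. Once $\sigma$ exceeds the width of $K$ in some directions but not others, keeping $f_{\sigma}\circ T^{-1}$ near-isotropic forces $T$ to contract the long directions relative to the short ones, and $\sigma^{2}\|TT^{\top}\|_{\op}$ is not $O(1)$: for $K=[-1,1]^{n-1}\times[-R,R]$ with $1\ll\sigma\ll R$ one gets $T\approx\diag(1,\dots,1,1/\sigma)$ and Gaussian operator norm $\Theta(\sigma^{2})$. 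If instead you redefine the Gaussian to be isotropic in the current $T$-coordinates, then both $\sigma$ and the quadratic form move between phases, and the $L^{2}$ warm-start bound you invoke from \cite{LV2} no longer applies as stated.

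Second, the ``correcting, not estimating'' argument for $\otilde(\sqrt{n})$ samples per phase fails at the spectral level. With $k<n$ samples the empirical covariance has rank at most $k$, so $\|\widehat{A}-I\|_{\op}\ge 1$ trivially; even in the regime where the Adamczak--Srivastava--Vershynin bounds apply, the spectral error is $\Theta(\sqrt{n/k})=\Theta(n^{1/4})$, not $\otilde(1/\sqrt{n})$. The per-phase noise therefore dominates the $O(1/\sqrt{n})$ signal by a polynomial factor, and no martingale over $\otilde(\sqrt{n})$ phases will keep the product of corrections bounded. The paper confronts exactly this obstacle and resolves it differently: its inner loop (Algorithm~\ref{alg:iterative_rounding}) does not attempt to keep the body isotropic throughout. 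It uses a sublinear-sample covariance guarantee (Lemma~\ref{lem:chernoff_app}) that with $k$ samples gives only an \emph{additive} error $\otilde(\tr A/k)\cdot I$, which is enough to identify the large-eigenvalue subspace; it then scales up the complementary subspace, showing that the inscribed ball nearly doubles (Lemma~\ref{lem:innerball}) while $\|A\|_{p}$ stays $\otilde(n)$ (Lemma~\ref{lem:basic_property_isotropization}). The mixing time for these deliberately \emph{anisotropic} intermediate bodies is controlled not by Gaussian isoperimetry but by the anisotropic KLS bound $\psi_{q}^{2}\lesssim\|A\|_{p}\log n$ of Lemma~\ref{lem:KLS-anisotropic}, which is precisely where $\psi$ enters the final $\otilde(n^{3}\psi^{2})$ count. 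The $R/r$ dependence is handled by an outer loop over $K\cap B(0,t)$ with doubling $t$ (Algorithm~\ref{alg:iterative_rounding_general} and Lemma~\ref{lem:round}), not by annealing a Gaussian variance.
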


Faster rounding implies faster volume computation for general convex bodies. 
\begin{cor}[Volume]
 The complexity of computing the volume of a convex body $K\subset\R^{n}$
given by a well-defined membership oracle with probability at least
$1-\delta$ to within relative error $\varepsilon$ is 
\[
O(n^{3.5}\psi^{2}\log^{O(1)}(Rn/r)\log(1/\delta))+O(n^3/\epsilon^2)=\widetilde{O}(n^{3.5}+n^{3}/\varepsilon^{2}).
\]
The time complexity is $O(n^{2})$ per oracle query.
\end{cor}

For an explicit convex polytope $Ax\le b$, with $A\in R^{m\times n}$,
since each oracle query takes $O(mn)$ time, this immediately gives
a bound of $\widetilde O(mn^{4.5} + mn^4/\epsilon^2)$, matching the current best bound in \cite{mangoubi2019faster}. 
There is a different approach
\cite{lee2018convergence} giving a time bound of $O^{*}(m^{2}n^{\omega-\frac{1}{3}})$ (here $\omega$ is the matrix multiplication exponent),
which is better when the number of facets is close to linear in the
dimension. Here we give an efficient implementation of our new algorithm
that improves significantly on the runtime. Let $\omega(a,b,c)$ be
the exponent of complexity of multiplying an $n^{a}\times n^{b}$
matrix with an $n^{b}\times n^{c}$ matrix, e.g., $\omega(1,1,1)=\omega\le2.371552$
\cite{duan2023faster,AlmanW21,williams2024new}. The following result further improves the complexity of
polytope volume computation for large $m$.

\begin{restatable}[Polytope Volume]{thm}{polytopevol}\label{thm:polytope}

The volume of a convex polytope $\left\{ x:Ax\leq b\right\} \subset\R^{n}$
defined by $m$ inequalities can be computed with high probability
to within relative error $\varepsilon$ using fast matrix multiplication
in time $\widetilde{O}(mn^{c}/\varepsilon^{2})$ where $c=\omega(1,1,3.5)-1<3.7$.

\end{restatable}

\subsection{Approach}

The algorithm is based on the following ideas. First, if the covariance
matrix is skewed (i.e., some eigenvalues are much larger than others),
this can be detected via a small random sample. The roundness can
then be improved by scaling up the subspace of small eigenvalues.
But how to sample a highly skewed convex body? To break this chicken-and-egg
problem, we use a sequence of convex bodies obtained by intersecting
the original set with balls of increasingly larger radii: 
\[
K_{t}=K\cap B(0,t),
\]
with $t$ starting at $r$ and increasing by a factor of $(1+\frac{1}{\sqrt{n}})$ till it reaches $R.$ In each
outer iteration, we compute a transformation that makes the set $K_{t}$
(nearly) isotropic. When $K_{t}$ is isotropic, we show that $K_{(1+\frac{1}{\sqrt{n}})t}$
is well-rounded (Lemma \ref{lem:round-3.5})\footnote{A conference version of this paper \cite{jia2021reducing} incorrectly claimed a factor of $2$ scaling works, in place of the $(1+(1/\sqrt{n}))$ factor.}. Hence, the trace of the
covariance of $K_{(1+\frac{1}{\sqrt{n}})t}$ is $\widetilde{O}(n)$. However, its eigenvalues could
be widely varying. To make $K_{(1+\frac{1}{\sqrt{n}})t}$ isotropic, we will estimate the
larger eigenvalues using a small random sample (Lemma \ref{lem:chernoff_app}).
The second idea is that scaling up the small eigenvalues nearly doubles
the size of the ball contained inside (Lemma \ref{lem:innerball}),
while having a mild effect on the higher norms of the covariance (Lemma
\ref{lem:basic_property_isotropization}). The latter concept is where
KLS comes in. We show that if the KLS constant for isotropic logconcave
distributions in $\R^{n}$ is bounded by $\psi_{n}^{2}\lesssim n^{\frac{1}{p}}$
for all $n$, then for any logconcave density $q$ with covariance
matrix $A$ (not necessarily identity), the KLS constant is bounded
as $\psi_{q}^{2}\lesssim\norm A_{p}\log n$ (Theorem \ref{thm:KLS_ani}).
As we outline below, this improves the sampling time in each inner
iteration.

In the beginning, when the ball contained inside $K_{t}$ is small,
we will use a few samples to get a coarse estimate of the larger eigenvalue
directions and scale up the orthogonal subspace. The sampling time
is higher since the roundness parameter of $K_{t}$ is higher. The
time per sample is roughly $n^{2}\frac{\norm A_{p}}{r^{2}}$ when
$r$ is the radius of the ball inside $K_{t}$ and $A$ is the covariance
matrix for the uniform density $q$ on $K_{t}$. As we increase $r$,
by a constant factor in each step, the norm of the covariance grows
much more slowly, and so the sampling time decreases. Meanwhile, we
use more samples in each step, roughly $r^{2}$, and this trade-off
keeps the overall time at $n^{3+\frac{1}{p}}\simeq n^{3}\psi^{2}.$
During the process we need a warm start for each phase; we achieve
this in $\widetilde{O}(n^{3})$ steps using the Gaussian Cooling algorithm.
It is known that $\psi_{q}^{2}\lesssim\norm A_{p}$ holds for $1/p=o(1)$,
giving the $\widetilde{O}(n^{3})$ complexity for rounding in inner iteration. 

\section{Preliminaries}

For a positive definite matrix $n\times n$ matrix $A$, we will use the following matrix norms:
\begin{enumerate}
    \item the operator or spectral norm, $\norm A_{\op}$
    \item the Frobenius norm, $\norm A_{F}\defeq\tr(A^{2})^{1/2}$
\item the $p'th$ norm for $p\ge1$, defined
as
\[
\norm A_{p}\defeq\left(\sum_{i=1}^{n}|\lambda_{i}|^{p}\right)^{1/p}
\]
with $p=1$ being $\tr A$ and $p=2$ being $\norm A_{F}.$
\end{enumerate}
\begin{defn}
A function $f:\R^{n}\rightarrow\R_{+}$ is \emph{logconcave} if its
logarithm is concave along every line, i.e., for any $x,y\in\R^{n}$
and any $\lambda\in[0,1]$,
\begin{equation}
f(\lambda x+(1-\lambda)y)\ge f(x)^{\lambda}f(y)^{1-\lambda}.\label{eq:logcon}
\end{equation}
\end{defn}

Many common probability distributions have logconcave densities, e.g.,
Gaussian, exponential, logistic, and gamma distributions; indicator
functions of convex sets are also logconcave. Logconcavity is preserved
by product, min and convolution (in particular marginals of logconcave
densities are logconcave).
\begin{defn}
\label{def:isotropy}A distribution $D$ is said to be \emph{isotropic}
if
\[
\E_{D}(x)=0\quad\mbox{ and \ensuremath{\quad\E_{D}(xx^{\top})=I}}.
\]
We say a convex body is isotropic if the uniform distribution over
it is isotropic. Any distribution with bounded second moments can
be brought to an isotropic position by an affine transformation. We
say that a convex body $K$ is \emph{$(r,R)$-rounded }if it contains
a ball of radius $r$, and its covariance matrix $A$ satisfies
\[
\tr A=\E_{K}(\norm{x-\E_K(x)}^{2})=R^{2}.
\]
For an isotropic body we have $A=I$ and hence $R^{2}=n.$ We will
say $K$ is well-rounded if $r=1$ and $R=\widetilde{O}(\sqrt{n})$.
Clearly, any isotropic convex body is also well-rounded, but not vice
versa. A distribution is said to be $\alpha$-isotropic if $I\preceq\E_{D}(xx^{\top})\preceq\alpha I$.

Random points from isotropic logconcave densities have strong concentration
properties. We mention three that we will use.
\end{defn}

\begin{lem}[{\cite[Theorem 5.17]{LV07}}]
\label{lem:volume_outside_ball}For any $t\ge1$, and any logconcave
density $p$ in $\R^{n}$ with covariance matrix $A$,
\[
\P_{x\sim p}\left(\|x-\E x\|_{2}\geq t\cdot\sqrt{\tr A}\right)\leq\exp(-t+1).
\]
\end{lem}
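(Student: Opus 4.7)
The plan is a two-step reduction: first a one-dimensional tail bound for logconcave densities, then a lift from 1D to $\|x\|_2$ via the Lov\'asz--Simonovits localization lemma. WLOG translate so $\E x = 0$ and set $R = \sqrt{\tr A}$, so that $\E\|x\|^2 = R^2$ and the target becomes $\P(\|x\| \geq tR) \leq e^{1-t}$ for $t \geq 1$.

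The companion 1D lemma I would first establish is: if $Y$ is a real logconcave random variable with $\E Y = 0$ and $\E Y^2 \leq \sigma^2$, then $\P(|Y| \geq t\sigma) \leq e^{1-t}$ for $t \geq 1$. The proof uses that the survival function $F(s) = \P(Y \geq s)$ is logconcave on $[0,\infty)$ whenever $Y$ has a logconcave density, so $\log F$ is concave on $\{F > 0\}$; anchoring at $s = \sigma$ via Markov applied to $Y^2$ and comparing against the extremal one-sided exponential with rate $1/\sigma$ gives the bound, with the constant $e$ absorbing the gap between $\E Y$, the median, and $\sigma$. The extremal case is the shifted exponential with support $[-c, \infty)$ matching the variance constraint, which is why the exponent is $t-1$ rather than $t$.

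To lift this 1D bound to $\|x\|_2$ in $n$ dimensions, I would apply the Lov\'asz--Simonovits localization lemma. Incorporating the second-moment constraint $\int p(x)(\|x\|^2 - R^2)\,dx = 0$ as a Lagrange multiplier, the desired integral inequality
\[
\int p(x)\bigl(\mathbf{1}[\|x\| \geq tR] - e^{1-t}\bigr)\,dx \leq 0
\]
reduces to the same inequality over 1D ``needles'' --- log-affinely weighted segments $s \in [\alpha,\beta] \mapsto x_0 + sv$. On such a needle, the restricted measure is 1D logconcave, $\|x_0 + sv\|^2$ is a scalar quadratic in $s$, and the 1D lemma applies directly.

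The main obstacle is careful bookkeeping in the localization: $\|x\|$ itself is only convex (not linear) along a needle, so I would handle non-monotone needles by splitting at the minimizer of $\|x_0 + sv\|^2$, reducing to two monotone pieces where the 1D envelope argument from the companion lemma applies. A safer fallback that gives a constant-order but non-sharp exponent is to avoid localization entirely: use Markov on $\|x\|^2$ to conclude that the Euclidean ball $\{\|x\| \leq cR\}$ has $p$-measure at least $1 - c^{-2}$, and then invoke Borell's symmetric-convex-body inequality to amplify this to an exponential tail with a universal constant in place of the tight coefficient $1$ in $e^{1-t}$.
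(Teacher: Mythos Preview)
The paper does not prove this lemma; it is quoted from \cite[Theorem~5.17]{LV07} in the preliminaries without argument, so there is no in-paper proof to compare your proposal against.

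On the merits of your plan: the localization route can be made to work, but not with the 1D lemma you wrote down. After localizing to a needle $s\mapsto x_{0}+sv$ with log-affine density in the parameter $s$, the quantity $Z=\|x_{0}+sv\|=\sqrt{(s-s_{0})^{2}+c}$ is \emph{not} a logconcave random variable: on a monotone half-needle the Jacobian $Z/\sqrt{Z^{2}-c}$ is log-convex near $\sqrt{c}$, and in the degenerate case $c=0$ the folded density $f(s_{0}+z)+f(s_{0}-z)$ of $|s-s_{0}|$ equals $2e^{as_{0}}\cosh(az)$ when $f(s)\propto e^{as}$, which is log-\emph{convex}. Meanwhile the shifted parameter $s-s_{0}$, while log-affine and hence logconcave, is \emph{not} mean zero: the localized moment constraint is only $\E(s-s_{0})^{2}\le R^{2}-c$, with no first-moment information surviving. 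So your hypothesis ``$Y$ logconcave, $\E Y=0$, $\E Y^{2}\le\sigma^{2}$'' is not what the needle delivers, and ``the 1D lemma applies directly'' is false as stated. The correct companion lemma is: for 1D logconcave $W$ with $\E W^{2}\le\rho^{2}$ (no mean assumption), $\P(|W|\ge u\rho)\le e^{1-u}$. Applied with $W=s-s_{0}$, $\rho^{2}=R^{2}-c$, together with the elementary inequality $\sqrt{(t^{2}R^{2}-c)/(R^{2}-c)}\ge t$ for $t\ge1$, this closes the argument---but that lemma still needs its own proof, and your Markov-anchor sketch does not obviously supply the sharp constant.

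Your ``fallback'' via Borell's inequality is in fact the standard short route to this type of bound, not a weaker alternative; I would promote it to the main argument. Do note, though, that a direct application of Borell (anchor the ball $B(0,cR)$ via Chebyshev and dilate) yields $\P(\|x\|>tR)\le C\gamma^{t}$ for some absolute $\gamma\in(0,1)$, which is the correct shape but does not recover the specific exponent $1-t$ stated here without further optimization.
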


\begin{lem}[{\cite[Theorem 1.1]{Paouris2006}}]
\label{lem:volume_outside_ball2}For any $t>0$, and any isotropic
logconcave density $p$ in $\R^{n}$, there is a constant $c$ such
that
\[
\P_{x\sim p}(\|x\|_{2}-\sqrt{n}\geq t\sqrt{n})\leq\exp(-ct\sqrt{n}).
\]
\end{lem}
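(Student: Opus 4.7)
The plan is to reduce the tail bound to a moment estimate and then push the moments as high as possible. Let $I_{q} := (\E_{x \sim p}\|x\|_{2}^{q})^{1/q}$. Isotropy gives $I_{2} = \sqrt{n}$ immediately. If I can establish that there is an absolute constant $C$ such that $I_{q} \leq C\sqrt{n}$ for all $1 \leq q \leq c_{0}\sqrt{n}$, then Markov's inequality applied to $\|x\|_{2}^{q}$ yields
\[
\P(\|x\|_{2} \geq t\sqrt{n}) \leq \left(\frac{I_{q}}{t\sqrt{n}}\right)^{q} \leq \left(\frac{C}{t}\right)^{q},
\]
and choosing $q = c_{0}\sqrt{n}$ (assuming $t \geq eC$, with the remaining range $1 \leq t \leq eC$ absorbed into the constant) gives the stated bound $\exp(-c\sqrt{n}\,t)$. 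So everything reduces to the moment bound.

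For moderate $q$, the standard tool is the Lp-Lq equivalence for seminorms under logconcave measures (Borell's lemma), which gives $I_{q} \lesssim q \cdot I_{1}$ and more generally compares $I_{q}$ to $I_{2}$ up to a multiplicative loss linear in $q$. Unfortunately this only yields the desired conclusion for $q$ of order $1$, not up to $\sqrt{n}$, so a purely one-dimensional approach is insufficient. To go further I would introduce the family of $L^{q}$-centroid bodies $Z_{q}(p)$, defined by the support function $h_{Z_{q}(p)}(\theta) = (\E|\langle x,\theta\rangle|^{q})^{1/q}$; for isotropic $p$ one has $Z_{2}(p) = B_{2}^{n}$, and Borell gives the inclusions $Z_{q}(p) \subseteq C(q/r) Z_{r}(p)$. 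The identity $I_{q}^{q} = \int_{S^{n-1}} h_{Z_{q}}(\theta)^{q}\, d\sigma(\theta) \cdot (\text{integration in radial direction})$ (after an exchange of integrals) lets one translate moment bounds on $\|x\|_{2}$ into volumetric information about $Z_{q}(p)$, specifically $I_{q} \asymp \sqrt{n}\,(\vol Z_{q}(p)/\vol B_{2}^{n})^{1/n}$ up to constants.

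The main obstacle, and the deep content of the argument, is proving that $\vol(Z_{q}(p))^{1/n} \lesssim \vol(B_{2}^{n})^{1/n}$ uniformly for $q \leq c_{0}\sqrt{n}$. I would attempt this by pairing the reverse Santaló inequality of Bourgain-Milman applied to $Z_{q}^{\circ}$ with small-ball estimates: control $\P(|\langle x,\theta\rangle| \leq \varepsilon)$ using the fact that one-dimensional marginals of logconcave measures are $\psi_{1}$, then translate these small-ball bounds into a lower bound on $\vol(Z_{q}^{\circ})^{1/n}$ and hence an upper bound on $\vol(Z_{q})^{1/n}$. The technical heart is a bootstrapping / induction on $q$: one uses the bound at level $q$ together with a concentration of the Euclidean norm argument to upgrade it to level $2q$, stopping once $q \sim \sqrt{n}$ because beyond that threshold the induction hypothesis (that $Z_{q}$ has volume comparable to a ball) would need to be replaced by weaker control.

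Once the volume estimate is in hand, reverse the chain: volume control on $Z_{q}$ gives $I_{q} \lesssim \sqrt{n}$, and the first paragraph converts this into the claimed tail bound. I expect all the steps except the uniform volume bound on $Z_{q}(p)$ to be essentially bookkeeping; that volume bound is precisely where Paouris' work is needed and is the step I would spend most effort on.
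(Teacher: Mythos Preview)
The paper does not supply its own proof of this lemma: it is quoted verbatim from Paouris (2006) and used as a black box (its only appearance is in the proof of Lemma~\ref{lem:round}). So there is no in-paper argument to compare your proposal against.

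As a sketch of Paouris' actual proof, your outline is in the right neighborhood --- the reduction to a uniform moment bound $I_q \lesssim \sqrt{n}$ for $q \le c_0\sqrt{n}$, followed by Markov, is exactly the skeleton, and the $L_q$-centroid bodies $Z_q(p)$ are indeed the central objects. Two technical points are off, however. First, the relation you write, $I_q \asymp \sqrt{n}\,(\vol Z_q / \vol B_2^n)^{1/n}$, is not the identity Paouris exploits; the link between $I_q$ and $Z_q$ goes through negative moments and a mean-width-type quantity for $Z_q$, not raw volume. Second, the route via reverse Santal\'o plus small-ball estimates on $Z_q^\circ$ is not how the argument runs: Paouris instead combines a reduction to bodies with small diameter, a covering/entropy estimate for $Z_q$, and an inductive ``$q \to 2q$'' step that stabilizes at $q \sim \sqrt{n}$. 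Your bootstrapping intuition is right, but the mechanism is mean-width control rather than a volume/polar-duality argument. None of this matters for the present paper, which only consumes the statement.
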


\begin{lem}[\cite{Adamczak2010,srivastava2013covariance}]
\label{lem:covariance} For any isotropic logconcave distribution
$\nu$ in $\R^{n}$, with probability at least $1-\delta$, the empirical
mean and covariance
\[
\overline{x}=(1/N)\sum_{i=1}^{N}x_{i},\quad\mbox{ and }\quad X=\frac{1}{N}\sum_{i=1}^{N}(x_{i}-\overline{x}_{i})(x_{i}-\overline{x}_{i})^{T}
\]
of $N=O(n\log(1/\delta)/\varepsilon^{2})$ i.i.d. random samples $x_{i}\sim\nu$
satisfy
\[
\norm{\overline{x}}_{2}\le\varepsilon,\quad\norm{X-I}_{\op}\le\varepsilon.
\]
\end{lem}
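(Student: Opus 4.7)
The plan is to prove the two bounds separately, reducing both to sharp concentration inequalities for the one-dimensional marginals $\langle x_i,\theta\rangle$, and using Paouris' norm bound (Lemma \ref{lem:volume_outside_ball2}) to truncate extreme samples. The key fact underlying everything is that for isotropic logconcave $\nu$ and any fixed direction $\theta\in S^{n-1}$, the marginal $\langle x,\theta\rangle$ is a mean-zero, variance-one, one-dimensional logconcave random variable, and is therefore subexponential with $\mathbb{P}(|\langle x,\theta\rangle|\ge t)\le e^{-ct}$ by Lemma \ref{lem:volume_outside_ball} applied in dimension one.

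For the mean bound, I would write $\|\overline{x}\|_2=\sup_{\theta\in S^{n-1}}\langle\overline{x},\theta\rangle$ and reduce to a union bound over a $\tfrac14$-net $\mathcal N$ of $S^{n-1}$ with $|\mathcal N|\le 9^n$. For each fixed $\theta$, Bernstein's inequality applied to the subexponential random variables $\langle x_i,\theta\rangle$ (after truncating at the scale $C\log N$ using the subexponential tail) yields $|\langle\overline{x},\theta\rangle|\le\varepsilon/2$ with failure probability $\exp(-cN\varepsilon^2)$. Taking $N=\Theta(n\log(1/\delta)/\varepsilon^2)$ absorbs the $9^n$ net cardinality and the $\log(1/\delta)$ factor.

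For the covariance bound I would use the standard reduction $\|X-I\|_{\op}\le 2\sup_{\theta\in\mathcal N}|\theta^\top X\theta-1|$ on the same net, together with the split $X=\tfrac1N\sum_i x_ix_i^\top-\overline{x}\,\overline{x}^\top$, where the second term is controlled by part one. The main task is thus to bound, uniformly in $\theta$, the deviation of $\tfrac1N\sum_i\langle x_i,\theta\rangle^2$ from $1$. Here is the main obstacle: the variables $\langle x_i,\theta\rangle^2$ are only sub-Weibull of order $\tfrac12$, and a naive Bernstein bound after truncating at $O(\sqrt n)$ (justified by Paouris, Lemma \ref{lem:volume_outside_ball2}) gives a tail of the form $\exp(-c\min(N\varepsilon^2,N\varepsilon/n))$, which is too weak to survive the union bound against $9^n$ when $N=\Theta(n/\varepsilon^2)$. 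The remedy is the Adamczak--Litvak--Pajor--Tomczak-Jaegermann dyadic decomposition: split each sample according to the shell $2^k\sqrt n\le\|x_i\|<2^{k+1}\sqrt n$; Paouris' inequality guarantees exponentially few samples land in the outer shells, and on each shell the contribution of $\langle x_i,\theta\rangle^2$ can be controlled by a combination of Bernstein (for the bulk) and Rudelson's deviation inequality / a chaining argument (for the outer shells).

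Assembling these pieces, the net probability of failure over the $9^n$ directions in $\mathcal N$ is at most $\delta$ once $N=\Omega(n\log(1/\delta)/\varepsilon^2)$, giving the claimed bounds simultaneously. Since this result is exactly the isotropic logconcave case established in \cite{Adamczak2010} (with the sharp $n/\varepsilon^2$ dependence later streamlined in \cite{srivastava2013covariance}), one could also simply invoke those papers directly and verify only that the hypotheses (isotropy and logconcavity of $\nu$) match, which is the route implicit in the excerpt's statement. The main technical obstacle is confined to step three above, namely the dyadic/chaining argument needed to avoid losing polylogarithmic factors in the covariance tail.
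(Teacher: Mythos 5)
The paper does not prove this lemma at all: it is imported verbatim from the cited references (Adamczak et al.\ and Srivastava--Vershynin), so the route you mention at the end --- simply invoking those papers after checking that isotropy and logconcavity match --- is exactly what the paper does. Your sketch of the underlying argument is a faithful outline of the Adamczak--Litvak--Pajor--Tomczak-Jaegermann proof: the one-dimensional marginals are indeed mean-zero, variance-one logconcave and hence subexponential, the net-plus-Bernstein argument does dispose of the mean bound and of the $\overline{x}\,\overline{x}^{\top}$ correction term, and you correctly identify the genuine bottleneck, namely that $\langle x_i,\theta\rangle^{2}$ has only a $\exp(-c\sqrt{t})$ tail, so a naive Bernstein bound cannot survive the union bound over a $9^{n}$-point net when $N=\Theta(n/\varepsilon^{2})$. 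The one caveat is that the step which actually resolves this --- the dyadic shell decomposition via Paouris together with the chaining/Rudelson-type control of $\sup_{\theta}\sum_{i\in E}\langle x_i,\theta\rangle^{2}$ over small sets $E$ of outlier samples --- is only named in your write-up, not carried out; this is the technical heart of \cite{Adamczak2010} and is far from routine, so as a self-contained proof your proposal is incomplete (though not wrong), whereas as a justification for citing the literature it is entirely adequate and consistent with the paper.
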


The convergence of Markov chains is established by showing the the
$t$-th step distribution $Q_{t}$ approaches the steady state distribution
$Q$. We will use the total variation distance $d_{TV}$ for this.
We also need a notion of a warm start.
\begin{defn}[Warm start]
\label{def:warm_start} We say that a starting distribution $Q_{0}$
is $M$-warm for a Markov chain with unique stationary distribution
$Q$ if its $\chi$-squared distance is bounded by $M$:
\[
\chi^{2}(Q_{0},Q)=\E_{Q}\left(\frac{dQ_{0}(u)}{dQ(u)}-1\right)^{2}\le M.
\]
Note that $\chi^{2}(Q_{0},Q)+1=\E_{Q}\left(\frac{dQ_{0}(u)}{dQ(u)}\right)^{2}=\E_{Q_{0}}\frac{dQ_{0}(u)}{dQ(u)}$.
\end{defn}

Our algorithms will use the ball walk for sampling. In a convex body
$K$, the \emph{ball walk} with step-size $\delta$ is defined as
follows: at the current point, $x\in K$, pick $y$ uniformly in $B(x,\delta)$,
the ball of radius $\delta$ around $x$; if $y\in K$, go to $y$
(else, do nothing).

The next theorem is a fast sampler for distributions with KLS constant
$\psi_{q}$ given a warm start.
\begin{thm}[\cite{KLS97}]
\label{thm:warm_start}Let $K$ be a convex body containing the unit
ball. Using the ball walk with step size $\frac{1}{\sqrt{n}}$ in
$K$ from an $M$-warm start $Q_{0}$, the number of steps to generate
a nearly independent point within $TV$ distance or $\chi^{2}$-distance
$\varepsilon$ of the uniform stationary density $Q$ in $\R^{n}$
is $O(n^{2}\psi_{q}^{2}\log^{O(1)}(nM/\epsilon)).$
\end{thm}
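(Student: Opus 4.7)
The plan is to combine Theorem~\ref{thm:rounding} with the Cousins-Vempala Gaussian-cooling volume algorithm, which together estimate the volume of any convex body using $T=\widetilde{O}(n^{3}\psi^{2}/\varepsilon^{2})$ membership queries, and then to implement these queries for the polytope $K=\{x:Ax\le b\}$ via batched rectangular matrix multiplication. The no-matmul bound $\widetilde{O}(mn^{4}/\varepsilon^{2})$ is essentially immediate: a membership test is a single evaluation of $Ax\le b$, and by maintaining $Ax_{t}$ incrementally along each ball-walk chain (an $O(m)$ update per accepted step plus $O(m)$ for the check, after an $O(mn)$ seed per chain), the amortized cost per query is $\widetilde{O}(m)$, giving total $\widetilde{O}(mT)\subseteq\widetilde{O}(mn^{4}/\varepsilon^{2})$.

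For the fast-matrix-multiplication bound, I would exploit the phase structure of the underlying algorithm. The rounding/volume procedure proceeds in $P$ phases, and within each phase the samples needed can be produced by running $N=T/P$ independent ball-walk chains in lockstep, each seeded from a warm start drawn from the previous phase's samples. At every super-step within a phase the $N$ proposed next points form the columns of an $n\times N$ matrix $Y$, and the joint membership check $AY\le b$ reduces to one fast rectangular matrix multiplication of $A\in\R^{m\times n}$ by $Y\in\R^{n\times N}$. Splitting $A$ into $m/n$ row-blocks of size $n\times n$ when $m\ge n$, the per-super-step cost is $\widetilde{O}\bigl((m/n)\cdot n^{\omega(1,1,\log_{n}N)}\bigr)$. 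The $\chi^{2}$-warm-start propagation guaranteed by Theorem~\ref{thm:warm_start}, together with the sample-reuse structure of Cousins-Vempala, keeps the number of super-steps per phase mild enough that, after multiplication by $P$, the total is dominated by the per-phase batched matmul.

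Plugging in gives both bounds. Under the current KLS bound, $T=\widetilde{O}(n^{3.5}/\varepsilon^{2})$ and the annealing uses $P=\widetilde{O}(n^{1/2})$ phases, so $N=T/P=\widetilde{O}(n^{3}/\varepsilon^{2})$, $\log_{n}N=3$, and the total time works out to
\[
\widetilde{O}\!\left(P\cdot\tfrac{m}{n}\cdot n^{\omega(1,1,3)}/\varepsilon^{2}\right)=\widetilde{O}(mn^{c}/\varepsilon^{2})
\]
for $c=\omega(1,1,3.5)$, the two expressions matching up to the asymptotically unit-slope growth of $\omega(1,1,\cdot)$. Under the KLS conjecture, $T=\widetilde{O}(n^{3}/\varepsilon^{2})$ with an effectively $\widetilde{O}(1)$-phase schedule gives $\log_{n}N=3$ and total $\widetilde{O}\bigl((m/n)\cdot n^{\omega(1,1,3)}\bigr)=\widetilde{O}(mn^{\omega(1,1,3)-1})$, as stated.

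The main obstacle is justifying that the ostensibly sequential $T$ ball-walk queries of the base algorithm can be re-organized into $P$ batched super-steps of $N$ lockstep parallel chains without degrading mixing. This amounts to propagating a polynomially bounded $\chi^{2}$-warmness across consecutive phases, as in the LV/CV analyses, so that each phase's warm-start ensemble can be refreshed in a handful of lockstep ball-walk steps; with this in hand, all $N$ membership checks per super-step collapse to a single fast rectangular matrix multiplication and the accounting above goes through.
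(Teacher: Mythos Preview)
Your proposal does not address the stated theorem at all. Theorem~\ref{thm:warm_start} is the cited mixing-time bound for the ball walk from \cite{KLS97}: given a convex body $K\supseteq B(0,1)$ and an $M$-warm start, the ball walk reaches within $\varepsilon$ of uniform in $O(n^{2}\psi_{q}^{2}\log^{O(1)}(nM/\varepsilon))$ steps. The paper does not prove this result; it is listed in the preliminaries and invoked as a black box. What you have written is instead a sketch of Theorem~\ref{thm:polytope} (polytope volume via fast matrix multiplication), an entirely different statement.

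Even read as an argument for Theorem~\ref{thm:polytope}, your approach diverges from the paper's and has a real gap. The paper's fast-matmul implementation (Algorithm~\ref{alg:poly_ball_walk}) batches across \emph{time}, not across parallel chains: because the ball-walk step vectors $z_{1},\dots,z_{k}$ are drawn independently of the current point, one can sample all $k$ of them up front, compute $Y=AZ$ in a single rectangular multiplication, and then run the chain sequentially using the precomputed columns $Az_{i}$. This entirely avoids the obstacle you flag of reorganizing a sequential chain into lockstep parallel chains with propagated warmness. Your $O(m)$-per-step claim for the no-matmul bound is also unjustified as stated: computing $Az$ for a fresh random $z\in\R^{n}$ is $\Theta(mn)$, not $O(m)$, so ``maintaining $Ax_{t}$ incrementally'' does not help. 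The paper's $\widetilde{O}(mn^{4}/\varepsilon^{2})$ bound instead relies on the amortized constraint-checking walk of \cite{mangoubi2019faster} (Section~4.2 and Algorithm~\ref{alg:sublinear_ball_walk}), which uses an anti-concentration argument to check only $O(m/r)$ constraints per step in expectation.
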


The next lemma connects the KLS constant for isotropic distributions
to that of general distributions. We give a proof in Section \eqref{sec:KLS}.
\begin{lem}
\label{lem:KLS-anisotropic}Let $\psi_{n}$ be a bound on the KLS
constant for isotropic logconcave densities and $\psi_{q}$ be the
bound for a logconcave density $q$ (not necessarily isotropic) with
covariance matrix $A_{q}$, both in $\R^{n.}$. If $\psi_{n}^{2}=\Theta\left(n^{1/p}\right)$
for all $n$, then for any logconcave density $q$ with covariance
$A_q$, we have $\psi_{q}^{2}=O(\norm {A_q}_{p}\log^{O(1)}n)$.
\end{lem}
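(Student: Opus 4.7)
The plan is to extend the stochastic localization argument of Lee and Vempala \cite{LeeV17KLS}, which established the $p=2$ special case $\psi_q^2 \lesssim \norm{A}_F = \norm{A}_2$, so that the Schatten $p$-norm of the covariance plays the role of the Frobenius norm. I would proceed in three stages.

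First, I would set up Eldan's stochastic localization applied to $q$. This produces a family of tilted densities $q_t(x) \propto q(x)\exp(c_t\cdot x - t\norm{x}^2/2)$, where $c_t$ is a martingale chosen so that $\int q_t = 1$ for all $t$. If $\mu_t$ and $A_t$ denote the mean and covariance of $q_t$, Ito calculus yields $dA_t = -A_t^2\,dt + dM_t$, with $M_t$ a matrix-valued martingale whose quadratic variation involves the third-moment tensor of $q_t$. The Cheeger constant of $q$ is bounded, up to constants, by $1/\sqrt{T}$ for any admissible stopping time $T$ at which $q_T$ has become effectively concentrated on scale one. The problem therefore reduces to controlling the growth of the spectrum of $A_t$ along the trajectory.

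Second, I would generalize the Lee--Vempala analysis of $\tr(A_t^2)$ to the functional $\Phi_t \defeq \tr(A_t^p)$. Applying Ito's formula, the deterministic drift from $-A_t^2\,dt$ contributes a negative $-p\,\tr(A_t^{p+1})$ term, while the quadratic variation contributes, via the Poincar\'e inequality for $q_t$, a term bounded by a constant multiple of $\psi_{q_t}^2\cdot\tr(A_t^{p+1})$. A trivial affine reduction combined with the hypothesis $\psi_n^2 \lesssim n^{1/p}$ gives $\psi_{q_t}^2 \lesssim \norm{A_t}_{\op}\cdot n^{1/p} \leq \norm{A_t}_p\cdot n^{1/p}$. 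A Doob/BDG argument, using a stopping rule that exits the moment $\Phi_t$ first exceeds a polylogarithmic multiple of $\Phi_0 = \norm{A}_p^p$, then shows that with high probability this exit does not occur before time $T^{*} \asymp 1/(\norm{A}_p \,\mathrm{polylog}\, n)$. Choosing $T = T^{*}$ in the first stage yields $\psi_q^2 \lesssim \norm{A}_p\,\log^{O(1)} n$ as required.

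The hard part will be closing the self-referential bootstrap in the second stage: the drift estimate for $\Phi_t$ requires a bound on $\psi_{q_t}$, which by the affine reduction depends on $\norm{A_t}_\op$, itself a function of the very quantity $\Phi_t$ we are trying to control. Unlike the $p=2$ case, where the Ito expansion is particularly clean and the self-reference closes exactly, for general $p$ the mixed term $\tr(A_t^{p-2}(dA_t)^2)$ couples the entire spectrum in a nontrivial way and, for non-integer $p$, requires spectral calculus to expand. The bookkeeping of polylogarithmic factors in the supermartingale and stopping-time arguments is where the $\log^{O(1)} n$ loss in the conclusion accumulates, and verifying that these losses do not break the bootstrap closure is the principal technical challenge.
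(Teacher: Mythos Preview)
Your high-level strategy---Eldan stochastic localization together with the potential $\Phi_t=\tr(A_t^{p})$ and a stopping-time argument---is exactly what the paper does. Where you diverge, and where the proposal has a genuine gap, is in the drift estimate.

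You propose to bound the second-order It\^o term by $\psi_{q_t}^{2}\,\tr(A_t^{p+1})$ via the Poincar\'e inequality for $q_t$, and then to control $\psi_{q_t}^{2}$ by the affine reduction $\psi_{q_t}^{2}\lesssim n^{1/p}\norm{A_t}_{\op}$. Even granting the first step, the second introduces a hard $n^{1/p}$ factor that does not disappear. Running your stopping argument with drift $\lesssim n^{1/p}\norm{A_t}_{\op}\,\tr(A_t^{p+1})\le n^{1/p}\Phi_t^{1+2/p}$ yields $T\lesssim 1/(n^{1/p}\Phi_0^{2/p})$ and hence $\psi_q^{2}\lesssim n^{1/p}\norm A_{p}^{2}$, not $\norm A_{p}\,\mathrm{polylog}(n)$. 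Replacing the affine bound by the very conclusion you are proving (the natural bootstrap) still leaves you with drift $\lesssim \norm{A_t}_{p}\,\tr(A_t^{p+1})\le\Phi_t^{1+2/p}$, which again gives only $\psi_q^{2}\lesssim\norm A_{p}^{2}$. The self-reference you flag as ``the hard part'' is not merely bookkeeping; with these ingredients it does not close.

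The paper sidesteps the bootstrap entirely. Writing the second-order term as the $3$-tensor $T(A_t^{p-1},A_t,A_t)$ (over the \emph{isotropized} density), it first uses the rearrangement $T(A_t^{p-1},A_t,A_t)\le T(A_t^{p},I,A_t)$ and then invokes a black-box inequality of Chen (Lemma~\ref{lem:inq3}, from \cite{jiang2019generalized}): if $\psi_n\lesssim n^{\beta}$ for all $n$ then $T(A,B,I)\lesssim \log n\cdot\norm A_{1}\norm B_{1/(2\beta)}$ for every isotropic logconcave law. With $\beta=1/(2p)$ this gives drift $\lesssim \log n\cdot\Phi_t\cdot\norm{A_t}_{p}=\log n\cdot\Phi_t^{1+1/p}$, one full power of $\Phi_t^{1/p}$ better than what Poincar\'e-plus-affine can give, and the stopping argument then lands on $\psi_q^{2}\lesssim\norm A_{p}\log n$. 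The point is that the KLS hypothesis enters through this tensor inequality (which itself is proved by a dimension-reduction/induction on the isotropic bound), not through $\psi_{q_t}$. Your concern about spectral calculus for non-integer $p$ is real but minor; the paper handles it with a one-line convexity bound on second derivatives of $\tr f(X)$ (Lemma~\ref{lem:twice_matrix_func_upper}).
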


We will also use a fast sampler for well-rounded convex bodies that
does not require a warm start.
\begin{thm}[Sampling a well-rounded convex body \cite{CV2015,CousinsV18}]
\label{thm:sampling} There is an algorithm that, for any $\varepsilon>0,p>0$,
and any convex body $K$ in $\R^{n}$ that contains the unit ball
and has $\E_{K}(\|X\|^{2})=R^{2}$, with probability $1-\delta$,
generates random points from a density $\nu$ that is within total
variation distance $\varepsilon$ from the uniform distribution on
$K$. In the membership oracle model, the complexity of each random
point, including the first, is, 
\[
O\left((R^{2}n^{2}+n^{3})\log n\log^{2}\frac{n}{\varepsilon}\log\frac{1}{\delta}\right)=\widetilde{O}(n^{3}+R^{2}n^{2}).
\]
\end{thm}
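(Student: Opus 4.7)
The plan is to follow the Gaussian cooling framework of Cousins and Vempala. The central idea is to avoid sampling directly from the (possibly ill-conditioned) uniform distribution on $K$ and instead interpolate to it via a sequence of truncated Gaussians $f_i(x) \propto e^{-\|x\|^{2}/(2\sigma_i^{2})}\mathbf{1}_K(x)$, whose isoperimetric constants depend only on $\sigma_i$, not on the geometry of $K$. Concretely, I would take $\sigma_0^{2} = \Theta(1/n)$ so that $f_0$ is close in total variation to an unrestricted $\mathcal{N}(0,\sigma_0^{2}I)$ supported essentially inside $B(0,1)\subseteq K$, making it sampleable by rejection at $O(1)$ cost, and take $\sigma_T^{2} = \Theta(R^{2}\log(1/\varepsilon))$ so that $f_T$ is within total variation distance $\varepsilon$ of the uniform distribution on $K$ (by the tail bound of Lemma \ref{lem:volume_outside_ball} applied to $K$).

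For each phase I would run a Metropolis-filtered ball walk targeting $f_i$ with step size $\Theta(\sigma_i/\sqrt{n})$. The key isoperimetric input, specific to the Gaussian family, is that a Gaussian of variance $\sigma^{2}$ restricted to any convex body has Cheeger constant $\Omega(1/\sigma)$, equivalently KLS constant $O(\sigma)$. By the log-concave analogue of Theorem \ref{thm:warm_start}, mixing from an $M$-warm start then takes $\widetilde{O}(n^{2}\sigma_i^{2}\log M)$ ball-walk steps. The cooling schedule must keep $\chi^{2}(f_i,f_{i+1}) = O(1)$ so that a single sample from $f_i$ is an $O(1)$-warm start for $f_{i+1}$; a direct computation on ratios of Gaussian partition functions shows this forces a multiplicative $(1+\Theta(1/\sqrt{n}))$ increment in $\sigma_i^{2}$ once $\sigma_i^{2}\ge 1$ and permits coarse doubling below that threshold, giving $T = \widetilde{O}(\sqrt{n}+\log R)$ phases.

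Summing the per-phase cost $\widetilde{O}(n^{2}\sigma_i^{2})$ then telescopes: the small-$\sigma$ phases together contribute at most $\widetilde{O}(n^{2}T) = \widetilde{O}(n^{3})$, while the large-$\sigma$ phases form a geometric-like sum dominated by $\widetilde{O}(n^{2}\sigma_T^{2}) = \widetilde{O}(n^{2}R^{2})$, producing the claimed $\widetilde{O}(n^{3}+n^{2}R^{2})$ bound including the first sample. Subsequent samples reuse the final chain: restarting the ball walk for $f_T$ from the previous output---still $O(1)$-warm for $f_T$---produces nearly independent samples at the same per-sample cost, and a final bias-correction argument translates proximity to $f_T$ into proximity in total variation to the uniform density on $K$.

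The main obstacle is the combined analysis of (a) the Cheeger inequality for Gaussians restricted to arbitrary convex bodies, which needs a localization argument tailored to this special log-concave family rather than the general $O(\sqrt{\tr A})$ bound, and (b) the warm-start propagation via $\chi^{2}$ along the annealing schedule, which is what forces the delicate $(1+1/\sqrt{n})$ increment and explains why the framework cannot afford to place a uniform target at any intermediate phase. Everything else---the ball-walk conductance calculation, the Metropolis filter, and the concentration of the empirical ratios---is standard once these two ingredients are in place.
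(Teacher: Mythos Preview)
This theorem is not proved in the present paper; it appears in the Preliminaries as a black box quoted from \cite{CV2015,CousinsV18}, so there is no proof here to compare against. Your outline is, correctly, the Gaussian cooling strategy of those references: anneal through restricted Gaussians $f_i\propto e^{-\|x\|^{2}/(2\sigma_i^{2})}\mathbf{1}_K$, exploit that such a density has Cheeger constant $\Omega(1/\sigma_i)$ regardless of $K$, and propagate $O(1)$-warm starts along the schedule.

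One point in your accounting does not close as written. With a \emph{uniform} multiplicative increment $\sigma_{i+1}^{2}=\sigma_i^{2}\bigl(1+\Theta(1/\sqrt{n})\bigr)$ and per-phase mixing cost $\widetilde{O}(n^{2}\sigma_i^{2})$, the large-$\sigma$ phases are not ``geometric-like'' in the sense you need: consecutive $\sigma_i^{2}$ differ only by the factor $(1+1/\sqrt{n})$, so $\sum_i \sigma_i^{2}\approx \sqrt{n}\,\sigma_T^{2}$ rather than $O(\sigma_T^{2})$, and the total becomes $\widetilde{O}(n^{2.5}R^{2})$. For $R^{2}=\Theta(n)$ this is $\widetilde{O}(n^{3.5})$, missing the stated $\widetilde{O}(n^{3})$. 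The argument in \cite{CV2015} closes this $\sqrt{n}$ gap by a tighter coupling of the cooling schedule with the mixing bound: the increment is allowed to grow with $\sigma$ (so fewer phases are spent at large variance), and the warmth this coarser schedule induces is controlled via a log-concavity property of the partition function $a\mapsto\int_K e^{-\|x\|^{2}/(2a)}\,dx$. That refinement is exactly the ``delicate'' ingredient you defer in your final paragraph, and it is where the exponent is actually won; without it your plan yields only $\widetilde{O}(n^{3.5}+n^{2.5}R^{2})$.
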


Finally, we use the algorithm from \cite{CousinsV18,CV2015} for computing
the volume of a well-rounded convex body.
\begin{thm}[Volume of a well-rounded convex body \cite{CV2015,CousinsV18}]
\label{thm:well_rounded_volume} There is an algorithm that, for
any $\varepsilon>0,\delta>0$ and convex body $K$ in $\R^{n}$ that
contains the unit ball and has $\E_{K}(\|X\|^{2})=O(n)$, with probability
$1-\delta$, approximates the volume of $K$ to within relative error
$\varepsilon$ and has complexity 
\[
O\left(\frac{n^{3}}{\varepsilon^{2}}\cdot\log^{2}n\log^{2}\frac{1}{\varepsilon}\log^{2}\frac{n}{\varepsilon}\log\frac{1}{\delta}\right)=\widetilde{O}(n^{3}\epsilon^{-2})
\]
in the membership oracle model.
\end{thm}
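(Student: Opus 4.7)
The strategy is to implement the Gaussian Cooling scheme of \cite{CV2015,CousinsV18}. Introduce a schedule $\sigma_0^2 < \sigma_1^2 < \dots < \sigma_T^2$ and associated densities $f_i(x) = e^{-\|x\|^2/(2\sigma_i^2)}\mathbf{1}_K(x)$ with normalizers $Z_i = \int f_i$. Pick $\sigma_0^2 = \Theta(1/n)$ so that the unrestricted Gaussian of variance $\sigma_0^2$ concentrates inside the unit ball $B \subset K$, giving $Z_0 = (2\pi\sigma_0^2)^{n/2}(1 \pm \varepsilon/2)$; and $\sigma_T^2 = \Theta(n/\varepsilon)$ so that by Lemma~\ref{lem:volume_outside_ball} applied with $\tr A = O(n)$ (the well-rounded hypothesis), $Z_T = (1 \pm \varepsilon/2)\vol(K)$. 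Telescoping,
\[
\vol(K) \;\approx\; Z_0 \cdot \prod_{i=0}^{T-1} \frac{Z_{i+1}}{Z_i},
\]
so the task reduces to estimating each ratio.

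Each ratio can be written as
\[
\frac{Z_{i+1}}{Z_i} \;=\; \E_{x \sim f_i/Z_i}\!\left[g_i(x)\right], \qquad g_i(x) \;=\; \exp\!\left(\tfrac{\|x\|^2}{2}\!\left(\tfrac{1}{\sigma_i^2} - \tfrac{1}{\sigma_{i+1}^2}\right)\right),
\]
which I would estimate by sample averaging. Choose the schedule $\sigma_{i+1}^2 = (1 + \alpha_i)\sigma_i^2$ with $\alpha_i = \Theta(1/n)$ while $\sigma_i^2 \lesssim 1$ and $\alpha_i = \Theta(\sigma_i^2/n)$ once $\sigma_i^2 \gtrsim 1$. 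A direct second-moment calculation against a logconcave density, combined with $\E_K\|x\|^2 = O(n)$, gives $\Var(g_i) \le O((\E g_i)^2)$ per phase. Integrating the schedule from $\Theta(1/n)$ up to $\Theta(n/\varepsilon)$ yields $T = \widetilde{O}(\sqrt{n})$ phases, so by a standard variance-of-product argument $\widetilde{O}(\sqrt{n}/\varepsilon^2)$ samples per phase suffice to estimate the full product within factor $(1 \pm \varepsilon)$.

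For sampling from $f_i/Z_i$, run the ball walk with step size $\Theta(1/\sqrt{n})$. The key geometric input is that the KLS constant of a Gaussian density restricted to any convex body is $\psi = O(1)$, so Theorem~\ref{thm:warm_start} (applied with $\psi = O(1)$) gives mixing in $\widetilde{O}(n^2)$ steps from an $O(1)$-warm start. A sample drawn in phase $i$ is $O(1)$-warm for phase $i+1$ because the corresponding $\chi^2$-distance reduces to the same second-moment quantity $\E_{f_i/Z_i}(g_i^2)/(\E g_i)^2 - 1$ already controlled above. The initial sample at phase $0$ is essentially free since $f_0$ is, up to $\varepsilon$-mass, an unrestricted Gaussian. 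Multiplying phases $\times$ samples/phase $\times$ steps/sample,
\[
\widetilde{O}(\sqrt{n}) \cdot \widetilde{O}(\sqrt{n}/\varepsilon^2) \cdot \widetilde{O}(n^2) \;=\; \widetilde{O}(n^3/\varepsilon^2).
\]

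The main obstacle is the coupled analysis: the single family of schedule parameters $\{\alpha_i\}$ must simultaneously control (i) the phase count, (ii) the relative variance of every ratio estimator, and (iii) the $\chi^2$-warmness passed from one phase to the next --- all three quantities reduce to bounding the same integral $\int f_{i+1}^2/f_i$, so one has to pick $\alpha_i$ to balance the competing demands exactly as above. Everything else --- accumulating a total-variation budget of $\varepsilon/T$ per sample across $T$ phases, boosting confidence from constant to $1-\delta$ by repetition and median, and handling the truncation error at $\sigma_T$ --- is routine given Theorem~\ref{thm:warm_start} and Lemma~\ref{lem:volume_outside_ball}.
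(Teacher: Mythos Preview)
The paper does not prove this theorem. It is stated in the Preliminaries as a black-box import from \cite{CV2015,CousinsV18}, and the present paper only \emph{uses} it (after first rounding $K$ via the new $\mathsf{Isotropize}$ procedure). So there is no ``paper's own proof'' to compare against; your proposal is really a sketch of the Cousins--Vempala argument itself, and at that level the architecture you describe --- telescoping restricted Gaussians, ratio estimation by sampling, warm-start propagation, $\widetilde O(n^3/\varepsilon^2)$ total --- is exactly the Gaussian Cooling scheme.

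That said, as a sketch of \cite{CV2015} your accounting has a gap. You assert uniformly across phases that ``the KLS constant of a Gaussian density restricted to any convex body is $\psi=O(1)$'' and then invoke Theorem~\ref{thm:warm_start} to get $\widetilde O(n^2)$ mixing per sample. Two things go wrong here. First, Theorem~\ref{thm:warm_start} as stated in this paper is for the \emph{uniform} stationary density on $K$, not for a restricted Gaussian target; you need the analogous conductance bound for the Metropolized ball walk with Gaussian target, which is what \cite{CV2015} actually proves. Second, and more substantively, the isoperimetric constant of $N(0,\sigma^2 I)$ restricted to a convex set scales like $\sigma$, not $O(1)$; equivalently the mixing time from a warm start is $\widetilde O(\max(1,\sigma^2)\,n^2)$, which is much worse than $\widetilde O(n^2)$ once $\sigma^2$ is large (and you take $\sigma_T^2 \asymp n/\varepsilon$). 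The clean product $\sqrt{n}\cdot(\sqrt{n}/\varepsilon^2)\cdot n^2$ is therefore not how the bound is obtained: in \cite{CV2015} the per-phase mixing time, the per-phase sample count, and the cooling rate $\alpha_i$ all vary with $\sigma_i$, and it is the phase-by-phase balancing (faster cooling and fewer samples precisely when mixing is slower) that sums to $\widetilde O(n^3/\varepsilon^2)$. Your write-up identifies this coupling as ``the main obstacle'' but then resolves it with a uniform bound that does not hold.
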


\paragraph{Computational Model.}

We use the most general membership oracle model for convex bodies,
which comes with bounds $r,R$ guaranteeing that the input convex
body $K$ has inner radius $r$ and outer radius $R$ and allows membership
queries to $K$. As established in the literature on volume computation,
the number of arithmetic operations is bounded by $O(n^{2})$ per
oracle query, and all arithmetic operations can be done using only
a polylogarithmic number of additional bits. We mention a few familiar
technical difficulties whose solution is well-documented in the literature.
First, the sampling algorithms produce points from \emph{approximately
}the correct distribution. Second, samples produced in a sequence
are not completely independent. Third, all our algorithms are randomized
(they have to be in the oracle model). For the first two issues, we
refer to \cite{lovasz2006simulated}; briefly, the approximate distribution
is handled by a trick called ``divine intervention'', where one
can view the sampled distribution as being the correct one with large
probability and an incorrect one with a small failure probability.
Since the complexity dependence on proximity to the target is logarithmic,
this leads to a controllable overall failure probability. The near-independence
is handled as follows: first, we maintain parallel independent threads
of samples, which start as completely random points (e.g., from a
Gaussian) and remain independent throughout. For estimates computed
in different sequential phases (e.g., ratios of integrals, or affine
transformations), the degree of dependence is explicitly bounded using
the tools developed in \cite{lovasz2006simulated}. For the failure
probability, we note that the overhead for failure probability $\delta$
is $O(\log(1/\delta))$. In the rest of the paper, for convenience,
we say WHP to mean with probability $1-\delta$ incurring only $O(\log(1/\delta))$
overhead in the complexity.

\section{Algorithm and Analysis}

The algorithm considers a sequence of balls of radii that grow by a factor of $(1+\frac{1}{\sqrt{n}})$, and
in each iteration makes the intersection of the current ball with
the convex body nearly isotropic.

\begin{algorithm}[H]
\caption{Iterative Isotropization}
\label{alg:iterative_rounding_general}

\begin{algorithmic}[1]

\Procedure{\textsc{$\mathsf{IterativeIsotropization}$}}{$K\subset\Rn$,
$r>0$, $R>0$}

\State \textbf{Assumption:} $B(0,r)\subseteq K\subseteq B(0,R)$

\State Define $K_{t}=K\cap B(0,t).$

\State $t\leftarrow r,x\leftarrow0,T\leftarrow\frac{1}{r}I$\Comment{Current
slice $t$ and normalization}

\While{$t<R$}

\State \textsc{$(x,T)\leftarrow\mathsf{Isotropize}(K_{t}-x,T)$}

\State $t\rightarrow (1+\frac{1}{\sqrt{n}})t$

\EndWhile

\EndProcedure

\end{algorithmic}
\end{algorithm}

The main subroutine, Isotropize, is a procedure to compute an isotropic
transformation of a well-rounded body. Given a convex body $K$ and
a transformation $T$ such that $TK$ is well-rounded, Isotropize
outputs a point $x$ and a transformation $T'$ such that $T'\left(K-x\right)$
is $2$-isotropic. We address this in the next section, then come
back to the general analysis.

\subsection{Inner loop: Isotropic transformation of a well-rounded body}

We begin with the algorithm. In each iteration, the inner ball radius
grows by a constant factor, while the $p$-norm of the covariance
grows much more slowly. As a result, we can afford to sample more
points with each iteration and thereby get progressively better approximations
to the isotropy: in the first step, the number of samples is only
$\widetilde{O}(1)$ while by the end the number of samples is $\widetilde{O}(n)$.

\begin{algorithm}[!th]
\caption{Isotropize}
\label{alg:iterative_rounding}

\begin{algorithmic}[1]

\Procedure{$\mathsf{Isotropize}$}{$K\subset\Rn$, $T_{0}\in\R^{n\times n}$}

\State \textbf{Assumption:} $B(0,\frac{1}{4})\subset T_{0}K$ and
$\E_{x\sim T_{0}K}\|x\|^{2}=\widetilde O(n)$.

\State Use Gaussian cooling to sample a point $x\in T_{0}K$.

\State $r\leftarrow\frac{1}{4},T\leftarrow T_{0}$.\Comment{We maintain
$\{x:\|x\|_{2}\leq r\}\subseteq TK.$}

\While{$2^{10}r^{2}\log^2 n\leq n$}

\State $k\leftarrow c\cdot r^{2}\log^{7}n$ for a constant $c$.

\State Sample points $x_{1},\cdots,x_{2k}$ from $TK$ using the
ball walk with initial point $x$.

\State $\widehat{A}\leftarrow\frac{1}{k}\sum_{i=1}^{k}(x_{i}-\widehat{x})(x_{i}-\widehat{x})^{\top}$
where $\widehat{x}=\frac{1}{k}\sum_{i=1}^{k}x_{i+k}$. \Comment{Estimate
covariance.}

\State $M\leftarrow I_{n}+P_{\widehat{A}}$ where $P_{\widehat{A}}$
is the projection to the subspace spanned by eigenvectors of $\widehat{A}$
with eigenvalue at most $\lambda=n$.

\State $T\leftarrow MT$, $r\leftarrow2(1-\frac{1}{\log n})\cdot r$.

\EndWhile

\State Sample $O(n\log^2 n)$ points to compute the mean $\widehat{x}$ and
empirical covariance matrix $\widehat{A}$ such that $\widehat{A}^{-\frac{1}{2}}(K-\widehat{x})$
is $2$-isotropic.

\State \textbf{Return} $(\widehat{x},\widehat{A}^{-\frac{1}{2}})$.

\EndProcedure

\end{algorithmic}
\end{algorithm}

\begin{figure}
\centering{}\includegraphics[scale=0.4]{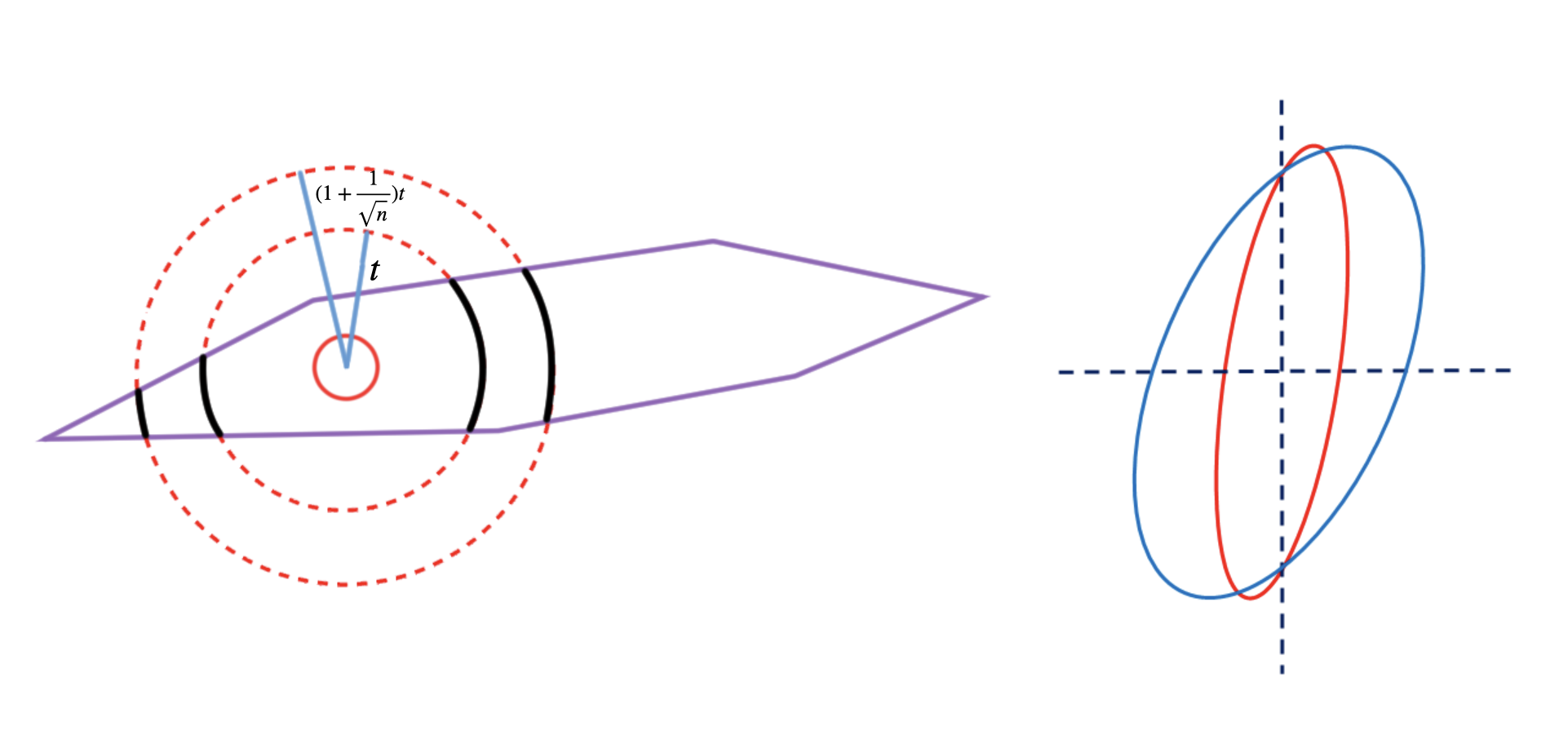}
\caption{(a) Algorithm $\mathsf{IterativeIsotropization}$ uses balls of growing
radii (b) Algorithm $\mathsf{Isotropize}$ scales up the estimated
``small eigenvalue'' subspace in each iteration.}
\end{figure}

First, we show that the $p$-th norm of the covariance matrix $\|A\|_{p}$
remains bounded. Although we only use an eigenvalue threshold of $\lambda=n$
in the algorithm, we will prove a slightly more general statement
below, assuming the eigenvalue threshold is $\lambda=n/r^{\alpha}$.
\begin{lem}
\label{lem:basic_property_isotropization} Let $A_{j}$ be the covariance
matrix of $TK$ and $r_{j}$ be the inner radius at the $j$-th iteration
of algorithm $\mathsf{Isotropize}$. For $\alpha\in[0,2/(p-1)]$,
we have
\begin{itemize}
\item the number of iterations is $O(\log n)$.
\item $\tr A_{j}=O(r_{j}^{2}\tr A_{0})$ for all j.
\item $\|A_{j}\|_{p}\le24nr_{j}^{(\frac{2+\alpha}{p}-\alpha)}\log^{1+\frac{2}{p}} n$ for
all j.
\end{itemize}
\end{lem}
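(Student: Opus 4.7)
The plan is to induct on the iteration index $j$, carrying all three items as joint invariants; the base case follows from $r_0 = 1/4$, $\tr A_0 = O(n)$, and $\|A_0\|_p \le \tr A_0$ for PSD $A_0$ and $p\ge 1$.

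First I would handle the number of iterations and the trace bound. Each step sets $r_{j+1} = 2(1-1/\log n)\, r_j$, and using $(1-1/\log n)^{\log n} = \Theta(1)$ we get $r_j = \Theta(2^j r_0)$; the while-loop's exit condition $r^2 \gtrsim n/\log^2 n$ is then reached in $O(\log n)$ iterations. For the trace, since $M_j = I + P_{\widehat{A}_j} \preceq 2I$, one has $\tr A_{j+1} = \tr(M_j A_j M_j^\top) \le 4\, \tr A_j$; iterating and using $r_j^2 = \Theta(4^j)$ gives $\tr A_j = O(r_j^2 \tr A_0)$.

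The $p$-norm bound is the main content. Let $P_s := P_{\widehat{A}_j}$ denote the projection onto the eigenspace of $\widehat{A}_j$ of eigenvalue at most $\lambda = n/r_j^\alpha$, and $P_l := I - P_s$. Then $M_j = P_l + 2P_s$, and
\[
A_{j+1} = P_l A_j P_l + 4 P_s A_j P_s + 2(P_l A_j P_s + P_s A_j P_l).
\]
The strategy has three parts. (i) Use the sample-concentration Lemma \ref{lem:chernoff_app} to show that $\widehat{A}_j$ faithfully separates the small-$A_j$-eigenvalue subspace from the large one, so that $P_s A_j P_s \preceq O(\lambda)\, P_s$ and the complementary block $P_l A_j P_l$ has rank at most $\tr A_j/\lambda = O(r_j^{2+\alpha})$. (ii) Reduce to the cleaner identity $\|A_{j+1}\|_p = \|M_j A_j^{1/2}\|_{2p}^2$ and split along $P_l$ and $P_s$ so that only the two Schatten pieces $\|P_s A_j P_s\|_p$ and $\|P_l A_j P_l\|_p$ need to be bounded: the first by $\lambda$ times a rank factor $n^{1/p}$, the second by interpolating the inductive operator bound $\|A_j\|_\op \le \|A_j\|_p$ against the rank estimate $O(r_j^{2+\alpha})$ and the trace bound $O(r_j^2 n)$. (iii) Sum the two contributions, use the slack $r_{j+1} \le 2 r_j$, and verify that the target $24 n r_{j+1}^{(2+\alpha)/p - \alpha}\log n$ absorbs both precisely because $\alpha \le 2/(p-1)$.

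The hard part will be step (iii): the induction must close with a fixed constant rather than an exponentially growing one. The naive bound $\|A_{j+1}\|_p \le 4\|A_j\|_p$ coming from $M_j \preceq 2I$ is far too weak, so the analysis must exploit that only the small-eigenvalue subspace of $\widehat{A}_j$ is scaled, making the $p$-norm increase essentially additive in $\lambda n^{1/p}$ rather than multiplicative. The constraint $\alpha \le 2/(p-1)$ is what aligns the $r_j$-dependence of the small- and large-subspace contributions after the update $r_j \mapsto 2r_j$, and the $\log n$ factor in the target bound is what absorbs both the $O(\log n)$-iteration accumulation and the polylog losses incurred by the finite-sample concentration of $\widehat{A}_j$.
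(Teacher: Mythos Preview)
Your iteration-count and trace arguments are fine and match the paper. The gap is in the $p$-norm step.

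The bound $\|P_s A_j P_s\|_p \le \lambda \cdot n^{1/p}$ is too crude: for $\alpha=0$ it yields a per-step contribution of order $n^{1+1/p}$ independent of $r_j$, so you can only conclude $\|A_j\|_p = \widetilde{O}(n^{1+1/p})$ uniformly in $j$. This fails the lemma at early iterations (where the target $24\,n\,r_j^{2/p}\log n$ is $O(n\log n)$) and, more importantly, discards precisely the $r_j$-dependence that the downstream complexity analysis relies on. The fix is to use the trace rather than the full rank: since $P_s A_j P_s \preceq O(\lambda) I$ and $\tr(P_s A_j P_s)\le \tr A_j = O(r_j^2 n)$, one gets $\|P_s A_j P_s\|_p^p \le O(\lambda)^{p-1}\tr A_j$, i.e.\ $\|P_s A_j P_s\|_p \le O(\lambda)\,(\tr A_j/\lambda)^{1/p} = O(n\,r_j^{(2+\alpha)/p-\alpha})$. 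Your proposed interpolation for the $P_l$ block is also misdirected: the pinching inequality $\|P_l A_j P_l\|_p \le \|A_j\|_p$ is all that is needed there, and any rank-times-operator-norm bound reintroduces a multiplicative dependence on the inductive hypothesis that prevents the induction from closing with a fixed constant.

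The paper sidesteps the square-root recursion altogether. Since $M_j A_{j-1} M_j$ and $A_{j-1}^{1/2} M_j^2 A_{j-1}^{1/2}$ have the same eigenvalues and $M_j^2 = I + 3P_s$, one gets the purely additive bound
\[
\|A_j\|_p \;\le\; \|A_{j-1}\|_p \;+\; 3\,\|A_{j-1}^{1/2} P_s A_{j-1}^{1/2}\|_p .
\]
The second term is handled by the resolvent trick $P_s \preceq 2\lambda(\widehat{A}+\lambda I)^{-1}\preceq 4\lambda(A_{j-1}+\lambda I)^{-1}$ (the concentration lemma transfers $\widehat{A}$ to $A_{j-1}$), so that
\[
\|A_{j-1}^{1/2} P_s A_{j-1}^{1/2}\|_p^p \;\le\; (4\lambda)^p \sum_i \Bigl(\tfrac{\lambda_i}{\lambda_i+\lambda}\Bigr)^p \;=\; O\!\left(\lambda^p \cdot \tfrac{\tr A_{j-1}}{\lambda}\right).
\]
This gives a per-step increment $O(n\,r_j^{(2+\alpha)/p-\alpha})$ that depends only on the trace bound, never on the inductive $\|A_{j-1}\|_p$; summing $O(\log n)$ such increments yields the lemma without any constant-chasing. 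The hypothesis $\alpha\le 2/(p-1)$ enters only to make the exponent of $r_j$ nonnegative, so that the increments are nondecreasing and the sum is controlled by $\log n$ times the last one.
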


\begin{proof}
In each iteration, the algorithm increases $r_{j}$ by a factor of
$2(1-\frac{1}{\log n})$. Note that the algorithm starts with $r_{0}=\frac{1}{4}$
and ends before $r_{j}\log n\leq\sqrt{n}$. Hence, it takes less than
$O(\log n)$ iterations.

To bound $\tr A_{j}$, we let $P_{j}$ be the projection matrix at
$j$-th iteration Then the transformation at $j$-th iteration is
$M_{j}=I+P_{j}$. We have
\begin{equation}
A_{j}=M_{j}A_{j-1}M_{j}\label{eq:Ak_update}
\end{equation}
Since $P_{j}$ is a projection matrix, we have
\begin{align*}
\tr A_{j} & =\tr\left(A_{j-1}^{\frac{1}{2}}M_{j}^{2}A_{j-1}^{\frac{1}{2}}\right)=\tr\left(A_{j-1}^{1/2}(I+3P_{j})A_{j-1}^{1/2}\right)\le4\tr A_{j-1}.
\end{align*}
Hence, $\tr A_{j}$ increases by a factor of at most $4$. Moreover,
since $r_{j}$ increases by a factor of $2(1-\frac{1}{\log n})$ we
have that $\frac{\tr A_{j}}{r_{j}^{2}}$ increases by a factor of
$(1-\frac{1}{\log n})^{-2}$ per iteration. Since there are $O(\log n)$
iterations, it increases by at most $O(1)$ in total. Hence, we have
$\tr A_{j}=O(r_{j}^{2}\tr A_{0})=O(r_{j}^{2}n\log^2 n)$ for all $j$.

To bound $\|A_{j}\|_{p}$, we note that initially $\|A_{0}\|_{p}\leq\tr A_{0}\leq4n$.
By \eqref{eq:Ak_update}, we have
\begin{align*}
\|A_{j}\|_{p} & =\|M_{j}A_{j-1}M_{j}\|_{p}\\
 & =\|A_{j-1}^{1/2}M_{j}^{2}A_{j-1}^{1/2}\|_{p}\\
 & =\|A_{j-1}^{1/2}(I+3P_{j})A_{j-1}^{1/2}\|_{p}.
\end{align*}
Hence, we have
\[
\|A_{j}\|_{p}\leq\|A_{j-1}\|_{p}+3\|A_{j-1}^{1/2}P_{j}A_{j-1}^{1/2}\|_{p}.
\]
Let $\lambda\defeq\frac{n}{r_{j}^{\alpha}}$. Since $P_{j}$ is the
projection of the eigenspace of $\widehat{A}$ with eigenvalues less
than $\lambda$, we have that
\[
P_{j}\preceq2\lambda(\widehat{A}+\lambda\cdot I)^{-1}.
\]
(To see this, we note that both the sides have the same eigenspace
and hence it follows from $1_{x\leq\lambda}\leq\frac{2\lambda}{x+\lambda}$.)
By Lemma \ref{lem:chernoff_app} with $\epsilon=\frac{1}{2}$, we
have that $\frac{1}{2}A_{j}-O(\frac{\log^{3}n\cdot\tr A_{j}}{k})\cdot I\preceq\widehat{A}.$
Using $\tr A_{j}=O(r_{j}^{2}n\log^2 n)$ and $k=\Omega(r_{j}^{2+\alpha}\log^{5}n)$
samples, we have 
\[
A_{j-1}\preceq2\widehat{A}+\frac{n}{r_{j}^{\alpha}}\cdot I\preceq2\widehat{A}+\lambda\cdot I.
\]
Hence, we have $P_{j}\preceq4\lambda(A_{j-1}+\lambda\cdot I)^{-1}$
and 
\[
\|A_{j}\|_{p}\leq\|A_{j-1}\|_{p}+12\lambda\|A_{j-1}^{1/2}(A_{j-1}+\lambda\cdot I)^{-1}A_{j-1}^{1/2}\|_{p}.
\]
Let $\lambda_{i}$ be the eigenvalues of $A_{j-1}$. Then, 
\[
\|A_{j-1}^{1/2}(A_{j-1}+\lambda\cdot I)^{-1}A_{j-1}^{1/2}\|_{p}^{p}=\sum_{i=1}^{n}\left(\frac{\lambda_{i}}{\lambda_{i}+\lambda}\right)^{p}\leq\frac{1}{\lambda^{p}}\sum_{\lambda_{i}\leq\lambda}\lambda_{i}^{p}+\sum_{\lambda_{i}\geq\lambda}1.
\]
The first term $\sum_{\lambda_{i}\leq\lambda}\lambda_{i}^{p}$ is
maximized when the small eigenvalues are exactly $\lambda_{i}=\lambda$.
In this case, there are 
\[
\frac{\tr A_{j-1}}{\lambda}\leq\frac{O(r_{j-1}^{2}n\log^2 n)}{\lambda}
\]
 small eigenvalues, i.e., of value at most $\lambda$. Hence, we have
\[
\frac{1}{\lambda^{p}}\sum_{\lambda_{i}\leq\lambda}\lambda_{i}^{p}\leq\frac{O(r_{j-1}^{2}n\log^2 n)}{\lambda}.
\]
For the second term, 
\[
\sum_{\lambda_{i}\geq\lambda}1\leq\frac{\tr A_{j-1}}{\lambda}\leq\frac{O(r_{j-1}^{2}n\log^2 n)}{\lambda}.
\]
Hence, $\|A_{j}\|_{p}$ is increased by
\[
O(\lambda)\cdot\left(\frac{r_{j-1}^{2}n\log^2 n}{\lambda}\right)^{1/p}=O\left(\frac{n}{r_{j}^{\alpha}}\right)\cdot\left(r_{j-1}^{2}r_{j}^{\alpha}\log^2 n\right)^{1/p}\le O\left(nr_j^{\frac{2+\alpha}{p}-\alpha}\log^{2/p} n\right).
\]
Since there are $O(\log n)$ iterations, $\|A_{j}\|_{p}$ is at most
$O(nr_j^{(\frac{2+\alpha}{p}-\alpha)}\log^{1+\frac{2}{p}} n)$.
\end{proof}
Next, we show that the inner radius increases by almost a factor of
$2$ every iteration, again for a general choice of eigenvalue threshold
$\lambda=n/r^{\alpha}.$
\begin{lem}
\label{lem:innerball} Algorithm $\mathsf{Isotropize}$ maintains
the invariant $\{\|x\|_{2}\leq r\}\subseteq TK$.
\end{lem}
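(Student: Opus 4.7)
The plan is to proceed by induction on the iteration counter. The base case $r = 1/4$, $T = T_0$ is the stated hypothesis of $\mathsf{Isotropize}$, so I would assume $B(0, r) \subseteq TK$ at the start of an iteration and establish $B(0, r') \subseteq (MT)K$ with $r' = 2(1 - 1/\log n) r$ and $M = I + P$, where $P$ is the orthogonal projection onto the small-eigenvalue subspace of $\widehat{A}$. Because $P^2 = P$ one has $M^{-1} = I - P/2$, so the desired containment is equivalent to $M^{-1}(B(0, r')) \subseteq K'$, where $K' \defeq TK$. A direct computation shows that $M^{-1}(B(0, r'))$ is the ellipsoid $E \defeq \{z_P + z_Q : 4\|z_P\|^2 + \|z_Q\|^2 \leq (r')^2\}$ with $z_P \in \mathrm{range}(P)$ and $z_Q \in \mathrm{range}(I-P)$; its semi-axis is $r'/2 = (1 - 1/\log n) r$ in the small-eigenvalue subspace and $r'$ in the large-eigenvalue subspace.

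For each $z = z_P + z_Q \in E$ I would write $z$ as a convex combination $z = (1 - \beta)\bigl(z_P/(1 - \beta)\bigr) + \beta\bigl(z_Q/\beta\bigr)$ for a suitable $\beta \in (0,1)$. The first endpoint lies in $B(0, r) \cap \mathrm{range}(P) \subseteq K'$ provided $1 - \beta \geq \|z_P\|/r$; the second endpoint sits on the ray from the origin in a direction $\hat{v} \in \mathrm{range}(I - P)$ and belongs to $K'$ by convexity provided $K'$ extends at least $\|z_Q\|/\beta$ along that ray. Thus the existence of a feasible $\beta$ reduces to the inequality $\|z_P\|/r + \|z_Q\|/L \leq 1$, where $L$ is a uniform lower bound on the extent of $K'$ in every direction of $\mathrm{range}(I - P)$.

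The key quantitative step is showing $L \gtrsim \sqrt{\lambda} = \sqrt n$. Using Lemma \ref{lem:chernoff_app}, $\widehat{A}$ is close to the true covariance of $K'$, and any unit $\hat v \in \mathrm{range}(I-P)$ carries variance $\Omega(\lambda)$; logconcavity of the marginal of the uniform distribution on $K'$ in direction $\hat v$ then forces its support to have length $\gtrsim \sqrt{\lambda}$, giving a one-sided extent of this magnitude along $\hat v$ or $-\hat v$. Coupled with the standing inclusion $B(0, r) \subseteq K'$, which gives extent at least $r$ on the opposite side, this provides a two-sided extent bound $L \gtrsim \sqrt{\lambda}$ uniform over directions.

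Finally, a Lagrange-multiplier computation over $E$ yields $\max_{z \in E}\bigl(\|z_P\|/r + \|z_Q\|/L\bigr) = r'\sqrt{1/(4r^2) + 1/L^2} = \sqrt{(1 - 1/\log n)^2 + (r'/L)^2}$. The while-loop condition $2^{10} r^2 \log^2 n \leq n$ gives $r \leq \sqrt{n}/(32 \log n)$, so $r'/L = O(1/\sqrt{\log n})$ and the maximum is strictly less than $1$ for $n$ sufficiently large, producing a feasible $\beta$ for every $z \in E$. The main obstacle I expect is securing the \emph{two-sided} extent bound $L \gtrsim \sqrt n$: because $K'$ need not be centered at the origin, converting a covariance lower bound into extent on both sides of $\hat v$ requires combining the one-sided logconcave support estimate with the inner ball $B(0, r) \subseteq K'$, and the constants must be tracked carefully enough for the Lagrange bound to close with room to spare against the $(1 - 1/\log n)$ slack.
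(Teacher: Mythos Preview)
Your overall strategy---induction, pulling back by $M^{-1}$, and writing each point of the resulting ellipsoid as a convex combination of a ``small-eigenvalue'' piece lying in the inner ball and a ``large-eigenvalue'' piece handled via the covariance---is essentially the same plan the paper follows. The gap is exactly where you flag it, and your proposed fix does not close it. Knowing that the marginal of $K'=TK$ in a direction $\hat v\in\mathrm{range}(I-P)$ has variance $\Omega(n)$ only tells you that the support interval $[a,b]$ satisfies $b-a\gtrsim\sqrt n$; combined with $B(0,r)\subseteq K'$ you get $a\le -r$ and $b\ge r$, but nothing forbids $b=r$ and $a\approx-\sqrt n$. You need the extent of $K'$ from the origin in the \emph{specific} direction $z_Q/\|z_Q\|$, and since that direction ranges over all of $\mathrm{range}(I-P)$, the uniform bound you actually obtain is only $L\ge r$, which is far too small for your Lagrange inequality to close against the $(1-1/\log n)$ slack. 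Without independent control of the centroid of $K'$ (which the inner loop never recenters), a one-dimensional variance-to-extent argument cannot deliver a two-sided bound of order $\sqrt n$.

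The paper avoids this asymmetry by working in $T'K=M(TK)$ and using, for the large-eigenvalue piece, the full origin-symmetric covariance ellipsoid $\Omega_2=\{x:x^\top A^{-1}x\le1\}\subset T'K$ rather than a ray; symmetry makes the ``which side'' question disappear. The cost is that the decomposition must then be taken with respect to the projection $P_A$ onto the small-eigenvalue subspace of the \emph{true} covariance $A$ (at the lower threshold $\lambda/(2^{8}\log^{2}n)$), and one must separately argue, via Lemma~\ref{lem:chernoff_app} and the inequality $P_{\widehat A}\succeq\beta\lambda(\beta\lambda+\widehat A)^{-1}-\tfrac{\beta}{1+\beta}I$, that $P_{\widehat A}$ is nearly the identity on $\mathrm{range}(P_A)$ in order to place $\tfrac{1}{t}P_A x$ inside $\Omega_1=M(B(0,r))$. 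In your scheme the small-eigenvalue piece already lives in $\mathrm{range}(P_{\widehat A})$ and drops into $B(0,r)$ for free; the difficulty has merely been pushed to the other component, where it does not resolve as you describe.
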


\begin{proof}
First, we describe the idea of the proof. When the algorithm modifies
the covariance matrix $A$, it estimates the subspace of directions
with variance less than $n/r^{\alpha}$ and doubles them. Lemma \ref{lem:chernoff_app} with $\epsilon=1/2$
shows that 
\[
\widehat{A}=(1\pm\frac{1}{2})A\pm\frac{n}{4r^{\alpha}}\cdot I.
\]
This means, roughly speaking, that any direction that was not doubled
satisfies 
\[
\lambda_{i}\ge\sigma^{2}=\frac{n}{2r^{\alpha}}.
\]
So the current body contains an ellipsoid whose axis lengths along
the non-doubled directions (call this subspace $V$) are at least
$\sigma$ and at least $r$ in all directions (the body contains a
ball of radius $r$). Now the body is stretched by a factor of $2$
in the subspace $V^{\perp}$. We will argue that the resulting body
contains a ball of radius nearly $2r$. To see this, we argue that
any point in a ball of this radius is in the convex hull of the ball
of radius $\sigma$ in the subspace $V$ and a ball of radius $2r$
in the subspace $V^{\perp}.$

Now we proceed to the formal proof by induction. Initially, we have
$\{\|x\|_{2}\leq r\}\subset TK$ by the assumption on $K$. It suffices
to show this is maintained after each iteration. Let $T$ be the old
transformation and $T'$ be the new transformation during one iteration.
Let $r$ be the old radius during that iteration. By the invariant,
we know that $\{\|x\|_{2}\leq r\}\subset TK$ and hence
\[
\{\|M^{-1}x\|_{2}\leq r\}\subset T'K
\]
with $M=I+P_{\widehat{A}}$ (this only scales up the body by a factor
of $2$ in some directions). Let $A$ be the covariance matrix of
$TK$. Then $(I+P_{\widehat{A}})A(I+P_{\widehat{A}})$ is the covariance
matrix of $T'K$. Our goal is to show that $T'K$ contains a ball
of radius $2(1-\frac{1}{\log n})r$. From above, we have
\[
\{x^{\top}A^{-1}x\leq1\}\subset\{x^{\top}((I+P_{\widehat{A}})A(I+P_{\widehat{A}}))^{-1}x\leq1\}\subset T'K.
\]

Hence, we know that $\text{Conv}(\Omega_{1}\cup\Omega_{2})\subset T'K$
where 
\[
\Omega_{1}=\{x^{\top}(\frac{1}{r^{2}}I-\frac{3}{4r^{2}}P_{\widehat{A}})x\leq1\}\mbox{ and }\Omega_{2}=\{x^{\top}A^{-1}x\leq1\}.
\]
where for $\Omega_{1}$ we used that $(I+P)^{-2}=I-\frac{3}{4}P$
for any projection matrix $P$.

To prove that $T'K$ contains a larger ball, we take $x$ and write
it as a convex combination of $x_{1}$ and $x_{2}$ where $x_{1}\in\Omega_{1}$
and $x_{2}\in\Omega_{2}$. To do this, we define $P_{A}$ be the projection
to the subspace spans by eigenvectors of $A$ with eigenvalues at
most $\lambda/2^{8}\log^{2}n$ with 
\[
\lambda\defeq\frac{n}{r^{\alpha}}.
\]

For any $x$ with $\|x\|_{2}\leq2(1-\frac{1}{\log n})r$, we will
show that $x\in\text{Conv}(\Omega_{1}\cup\Omega_{2})$. To do this,
we let $t^2\defeq\frac{\|P_{A}x\|_{2}^{2}}{\|x\|_{2}^{2}}$ and write
$x$ as the convex combination,
\begin{equation}
x=t\left(\frac{1}{t}P_{A}x\right)+(1-t)\left(\frac{1}{1-t}(I-P_{A})x\right).\label{eq:x_decomp}
\end{equation}
For the second term in \eqref{eq:x_decomp}, we have
\[
\left(\frac{1}{1-t}(I-P_{A})x\right)^{\top}A^{-1}\left(\frac{1}{1-t}(I-P_{A})x\right)\leq\frac{\|(I-P_{A})x\|_{2}^{2}}{(1-t)^{2}\lambda/(2^{8}\log^{2}n)}=\frac{2^{8}\|x\|_{2}^{2}\log^{2}n}{\lambda}.
\]
Since $\|x\|_{2}^{2}\leq4r^{2}$ and $\lambda=\frac{n}{r^{\alpha}}$,
we have
\[
\frac{2^{8}\|x\|_{2}^{2}\log^{2}n}{\lambda}\leq\frac{2^{10}r^{2}\log^{2}n}{\frac{n}{r^{\alpha}}}=\frac{2^{10}r^{2+\alpha}\log^{2}n}{n}\leq1
\]

where the last inequality follows the assumption in the algorithm
that $\alpha=0$. Hence, 
\begin{equation}
\frac{1}{1-t}(I-P_{A})x\in\Omega_{2}.\label{eq:term_2_proof}
\end{equation}

For the first term in \eqref{eq:x_decomp}, we note that for any $\beta>0$,
we have
\begin{equation}
P_{\widehat{A}}\succeq\beta\lambda(\beta\lambda\cdot I+\widehat{A})^{-1}-\frac{\beta}{1+\beta}I.\label{eq:P_lower}
\end{equation}
(To see this, we note that both sides have the same eigenspace and
hence it follows from $1_{x\leq\lambda}\geq\frac{\beta\lambda}{x+\beta\lambda}-\frac{\beta}{1+\beta}$.)

By Lemma \ref{lem:chernoff_app} with $\epsilon=1$, we have 
\[
\widehat{A}\preceq2A+O\left(\frac{\log^{3}n\cdot\tr A}{k}\right)\cdot I.
\]
Using $\tr A=O(r^{2}n\log^2 n)$ (Lemma \ref{lem:basic_property_isotropization})
and $k=\Omega(r^{2+\alpha}\log^{7}n)$ (the algorithm description),
we have
\[
\widehat{A}\preceq2A+\frac{n}{2^{7}r^{\alpha}\log^{2}n}\cdot I\preceq2A+\frac{\lambda}{2^{7}\log^{2}n}\cdot I.
\]
Putting this into \eqref{eq:P_lower}, we have
\[
P_{\widehat{A}}\succeq\beta\lambda(\beta\lambda\cdot I+\frac{\lambda}{2^{7}\log^{2}n}+2A)^{-1}-\frac{\beta}{1+\beta}I.
\]
On the range of $P_{A}$, we have $A\preceq\frac{\lambda}{2^{8}\log^{2}n}\cdot I$
and hence
\[
(P_{A}x)^{\top}P_{\widehat{A}}(P_{A}x)\geq\left(\frac{\beta\lambda}{\beta\lambda+\frac{\lambda}{2^{6}\log^{2}n}}-\frac{\beta}{1+\beta}\right)\|P_{A}x\|_{2}^{2}.
\]

Using this, we have
\begin{align*}
f(x) & \defeq\frac{1}{t^{2}r^{2}}(P_{A}x)^{\top}(I-\frac{3}{4}P_{\widehat{A}})(P_{A}x)\\
 & \leq\left(1-\frac{3}{4}\frac{\beta\lambda}{\beta\lambda+\frac{\lambda}{2^{6}\log^{2}n}}+\frac{3}{4}\frac{\beta}{1+\beta}\right)\frac{\|x\|_{2}^{2}}{r^{2}}\\
 & \leq\left(1-\frac{3}{4}\frac{\beta}{\beta+\frac{1}{64\log^{2}n}}+\frac{3}{4}\frac{\beta}{1+\beta}\right)\cdot4\left(1-\frac{1}{\log n}\right)^{2}
\end{align*}
where we used $\|x\|_{2}\leq2(1-\frac{1}{\log n})r$ at the end. Putting
$\beta=\frac{1}{8\log n}$, we have
\begin{align*}
f(x) & \leq\left(1-\frac{3}{4}\cdot\frac{8\log n}{8\log n+1}+\frac{3}{4}\cdot\frac{1}{1+8\log n}\right)\cdot4\left(1-\frac{1}{\log n}\right)\\
 & =\left(\frac{7+8\log n}{8\log n+1}\right)\cdot\left(1-\frac{1}{\log n}\right)\\
 & \leq\left(1+\frac{1}{\log n}\right)\cdot\left(1-\frac{1}{\log n}\right)\\
 & \leq1.
\end{align*}
Hence from the definition of $\Omega_{1}$, we get
\begin{equation}
\frac{1}{t}P_{A}x\in\Omega_{1}.\label{eq:term_1_proof}
\end{equation}
Combining \eqref{eq:term_1_proof} and \eqref{eq:term_2_proof}, we
have
\[
x\in\text{Conv}(\Omega_{1}\cup\Omega_{2})\subset T'K.
\]
Therefore $T'K$ contains a ball of radius $2(1-\frac{1}{\log n})r$.
This completes the induction.
\end{proof}

With the bound on the inner radius and the $p$-norm of the covariance
matrix, we can apply Theorem \ref{thm:warm_start} and Lemma \ref{lem:KLS-anisotropic}
to bound the mixing time and the complexity of the algorithm $\mathsf{Isotropize}$.
\begin{thm}
\label{thm:well_rounded} The algorithm $\mathsf{Isotropize}$ applied
to a well-rounded input convex body $K$ satisfying $B(0,1)\subseteq K$
with $\E_{K}\norm x^{2}=\widetilde O(n)$, with high probability, finds a transformation
$T$ using $\widetilde{O}(n^{3}\psi^{2})$ oracle calls, s.t. $TK$
is $2$-isotropic.
\end{thm}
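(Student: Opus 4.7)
The plan is to bound the total oracle cost by summing over the $O(\log n)$ outer iterations of $\mathsf{Isotropize}$ guaranteed by Lemma~\ref{lem:basic_property_isotropization}, together with the initial Gaussian-cooling and final covariance-estimation steps. In iteration $j$, the dominant cost is drawing $2k_j = O(r_j^2 \log^5 n)$ ball-walk samples from the current body $T_j K$. To invoke Theorem~\ref{thm:warm_start}, I would first rescale $T_j K$ by $1/r_j$ so that it contains the unit ball (valid by the inner-ball invariant from Lemma~\ref{lem:innerball}). The rescaled covariance $A_j/r_j^2$ has $p$-th norm at most $\widetilde O(n\, r_j^{2/p - 2})$ via Lemma~\ref{lem:basic_property_isotropization} applied with $\alpha = 0$ (matching the algorithm's threshold $\lambda = n$). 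Lemma~\ref{lem:KLS-anisotropic} then gives $\psi_q^2 = \widetilde O(n\, r_j^{2/p - 2})$, so each sample costs $\widetilde O(n^2 \psi_q^2) = \widetilde O(n^3 r_j^{2/p - 2})$ oracle calls, and the total cost of iteration $j$ is $\widetilde O(n^3 r_j^{2/p})$.

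Summing across iterations: since $r_j$ doubles each round and the loop exits when $r = \Theta(\sqrt{n}/\log n)$, the geometric series is dominated by the final term, giving total cost $\widetilde O(n^3 r_{\max}^{2/p}) = \widetilde O(n^{3+1/p}) = \widetilde O(n^3 \psi^2)$, using the hypothesis $\psi^2 \lesssim n^{1/p}$. The initial Gaussian cooling to obtain the first sample in $T_0 K$ costs $\widetilde O(n^3)$ by Theorem~\ref{thm:sampling}, since the input is well-rounded with $R^2 = O(n)$, and is absorbed. For the final step, after the while loop the body rescaled by $1/r_{\max}$ is well-rounded with $\|A\|_p = \widetilde O(n^{1/p})$, so drawing $O(n)$ samples to estimate the covariance within operator-norm $1/2$ (via Lemma~\ref{lem:covariance}, preceded if needed by a pre-normalization) costs $\widetilde O(n \cdot n^{2+1/p}) = \widetilde O(n^3 \psi^2)$; applying $\widehat A^{-1/2}$ then yields the desired $2$-isotropic transformation.

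The main obstacle I anticipate is justifying a valid warm start for each iteration's ball walk, since Theorem~\ref{thm:warm_start} requires an $M$-warm start. I would propagate samples across iterations by observing that the linear map $M_{j+1}$ carries a (near-)uniform sample from $T_j K$ to a (near-)uniform sample from $T_{j+1} K = M_{j+1} T_j K$, so the warmness parameter remains $O(1)$ and contributes only polylogarithmically through Theorem~\ref{thm:warm_start}. The first warm start is supplied by Gaussian cooling. The subsidiary issues---approximate ball-walk output distributions, mild statistical dependence between samples used both for covariance estimation and as warm starts for the next iteration, and the union bound over $O(\log n)$ iterations---are all handled by the parallel-thread and divine-intervention arguments of \cite{lovasz2006simulated} summarized in the preliminaries, with failure probability $\delta$ absorbed by an $O(\log(1/\delta))$ overhead.
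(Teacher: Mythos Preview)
Your proposal is correct and follows essentially the same route as the paper's proof: Gaussian cooling for the first sample, Lemma~\ref{lem:basic_property_isotropization} (with $\alpha=0$) for the $p$-norm bound, Lemma~\ref{lem:innerball} for the inner-radius invariant, Lemma~\ref{lem:KLS-anisotropic} to convert the $p$-norm bound into a KLS bound, and Theorem~\ref{thm:warm_start} for the per-sample cost, yielding $\widetilde O(n^{3}r_j^{2/p})$ per iteration and $\widetilde O(n^{3+1/p})=\widetilde O(n^{3}\psi^{2})$ overall. Your treatment of the warm-start propagation and the geometric summation is slightly more explicit than the paper's (which simply asserts ``this gives a warm start for all subsequent steps'' and bounds each iteration by the final one), but the argument is the same.
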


\begin{proof}
Theorem \ref{thm:sampling} shows that it takes $\widetilde{O}(n^{3})$
to get the first sample from a well-rounded body with Gaussian Cooling.
(Note that a uniform point from $K_{t}$ would be a very bad warm
start for $K_{2t}$.) This gives a warm start for all subsequent steps.

Let $A_{j}$ be the covariance matrix in the $j$-th iteration of
the algorithm. By Lemma \ref{lem:basic_property_isotropization} with
$\alpha=0$, we have that
\[
\norm{A_{j}}_{p}\le O(nr^{2/p}\log^{1+\frac{2}{p}} n).
\]
In the $j$-th iteration, the algorithm samples $k=c\cdot r_{j}^{2}\log^{7}n$
points. To bound the sampling cost, first note that for distribution
$q$ with KLS constant $\psi_{q}$, the complexity per sample is $\widetilde{O}(n^{2}\psi_{q}^{2}/r_{j}^{2})$
since by Lemma \ref{lem:innerball}, the algorithm maintains a ball
of radius $r_{j}$ inside the body. Suppose that the KLS constant
for isotropic distributions is $\psi$. We choose $p$ so that $\psi^{2}=O(n^{1/p})$.
By Lemma \ref{lem:KLS-anisotropic}, this implies
that the KLS constant even for non-isotropic distributions with covariance
$A$ satisfies $\psi_{q}^{2}=O(\norm A_{p}\log n)$. Lemma \ref{thm:warm_start}
shows that the total complexity of the $j$-th iteration is at most
\[
c\cdot r_{j}^{2}\cdot\widetilde{O}\left(\frac{n^{2}\cdot\norm{A_{j}}_{p}}{r_{j}^{2}}\right)\le\widetilde{O}\left(n^{3}r_{j}^{2/p}\right)\le\widetilde{O}\left(n^{3+\frac{1}{p}}\right).
\]
since we stop with $r^{2}=O(n)$. By the definition of $p$, the complexity
is $\widetilde{O}(n^{3}\psi^{2})$. For the cost of computing covariance
matrix and the mean, Lemma \ref{lem:covariance} shows that WHP $\widetilde O(n)$
samples suffice for a constant factor estimate of the covariance.
The total cost is
\[
n\cdot\widetilde{O}(n^{2}\cdot\norm A_{p}r^{-2})=r^{2}\cdot\widetilde{O}(n^{2}\cdot\norm A_{p}r^{-2})=\widetilde{O}\left(n^{3+\frac{1}{p}}\right)
\]
where we used that $n=\widetilde{O}(r^{2})$ at the end of the algorithm.
\end{proof}

\subsection{Outer loop: the general case}
For the general case, we first show that the next body is well-rounded
after each iteration and hence satisfies the condition of the algorithm
$\mathsf{Isotropize}$. 
\begin{lem}
\label{lem:round-3.5}Suppose that for a convex body $K$ and a transformation $T$, the set $T(K\cap B(0,t))-z$
is $2$-isotropic. Then $T\left(K\cap B\left(0,t(1+\frac{1}{\sqrt n})\right)\right)-z(1+\frac{1}{\sqrt{n}})$ is well-rounded.
\end{lem}
\begin{proof}
Let $K_{t}=K\cap B(0,t)$ and $K_{t(1+\frac{1}{\sqrt n})}=K\cap B\left(0,t(1+\frac{1}{\sqrt n})\right)$. Since $TK_t$ contains a ball of radius $1/2$, so does $TK_{t(1+\frac{1}{\sqrt n})}$. Next we prove that $\E_{TK_{t(1+\frac{1}{\sqrt n})}}\norm{x-z(1+\frac{1}{\sqrt{n}})}^2=O(n)$. 
By Lemma~\ref{lem:volume_outside_ball2}, we have $\vol(TK_t\setminus B(z,2\sqrt{n})\le \exp(-c\sqrt{n})\vol(TK_t)$ for some constant $c$. Since
\begin{align*}
TK_{t(1+\frac{1}{\sqrt n})}&\subseteq (1+\frac{1}{\sqrt n})TK_t\\
TK_{t(1+\frac{1}{\sqrt n})}\setminus (1+\frac{1}{\sqrt{n}})B(z,2\sqrt{n})&\subseteq (1+\frac{1}{\sqrt n})(TK_t\setminus B(z,2\sqrt{n})),
\end{align*}
we can compute the volume of $TK_{t(1+\frac{1}{\sqrt n})}$ that lies outside the ball $(1+\frac{1}{\sqrt{n}})B(z,2\sqrt{n})$ as follows:
\begin{align*}
\vol(TK_{t(1+\frac{1}{\sqrt n})}\setminus (1+\frac{1}{\sqrt{n}})B(z,2\sqrt{n}))
&\le (1+\frac{1}{\sqrt n})^n\vol(TK_t\setminus B(z,2\sqrt{n}))\\
&\le (1+\frac{1}{\sqrt n})^n \exp(-c\sqrt{n})\vol(TK_t)\\
&\le \exp(-(c-1)\sqrt{n})\vol(TK_{t(1+\frac{1}{\sqrt n})}).
\end{align*}
Moreover, since $TK_t-z$ is 2-isotropic, all of $TK_t$ lies in a ball of radius $2n$ and $z\in TK_t$. Then we have that $z(1+\frac{1}{\sqrt{n}})\in TK_{t(1+\frac{1}{\sqrt n})}$ and since $TK_{t(1+\frac{1}{\sqrt n})}\subseteq (1+\frac{1}{\sqrt n})TK_t$, $TK_{t(1+\frac{1}{\sqrt n})}$ lies in a ball of radius $(2n+2 \sqrt{n})$. Therefore, 
\[
\E_{TK_{t(1+\frac{1}{\sqrt n})}}\norm{x-z(1+\frac{1}{\sqrt{n}})}^2\le (2\sqrt{n}(1+\frac{1}{\sqrt{n}}))^2 + (2n+2\sqrt{n})^2\exp(-(c-1)\sqrt{n})\le 5n.
\]
\end{proof}

\begin{proof}[Proof of Theorem \ref{thm:rounding}.]
 The initial body $K_{r}$ is a ball, and at the end of the first
iteration, the body is $2$-isotropic. Let $K_{t}$ be the convex
body in some iteration. Assume that $K_{t}$ is $2$-isotropic. By
Lemma \ref{lem:round-3.5}, $K_{(1+1/\sqrt{n})t}$ is well-rounded. Since $(K_{t}-x)$
contains a unit ball, so does $(K_{(1+1/\sqrt{n})t}-x)$. Then by Theorem \ref{thm:well_rounded}
the inner loop to make $K_{(1+1/\sqrt{n})t}$ near-isotropic takes $\widetilde{O}(n^{3}\psi^{2})$
oracle queries. 
To bound the number of iterations of the outer loop, we consider the initial radius $r$ and it grows by a factor of $(1+\frac{1}{\sqrt{n}})$ in each iteration. When the algorithm stops, the number of iterations $\ell$ satisfies:
\[
\left(1+\frac{1}{\sqrt{n}}\right)^\ell r \ge R.
\]
Then we get $\ell = O(\sqrt{n}\log(R/r))$ and hence the theorem follows.
\end{proof}

\subsection{Further improvement}
The complexity of each inner iteration to make a well-rounded body isotropic is $\widetilde{O}(n^{3})$. Our current algorithm needs $\widetilde{O}(\sqrt{n})$ outer loop iterations, as we only scale the outer ball by a factor of $(1+(1/\sqrt{n}))$ in each iteration. The following conjecture, if true, would allow us to double the radius in each outer loop and hence reduce the number of iterations to $\widetilde{O}(1)$, thereby immediately improving the overall complexity of isotropic transformation to $O^*(n^3)$. To be precise, the following conjecture implies an almost cubic volume algorithm.
\begin{conjecture}
Suppose that for a convex body $K$ and an ellipsoid $E$, the set $K\cap E$
is $2$-isotropic. Then $K\cap 2E$ is well-rounded.
\end{conjecture}
This conjecture is true when $E$ is a ball, as shown in Lemma 3.4 in \cite{jia2021reducing}; however the case of a ball is not sufficient to obtain a cubic rounding algorithm for general bodies as claimed there. 

\section{Polytope Volume}

In this section, we consider the special case of convex polytopes.
We assume that a polytope $P=\left\{ x:Ax\le b\right\} $ is given
explicitly by $A\in R^{m\times n},b\in R^{m}$. A naive implementation
of the our general membership oracle algorithm would take $O(mn)$
arithmetic operations per oracle call, leading to a time bound of
$O^*(mn^{3.5})\cdot O(mn)=O^*(mn^{4.5})$, where the first term is the time complexity
from the oracle queries and the second term is from the additional
arithmetic operations per oracle query. This is now the number of
arithmetic operations. This already matches the current best time
complexity for polytopes using earlier volume algorithms and an amortization
trick introduced in \cite{mangoubi2019faster}. Here we give a significantly
faster implementation. The algorithm is based on a very simple
application of fast matrix multiplication. 

We can replace every ball walk sampling step with the following algorithm.
\begin{algorithm}[H]
\caption{Polytope Ball Walk}

\label{alg:poly_ball_walk} 

\begin{algorithmic}[1] 

\Procedure{$\mathsf{PolytopeBallWalk}$}{$A\in\R^{m\times n},b\in\R^{m},x_{0}\in\R^{n},\delta>0,k\in\mathbb{N}$} 

\State{\textbf{Assumption:} $Ax_{0}\le b$.}

\State{Sample points $z_{1},\dots,z_{k}$ from $B_{n}(0,\delta)$.} 

\State{$Z\leftarrow\begin{pmatrix}z_{1} & z_{2} & \cdots & z_{k}\end{pmatrix},Y\leftarrow\begin{pmatrix}Y_{1} & Y_{2} & \cdots & Y_{k}\end{pmatrix}\leftarrow AZ$.} 

\State{$x\leftarrow x_{0},y\leftarrow Ax_{0}$.} 

\For{$i\leftarrow1:k$} 

\If{$y+Y_{i}\le b$.}

\State{$x\leftarrow x+z_{i},y\leftarrow y+Y_{i}$.} 

\EndIf 

\EndFor 

\State{\textbf{Return} ($x$).} 

\EndProcedure

\end{algorithmic} 
\end{algorithm}

Next, we show that this algorithm will improve the running time of
the ball walk.
\begin{lem}
Given a polytope $\{x\in\R^{n}:Ax\le b\}$ where $A\in\R^{m\times n}$,
$k$ steps of the ball walk in the polytope can be implemented in
time $O(C(m,n,k))$ where $C(m,n,k)$ denotes the minimum number of
arithmetic operations needed to multiply an $m\times n$ matrix by
an $n\times k$ matrix.
\end{lem}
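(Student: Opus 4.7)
The plan is to verify two things about Algorithm~\ref{alg:poly_ball_walk}: that it faithfully simulates $k$ consecutive ball-walk steps in the polytope, and that its total arithmetic cost is $O(C(m,n,k))$.

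First I would argue correctness. A single ball-walk step at the current point $x$ consists of drawing $z$ uniformly from $B_n(0,\delta)$, and moving to $x+z$ if $A(x+z)\le b$, otherwise staying at $x$. Crucially, the proposal $z$ is independent of the current state and of the past history, so the sequence $z_1,\dots,z_k$ may be drawn in advance without changing the joint distribution of the walk. In the loop, the algorithm maintains the invariant $y=Ax$ for the current iterate $x$, so the test $y+Y_i\le b$ is exactly $A(x+z_i)\le b$, and the conditional update $x\leftarrow x+z_i,\ y\leftarrow y+Y_i$ correctly realizes an accepted ball-walk step. Thus, the output has the same distribution as $k$ consecutive ball-walk steps started from $x_0$.

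Next, for the running time: the precomputation $Y=AZ$ is an $(m\times n)\cdot(n\times k)$ matrix product and so, by definition of $C$, costs $C(m,n,k)$ arithmetic operations. The initial evaluation of $y=Ax_0$ costs $O(mn)\le C(m,n,k)$. Each iteration of the \textbf{for} loop performs an $O(m)$ componentwise comparison $y+Y_i\le b$ and, on acceptance, an $O(n)$ update of $x$ and an $O(m)$ update of $y$; the total loop cost is $O(mk)$. Since any algorithm computing the $m\times k$ matrix $AZ$ must at least write its $mk$ output entries, $C(m,n,k)\ge mk$, so the loop cost is absorbed into $C(m,n,k)$.

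The only point that requires care is the independence argument for batching the proposals: in general, sampling randomness in advance is only valid when later proposals do not depend on earlier outcomes, which is the case for the ball walk because the proposal distribution $B_n(0,\delta)$ is state-independent (the current state only enters in the accept/reject test, not in how $z_i$ is drawn). Once this is observed, the rest is bookkeeping and I do not anticipate any real obstacle.
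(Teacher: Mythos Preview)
Your proof is correct and follows essentially the same approach as the paper: both argue that pre-sampling the state-independent proposals $z_1,\dots,z_k$ and precomputing $Y=AZ$ reduces each step to $O(m+n)$ work, with the matrix product dominating. Your version is in fact slightly more careful, explicitly justifying $C(m,n,k)\ge mk$ via the output-size lower bound and noting the one-time $O(mn)$ cost of $Ax_0$, whereas the paper simply asserts $O(nk)+C(m,n,k)+O((m+n)k)=O(C(m,n,k))$.
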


\begin{proof}
We can see that Algorithm \ref{alg:poly_ball_walk} does the same
computations as the ball walk algorithm with step size $\delta$ and
run for $k$ steps. The only difference is that we generate the random
vectors from $B(0,\delta)$ first and do a preprocessing step to avoid
multiplying $Az_{i}$ at each step. So the resulting point will be
the same as the ball walk algorithm and we reduce the running time
of each step. 

Let $Z,Y,x,y$ be the same as in Algorithm~\ref{alg:poly_ball_walk}.
Generating $k$ random vectors will cost $O(nk)$. For each step,
we compute $y+Y_{i}$ in $O(m)$ time and $x+z_{i}$ in $O(n)$ time.
So the total time taken is 
\[
O(nk)+C(m,n,k)+O((m+n)k)=O(C(m,n,k)).
\]
\end{proof}
We will apply this speedup to the polytope volume computation. From
Theorem~\ref{thm:well_rounded}, in the $j$-th iteration of Isotropiztion,
we run ball walk for $k=\widetilde{O}(n^{3.5})$ steps with the same
transformation matrix $T$, and hence we can use the above algorithm
to improve running time. 

For getting the first sample, we use the Gaussian Cooling algorithm
which in turn runs ball walk in phases with different ball radii and
target distributions. However, the total number of ball walk steps
in the Gaussian Cooling algorithm for a well-rounded body is $\widetilde{O}(n^{3})$
and we also know the maximum number of steps needed for each phase.
Therefore, we can construct the matrix $Z$ in Algorithm \ref{alg:poly_ball_walk}
with $k=\widetilde{O}(n^{3.5})$. The following lemma gives the current
best matrix bounds from a series of works \cite{gall2024faster,williams2024new} and it is computed using the tool \cite{Complexity}. For $k>0$, define the exponent of the rectangular
matrix multiplication as follows: 
\[
\omega(1,1,k)=\inf\{\tau\in\R\mid C(n,n,\lfloor n^{k}\rfloor)=O(n^{\tau})\}.
\]

\begin{fact}[\cite{gall2024faster,williams2024new}]
 $\omega(1,1,3.5)=4.6850095$.
\end{fact}
To prove the first part of Theorem \ref{thm:polytope}, use Algorithm
\ref{alg:poly_ball_walk}. For $\psi=\widetilde O(1)$, the total time is $\widetilde{O}(mn^{\omega(1,1,3.5)-1})=\widetilde{O}(mn^{3.6850095})$. 

\bigskip

{\bf Acknowledgements.} We thank Yunbum Kook and Kevin Tian for helpful comments. This work was supported in part by NSF Award CCF-2007443.

\bibliographystyle{plain}
\bibliography{main}

\appendix

\section{Empirical Covariance Matrix with Sublinear Sample Complexity\label{sec:matrix_chernoff}}

To bound the error of the empirical covariance matrix $A$, we use
the following matrix Chernoff bound.
\begin{lem}[{Matrix Bernstein \cite[Theorem 6.1]{tropp2012user}}]
\label{lem:chernoff}Consider a finite sequence $\{X_{i}\}$ of independent,
random, self-adjoint matrices of dimension $n$. Assume that $\E X_{i}=0$
and $\|X_{i}\|_{\op}\leq R$ almost surely. Then, for all $t\geq0$,
we have
\[
\P\left(\|\sum_{i}X_{i}\|_{\op}\geq t\right)\leq2n\exp\left(\frac{-t^{2}/2}{\sigma^{2}+Rt/3}\right)
\]
where $\sigma^{2}\defeq\|\sum_{i}\E(X_{i}^{2})\|_{\op}.$ 
\end{lem}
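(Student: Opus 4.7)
Since this is the standard matrix Bernstein inequality due to Tropp (building on Ahlswede--Winter, Oliveira, and Lieb), the plan is to follow the \emph{matrix Laplace transform} method rather than attempt anything novel. I would first bound $\lambda_{\max}(\sum_i X_i)$, then apply the same argument to $\{-X_i\}$ and union bound, which is exactly where the leading factor $2n$ (rather than $n$) in the stated tail comes from.

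The first step is a Chernoff-style reduction: for any $\theta > 0$,
\[
\P\!\left(\lambda_{\max}\Big(\sum_i X_i\Big) \ge t\right) \;\le\; e^{-\theta t}\,\E\,\tr\exp\!\Big(\theta\sum_i X_i\Big).
\]
To handle the trace-exponential of a sum of independent self-adjoint matrices, I would invoke Lieb's concavity theorem, which yields the subadditivity of the matrix cumulant generating function:
\[
\E\,\tr\exp\!\Big(\theta\sum_i X_i\Big) \;\le\; \tr\exp\!\Big(\sum_i \log \E\, e^{\theta X_i}\Big).
\]

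The second step is a per-summand moment bound. Using $\E X_i = 0$ together with $\|X_i\|_{\op}\le R$, the scalar inequality $e^x \le 1 + x + \tfrac{x^2/2}{1-|x|/3}$ for $|x|\le R\theta$ transfers (by functional calculus on self-adjoint matrices) to
\[
\log \E\, e^{\theta X_i} \;\preceq\; \frac{\theta^{2}/2}{1 - R\theta/3}\,\E X_i^{2}, \qquad 0 < \theta < 3/R.
\]
Substituting, and using monotonicity of $\tr\exp$ together with $\sigma^{2} = \|\sum_i \E X_i^{2}\|_{\op}$, gives
\[
\E\,\tr\exp\!\Big(\theta\sum_i X_i\Big) \;\le\; n\exp\!\left(\frac{\theta^{2}\sigma^{2}/2}{1 - R\theta/3}\right).
\]

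The final step is a standard optimization: take $\theta = t/(\sigma^{2} + Rt/3)$, plug back into the Laplace bound, repeat for $\{-X_i\}$, and take a union bound to obtain exactly the tail $2n\exp(-t^{2}/2/(\sigma^{2}+Rt/3))$. The only technical subtlety---and the one point where I would slow down---is the passage from the scalar exponential inequality to the semidefinite bound on $\log \E e^{\theta X_i}$, since the matrix logarithm is not operator monotone; routing through Lieb's theorem as above is the textbook way to make this rigorous. As the statement appears here purely as an off-the-shelf tool, I would in practice defer to \cite{tropp2012user} for the full details rather than reproduce this argument in the paper.
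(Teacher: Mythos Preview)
The paper does not prove this lemma at all; it is simply quoted verbatim from \cite[Theorem~6.1]{tropp2012user} as a black-box tool for the subsequent Lemma~\ref{lem:chernoff_app}. Your sketch correctly reproduces Tropp's matrix Laplace transform argument, and your closing instinct---to defer to the citation rather than reproduce the proof---is exactly what the paper does.
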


\begin{lem}
\label{lem:chernoff_app}Let $p$ be a logconcave density in $\Rn$
with covariance $A$. Let $\widehat{A}\leftarrow\frac{1}{k}\sum_{i=1}^{k}(x_{i}-\widehat{x})(x_{i}-\widehat{x})^{\top}$
where $\widehat{x}=\frac{1}{k}\sum_{i=1}^{k}x_{i+k}$ and $x_{i}$
are independent samples from $p$. With probability $1-1/n^{O(1)}$,
for any $0\leq\epsilon\leq1$, we have
\[
(1-\epsilon)A-O\left(\frac{\log^{3}n\tr A}{\epsilon k}\right)\cdot I\preceq\widehat{A}\preceq(1+\epsilon)A+O\left(\frac{\log^{3}n\cdot\tr A}{\epsilon k}\right)\cdot I.
\]
\end{lem}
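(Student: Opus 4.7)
The plan is to reduce the claim to the matrix Bernstein inequality (Lemma~\ref{lem:chernoff}) after (i) truncating the tails of $p$, (ii) preconditioning by $(A+\gamma I)^{-1/2}$ for a carefully chosen $\gamma$, and (iii) separately controlling the perturbation introduced by using $\widehat{x}$ instead of the true mean. By translating, assume $\E x = 0$, so $A = \E[xx^\top]$. By Lemma~\ref{lem:volume_outside_ball} applied with $t = C\log n$ and a union bound over the $2k \leq \mathrm{poly}(n)$ samples, we may work on a high-probability event $\mathcal{E}$ on which $\|x_i\|_2^2 \leq C^2\log^2 n\cdot \tr A$ for every $i$. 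The same lemma applied to the logconcave random vectors $\overline{x} := \frac{1}{k}\sum_{i=1}^k x_i$ and $\widehat{x}$ (each with covariance $A/k$) also gives $\|\overline{x}\|_2^2, \|\widehat{x}\|_2^2 \leq O(\log^2 n\cdot \tr A/k)$ on $\mathcal{E}$.

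Set $\gamma = C'\log^3 n\cdot \tr A/(\epsilon^2 k)$ and define the zero-mean preconditioned increments
\[
Z_i \;=\; \tfrac{1}{k}(A+\gamma I)^{-1/2}\bigl(x_i x_i^\top - A\bigr)(A+\gamma I)^{-1/2}.
\]
Because $\langle x_i,(A+\gamma I)^{-1} x_i\rangle \leq \|x_i\|_2^2/\gamma \leq \epsilon^2 k/(C'\log n)$ on $\mathcal{E}$, we get the per-term bound $\|Z_i\|_{\op} \leq R := O(\epsilon^2/\log n)$. For the matrix variance, using $(x_i x_i^\top - A)(A+\gamma I)^{-1}(x_i x_i^\top - A) \preceq \langle x_i,(A+\gamma I)^{-1} x_i\rangle\, x_i x_i^\top$ after expanding and dropping the nonnegative $A(A+\gamma I)^{-1}A$ cross-term,
\[
\sum_{i=1}^k \E Z_i^2 \;\preceq\; \tfrac{\epsilon^2/(C'\log n)}{k}\,(A+\gamma I)^{-1/2} A (A+\gamma I)^{-1/2} \;\preceq\; \tfrac{\epsilon^2}{C'\log n}\, I,
\]
so $\sigma^2 \leq O(\epsilon^2/\log n)$. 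Matrix Bernstein with $t = \epsilon$ yields $\sigma^2 + Rt/3 = O(\epsilon^2/\log n)$, hence failure probability $2n\exp(-\Omega(\log n)) = 1/n^{\Omega(1)}$. Undoing the conjugation, $\|\,(A+\gamma I)^{-1/2}(\tfrac{1}{k}\sum x_i x_i^\top - A)(A+\gamma I)^{-1/2}\,\|_{\op} \leq \epsilon$ translates into
\[
(1-\epsilon)A - \epsilon\gamma I \;\preceq\; \tfrac{1}{k}\sum_{i=1}^k x_i x_i^\top \;\preceq\; (1+\epsilon)A + \epsilon\gamma I,
\]
and $\epsilon\gamma = O(\log^3 n\cdot \tr A/(\epsilon k))$ matches the target slack exactly.

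Finally, expand $\widehat{A} = \frac{1}{k}\sum_i x_i x_i^\top - \overline{x}\,\widehat{x}^\top - \widehat{x}\,\overline{x}^\top + \widehat{x}\,\widehat{x}^\top$, where $\overline{x}$ and $\widehat{x}$ are independent by construction. Using the bounds $\|\overline{x}\|_2, \|\widehat{x}\|_2 \leq O(\log n\sqrt{\tr A/k})$ on $\mathcal{E}$, the sum of the three centering cross-terms has operator norm at most $3\|\overline{x}\|\|\widehat{x}\| + \|\widehat{x}\|^2 = O(\log^2 n\cdot \tr A/k)$, which is absorbed into the $\epsilon\gamma I$ slack since $\log^2 n \leq \log^3 n/\epsilon$ for $\epsilon \leq 1$. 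This completes the argument. The main obstacle is purely the scaling: one must tune $\gamma$ so that both $R$ and $\sigma^2$ come out of size $O(\epsilon^2/\log n)$, so that a deviation of $\epsilon$ in the preconditioned norm costs only $\log n$ in the Bernstein exponent (rather than $\epsilon\log n$ as a naive choice $\gamma \propto 1/(\epsilon k)$ would give). Once the $\epsilon^{-2}$ dependence in $\gamma$ is identified, everything else is routine.
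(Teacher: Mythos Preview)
Your approach is the same as the paper's: precondition by $(A+\gamma I)^{-1/2}$ with $\gamma\asymp\log^{3}n\,\tr A/(\epsilon^{2}k)$, truncate the tails, and apply matrix Bernstein. The only organizational difference is that the paper conditions on $\widehat{x}$ and applies Bernstein directly to $(x_i-\widehat{x})(x_i-\widehat{x})^{\top}$, carrying an extra $\|\widehat{x}\|_{(\lambda A+I)^{-1}}^{2}$ term through the $R$ and $\sigma^{2}$ bounds, whereas you first control $\tfrac{1}{k}\sum_i x_i x_i^{\top}$ and append the centering correction at the end; your arrangement is a little cleaner.

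One line needs fixing. The pointwise inequality
\[
(x_i x_i^{\top}-A)(A+\gamma I)^{-1}(x_i x_i^{\top}-A)\;\preceq\;\langle x_i,(A+\gamma I)^{-1}x_i\rangle\, x_i x_i^{\top}
\]
is false as stated (set $x_i=0$: the left side is $A(A+\gamma I)^{-1}A\succeq 0$, the right side is $0$). What you need, and what is true, is the inequality \emph{in expectation}: expanding and using $\E[x_i x_i^{\top}]=A$, the two cross terms contribute $-2A(A+\gamma I)^{-1}A$, which together with the $+A(A+\gamma I)^{-1}A$ term leaves a net $-A(A+\gamma I)^{-1}A\preceq 0$ that may be dropped from the upper bound. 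With that correction (and after cancelling the stray $1/k$ in your displayed intermediate bound on $\sum_i\E Z_i^{2}$) your variance estimate $\sigma^{2}=O(\epsilon^{2}/\log n)$ is correct and the rest of the argument goes through.
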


\begin{rem*}
By a more careful tail analysis, one can get a bound of $O(\frac{\log n\cdot\tr A}{\epsilon k})\cdot I$
and we speculate that the tight bound for the additive term might
be $O(\frac{\tr A}{\epsilon k})\cdot I$.
\end{rem*}
\begin{proof}
Let $\lambda\geq0$ to some constant to be determined. By shifting
the distribution, we can assume $p$ has mean $0$. Let $\widetilde{p}$
be the distribution given by $p$ restricted to the ball 
\[
\{\|x\|_{(\lambda A+I)^{-1}}\leq3s\cdot\sqrt{\tr(\lambda A+I)^{-\frac{1}{2}}A(\lambda A+I)^{-\frac{1}{2}}}\}
\]
for some $s=\Theta(\log n)$. Using the fact that $p$ has mean $0$
and the fact that $(\lambda A+I)^{-\frac{1}{2}}A(\lambda A+I)^{-\frac{1}{2}}\preceq A$,
Lemma \ref{lem:volume_outside_ball} shows that
\[
\P_{x\sim p}(\|x\|_{(\lambda A+I)^{-1}}\leq3s\cdot\sqrt{\tr A})\geq1-\frac{1}{n^{\Theta(1)}}.
\]

Let $A_{i}$ be the random matrices $(x_{i}-\widehat{x})(x_{i}-\widehat{x})^{\top}$
with $x_{i}$ sampled from $p$ and $\widetilde{A}_{i}$ be the random
matrices $(\widetilde{x}_{i}-\widehat{x})(\widetilde{x}_{i}-\widehat{x})^{\top}$
with $\widetilde{x}_{i}$ sampled from $\widetilde{p}$. Note that
when $k=\Omega(n)$, we have that WHP $\frac{1}{2}A\preceq\widehat{A}\preceq2A$
\ref{lem:covariance}. Hence, we can assume $k=O(n)$. We couple two
matrices together such that $\widetilde{A}_{i}=A_{i}$ for $i=1,2,\cdots k$
with probability $1-ke^{-s}=1-\frac{1}{n^{\Theta(1)}}$. Let $X_{i}=\frac{1}{k}(\lambda A+I)^{-\frac{1}{2}}(A_{i}-\E A_{i})(\lambda A+I)^{-\frac{1}{2}}$
and $\widetilde{X}_{i}=\frac{1}{k}(\lambda A+I)^{-\frac{1}{2}}(\widetilde{A}_{i}-\E\widetilde{A}_{i})(\lambda A+I)^{-\frac{1}{2}}$
for some $\lambda$ to be chosen where the expectation $\E$ is conditional
on $\hat{x}$. Note that
\begin{align}
\P(\norm{(\lambda A+I)^{-\frac{1}{2}}(\widehat{A}-\E\widehat{A})(\lambda A+I)^{-\frac{1}{2}}}_{\op}\geq t) & =\P(\|\sum_{i}X_{i}\|_{\op}\geq t)\nonumber \\
 & \leq\P(\|\sum_{i}\widetilde{X}_{i}\|_{\op}\geq t)+\frac{1}{n^{\Theta(1)}}.\label{eq:A_EA_truncation}
\end{align}
Hence, it suffices to study $\widetilde{X}_{i}$.

We will apply Lemma \ref{lem:chernoff}. For the sup norm bound, note
that
\begin{align*}
\|\widetilde{X}_{i}\|_{\op} & \leq\frac{1}{k}\left(\|(\lambda A+I)^{-\frac{1}{2}}\widetilde{A}_{i}(\lambda A+I)^{-\frac{1}{2}}\|_{\op}+\|(\lambda A+I)^{-\frac{1}{2}}\E\widetilde{A}_{i}(\lambda A+I)^{-\frac{1}{2}}\|_{\op}\right)\\
 & \leq\frac{1}{k}\left(\|\widetilde{x}_{i}-\widehat{x}\|_{(\lambda A+I)^{-1}}^{2}+\E\|\widetilde{x}_{i}-\widehat{x}\|_{(\lambda A+I)^{-1}}^{2}\right)\\
 & \leq\frac{1}{k}\left(2\|\widetilde{x}_{i}\|_{(\lambda A+I)^{-1}}^{2}+2\E\|\widetilde{x}_{i}\|_{(\lambda A+I)^{-1}}^{2}+4\|\widehat{x}\|_{(\lambda A+I)^{-1}}^{2}\right)\\
 & \leq\frac{4}{k}\left(9s^{2}\tr A+\|\widehat{x}\|_{(\lambda A+I)^{-1}}^{2}\right)\leq R
\end{align*}
where the second inequality follows from the fact that non-zero eigenvalues
of $Y^{T}Y$ and $YY^{T}$ are the same for any rectangular matrix
$Y$, and
\begin{equation}
R\defeq\frac{36}{k}\left(s^{2}\tr A+\|\widehat{x}\|_{(\lambda A+I)^{-1}}^{2}\right)\label{eq:A_R}
\end{equation}
For the variance bound, note that $(\lambda A+I)^{-\frac{1}{2}}\widetilde{A}_{i}(\lambda A+I)^{-\frac{1}{2}}\preceq kR$
and hence (using again that $(a-b)^{2}\le2(a^{2}+b^{2}))$,
\begin{align*}
\E\widetilde{X}_{i}^{2} & \preceq\E\frac{2}{k^{2}}(((\lambda A+I)^{-\frac{1}{2}}\widetilde{A}_{i}(\lambda A+I)^{-\frac{1}{2}})^{2}+(\E(\lambda A+I)^{-\frac{1}{2}}\widetilde{A}_{i}(\lambda A+I)^{-\frac{1}{2}})^{2})\\
 & \preceq\frac{2kR}{k^{2}}\E((\lambda A+I)^{-\frac{1}{2}}\widetilde{A}_{i}(\lambda A+I)^{-\frac{1}{2}}+\E((\lambda A+I)^{-\frac{1}{2}}\widetilde{A}_{i}(\lambda A+I)^{-\frac{1}{2}})\\
 & =\frac{4R}{k}\E(\lambda A+I)^{-\frac{1}{2}}\widetilde{A}_{i}(\lambda A+I)^{-\frac{1}{2}}\\
 & \preceq\frac{8R}{k}\E(\lambda A+I)^{-\frac{1}{2}}A_{i}(\lambda A+I)^{-\frac{1}{2}}\\
 & =\frac{8R}{k}(\frac{A}{\lambda A+I}+(\lambda A+I)^{-\frac{1}{2}}\widehat{x}\widehat{x}^{\top}(\lambda A+I)^{-\frac{1}{2}})
\end{align*}
where we used that $\frac{d\widetilde{p}}{dp}\leq2$ in the last inequality.
Hence, we have 
\[
\sigma^{2}\defeq\|\sum_{i}\E\widetilde{X}_{i}^{2}\|_{\op}\leq8R(\|\frac{A}{\lambda A+I}\|_{\op}+\|\widehat{x}\|_{(\lambda A+I)^{-1}}^{2}).
\]
Apply Lemma \ref{lem:chernoff}, with probability $1-\frac{1}{n^{\Theta(1)}}$,
we have

\begin{align*}
\|\sum_{i}\widetilde{X}_{i}\|_{\op} & \lesssim\sigma\sqrt{\log n}+R\log n\\
 & \lesssim\sqrt{(\|\frac{A}{\lambda A+I}\|_{\op}+\|\widehat{x}\|_{(\lambda A+I)^{-1}}^{2})}\sqrt{R\log n}+R\log n
\end{align*}

Using the value of $R$ from equation \ref{eq:A_R}, for any $c\geq1$,
we get 
\begin{align*}
\|\sum_{i}\widetilde{X}_{i}\|_{\op} & \lesssim\frac{1}{c}\|\frac{A}{\lambda A+I}\|_{\op}+\frac{1}{c}\|\widehat{x}\|_{(\lambda A+I)^{-1}}^{2}+\frac{c}{k}(\log^{2}n\cdot\tr A+\|\widehat{x}\|_{(\lambda A+I)^{-1}}^{2})\log n\\
 & \lesssim\frac{1}{c\lambda}+(\frac{1}{c}+\frac{c\log n}{k})\cdot\|\widehat{x}\|_{(\lambda A+I)^{-1}}^{2}+\frac{c\log^{3}n}{k}\cdot\tr A
\end{align*}
Using this and equation \eqref{eq:A_EA_truncation} , we have
\[
\norm{(\lambda A+I)^{-\frac{1}{2}}(\widehat{A}-\E\widehat{A})(\lambda A+I)^{-\frac{1}{2}}}_{\op}\lesssim\frac{1}{c\lambda}+(\frac{1}{c}+\frac{c\log n}{k})\cdot\|\widehat{x}\|_{(\lambda A+I)^{-1}}^{2}+\frac{c\log^{3}n}{k}\cdot\tr A.
\]
Finally, we note that $\widehat{x}$ follows a logconcave distribution
with mean $0$ and covariance matrix $\frac{1}{k}A$. By Lemma \ref{lem:volume_outside_ball},
we have that 
\[
\|\widehat{x}\|_{(\lambda A+I)^{-1}}\lesssim\log n\sqrt{\frac{1}{k}\tr A}
\]
with probability $1-1/n^{O(1)}$. This gives 
\begin{align*}
\norm{(\lambda A+I)^{-\frac{1}{2}}(\widehat{A}-\E\widehat{A})(\lambda A+I)^{-\frac{1}{2}}}_{\op} & \lesssim\frac{1}{c\lambda}+(\frac{\log^{2}n}{ck}+\frac{c\log^{3}n}{k})\cdot\tr A\\
 & \lesssim\frac{1}{c\lambda}+\frac{c\log^{3}n}{k}\cdot\tr A
\end{align*}
Hence, we have
\[
|\widehat{A}-\E\widehat{A}|\lesssim\frac{1}{c}A+\frac{1}{c\lambda}I+\frac{c\log^{3}n}{k}\cdot\tr A\cdot(\lambda A+I).
\]
Taking $c=\Theta(\epsilon^{-1})$, $\lambda=\frac{k}{c^{2}\log^{3}n\tr A}$,
we have
\[
|\widehat{A}-\E\widehat{A}|\preceq\epsilon A+O(\frac{\log^{3}n\tr A}{\epsilon k})\cdot I.
\]
The result follows using $\E\widehat{A}=A$.
\end{proof}

\section{Proof of the Anisotropic KLS Bound\label{sec:KLS}}

Consider the following stochastic localization process.
\begin{defn}
\label{def:A}For a logconcave density $p$, we define the following
stochastic differential equation:
\begin{equation}
c_{0}=0,\quad dc_{t}=dW_{t}+\mu_{t}dt,\label{eq:dBt}
\end{equation}
where the probability density $p_{t}$, the mean $\mu_{t}$ and the
covariance $A_{t}$ are defined by
\[
p_{t}(x)=\frac{e^{c_{t}^{\top}x-\frac{t}{2}\norm x_{2}^{2}}p(x)}{\int_{\Rn}e^{c_{t}^{\top}y-\frac{t}{2}\norm y_{2}^{2}}p(y)dy},\quad\mu_{t}=\E_{x\sim p_{t}}x,\quad A_{t}=\E_{x\sim p_{t}}(x-\mu_{t})(x-\mu_{t})^{\top}.
\]

The following lemma shows that one can upper bound the expansion $\psi_{p}$
by upper bounding $\|A_{t}\|_{\op}$:
\end{defn}

\begin{lem}[{\cite[Lemma 31 in ArXiv ver 3]{DBLP:journals/corr/LeeV16a}}]
\label{lem:boundAgivesKLS}Given a logconcave density $p$, let $A_{t}$
be as in Definition \ref{def:A} using initial density $p$. Suppose
there is a $T>0$ such that
\[
\P\left(\int_{0}^{T}\norm{A_{s}}_{\op}ds\leq\frac{1}{64}\right)\geq\frac{3}{4}
\]
Then, we have $\psi_{p}=O\left(T^{-1/2}\right).$
\end{lem}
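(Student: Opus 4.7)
The plan is to use the stochastic localization process of Definition~\ref{def:A} to reduce Cheeger for the logconcave density $p$ to Cheeger for the strongly log-concave density $p_T$, which is classical. Fix a Borel set $E \subset \R^n$ with $\phi_0 := p(E) \le 1/2$; it suffices to prove $p_{n-1}(\partial E) \gtrsim \sqrt{T}\,\phi_0$, which is Cheeger with constant $\psi_p = O(T^{-1/2})$. Write $\phi_t := p_t(E)$. A direct It\^o computation from \eqref{eq:dBt}, using that each density $p_t(x)$ satisfies $d p_t(x) = p_t(x)\,\langle x - \mu_t, dW_t\rangle$ and is therefore a martingale pointwise in $x$, gives that $\phi_t$ is a local martingale $d\phi_t = \langle v_t, dW_t\rangle$ with $v_t := \int_E (x-\mu_t)\,p_t(x)\,dx$. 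Writing $\langle u, v_t\rangle = \cov_{p_t}(1_E, \langle u, x\rangle)$ for a unit vector $u$ and applying Cauchy--Schwarz yields the self-limiting bound
\[
\|v_t\|^2 \;\le\; \phi_t(1-\phi_t)\,\|A_t\|_{\op}.
\]
The same pointwise martingale property of $p_t(x)$ gives $p_{n-1}(\partial E) = \E[p_{T,n-1}(\partial E)]$.

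By construction $p_T(x) \propto e^{c_T^\top x - (T/2)\|x\|^2}\,p(x)$, so $-\log p_T$ is $T$-strongly convex. Hence $p_T$ is $T$-strongly log-concave and enjoys the Bakry--\'{E}mery/Bobkov Cheeger inequality with constant $\Omega(\sqrt{T})$, i.e.\ $p_{T,n-1}(\partial E) \ge c_1\sqrt{T}\,\min(\phi_T, 1-\phi_T)$ almost surely. Combined with the martingale identity,
\[
p_{n-1}(\partial E) \;\ge\; c_1\sqrt{T}\,\E\!\left[\min(\phi_T, 1-\phi_T)\right],
\]
so it suffices to establish $\E[\min(\phi_T, 1-\phi_T)] \gtrsim \phi_0$.

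For that final step, let $\Omega_T := \{\int_0^T \|A_s\|_{\op}\,ds \le 1/64\}$, which has probability $\ge 3/4$ by hypothesis, and introduce the stopping time $\tau := \inf\{t : \phi_t \notin [\phi_0/2, 2\phi_0]\}$. Up to time $\tau \wedge T$ we have $\phi_s \le 2\phi_0$, so the self-limiting bound yields $\|v_s\|^2 \le 2\phi_0\,\|A_s\|_{\op}$, and the stopped martingale $\phi_{\cdot\wedge\tau\wedge T}$ has quadratic variation at most $2\phi_0/64 = \phi_0/32$ on $\Omega_T$. A Freedman-type exponential martingale inequality, applied multiplicatively to $\phi_t/\phi_0$ rather than additively to $\phi_t - \phi_0$, then shows $\phi_T \ge \phi_0/2$ with probability at least $1/2$ on $\Omega_T$; on that event $\phi_T \le 3\phi_0/2 \le 3/4$, hence $\min(\phi_T, 1-\phi_T) \ge \phi_0/2$, and $\E[\min(\phi_T, 1-\phi_T)] \gtrsim \phi_0$, completing the proof.

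The main obstacle is this last concentration step. A vanilla Doob $L^2$ bound on the stopped martingale only gives $\P(|\phi_{\tau\wedge T} - \phi_0| \ge \phi_0/2\mid \Omega_T) \le O(1/\phi_0)$, which is useless when $\phi_0$ is exponentially small. The gain must come from the self-limiting structure $\|v_s\|^2 \lesssim \phi_s\,\|A_s\|_{\op}$: passing to a logarithmic/multiplicative scale for $\phi_t$, or equivalently using an exponential supermartingale against the stopped quadratic variation, removes the $1/\phi_0$ loss. This subtle multiplicative concentration is precisely the technical core of Lemma~31 of the ArXiv version of~\cite{DBLP:journals/corr/LeeV16a}.
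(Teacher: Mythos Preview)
The paper does not supply its own proof of this lemma; it is quoted directly from \cite{DBLP:journals/corr/LeeV16a}, so there is no in-paper argument to compare against beyond the citation.

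Your outline reproduces the stochastic-localization skeleton correctly: the pointwise martingale identity $dp_t(x)=p_t(x)\langle x-\mu_t,dW_t\rangle$, the self-limiting bound $\|v_t\|^2\le\phi_t(1-\phi_t)\|A_t\|_{\op}$ via Cauchy--Schwarz on $\cov_{p_t}(1_E,\langle u,x\rangle)$, the $T$-strong log-concavity of $p_T$ and the resulting Bakry--\'Emery Cheeger bound, and the reduction to showing $\E[\min(\phi_T,1-\phi_T)]\gtrsim\phi_0$ are all right and match the cited argument.

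The gap is exactly where you flag it, and it is genuine. The claim that a ``Freedman-type exponential martingale inequality, applied multiplicatively to $\phi_t/\phi_0$'' removes the $1/\phi_0$ loss does not hold: on your stopped interval the quadratic variation of $\phi_t/\phi_0$ is at most $(2/\phi_0)\int_0^{\tau\wedge T}\|A_s\|_{\op}\,ds\le 1/(32\phi_0)$ on $\Omega_T$, so the exponential bound yields probability at most $2\exp(-4\phi_0)$, vacuous as $\phi_0\to 0$; passing to $\log\phi_t$ has the same defect. You then defer the point back to ``Lemma~31 of \cite{DBLP:journals/corr/LeeV16a}'', which is the very statement under proof. (Also, ``on that event $\phi_T\le 3\phi_0/2$'' is unjustified as written; the event you derived was one-sided.) The actual fix in the cited proof, following Eldan, is not a multiplicative concentration trick at all: one invokes E.~Milman's theorem that the isoperimetric profile of a log-concave measure is concave, so it suffices to treat sets with $p(E)=1/2$, i.e.\ $\phi_0=1/2$. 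At $\phi_0=1/2$ the plain additive $L^2$ bound on the stopped martingale (quadratic variation $\le 1/256$ on $\Omega_T$) already gives $\P(\phi_T\in[1/4,3/4],\,\Omega_T)$ bounded below by an absolute constant, which is enough. Without that reduction the argument is incomplete.
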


To bound $\|A_{t}\|_{\op}$, we need a basic stochastic calculus rule
about $A_{t}$.
\begin{lem}[{\cite[Lemma 27 in arXiv ver 3]{DBLP:journals/corr/LeeV16a}}]
\label{lem:def-pt}The covariance $A_{t}$ satisfies 
\[
dA_{t}=\int_{\Rn}(x-\mu_{t})(x-\mu_{t})^{\top}\left((x-\mu_{t})^{\top}dW_{t}\right)p_{t}(x)dx-A_{t}^{2}dt.
\]
\end{lem}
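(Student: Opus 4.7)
The plan is to derive the SDE for $A_t$ by applying It\^o's formula to the family of densities $p_t(x)$ for each fixed $x$, and then differentiating $\mu_t$ and $A_t$ under the integral sign. Writing $p_t(x) = F_t(x)/Z_t$ with $F_t(x) = e^{c_t^\top x - (t/2)\|x\|_2^2}\,p(x)$ and $Z_t = \int F_t(y)\,dy$, and using $dc_t = dW_t + \mu_t\,dt$, It\^o's formula yields $dF_t(x) = F_t(x)\,x^\top dc_t$: the deterministic drift $-\tfrac12\|x\|_2^2\,dt$ is exactly cancelled by the quadratic-variation correction $\tfrac12\|x\|_2^2\,dt$. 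Integrating over $x$ gives $dZ_t = Z_t\bigl(\mu_t^\top dW_t + \|\mu_t\|_2^2\,dt\bigr)$. Applying It\^o's quotient rule to $F_t/Z_t$ and substituting $dc_t = dW_t + \mu_t\,dt$, all drift contributions cancel to give the clean martingale expression
\begin{equation*}
dp_t(x) = p_t(x)\,(x-\mu_t)^\top dW_t.
\end{equation*}

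Next I would differentiate $\mu_t = \int x\,p_t(x)\,dx$ under the integral. Using $\int(x-\mu_t)p_t\,dx = 0$ to simplify $\int x(x-\mu_t)^\top p_t\,dx = A_t$, this gives $d\mu_t = A_t\,dW_t$. Together with $dW_t\,dW_t^\top = I\,dt$ and the symmetry of $A_t$, the quadratic variation is $d\mu_t\,d\mu_t^\top = A_t^2\,dt$. Now decompose $A_t = \int xx^\top p_t\,dx - \mu_t\mu_t^\top$ and apply It\^o to each summand: the first contributes $\int xx^\top(x-\mu_t)^\top dW_t\,p_t(x)\,dx$, while the second contributes $d\mu_t\,\mu_t^\top + \mu_t\,d\mu_t^\top + A_t^2\,dt$.

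To assemble the identity, I would expand $(x-\mu_t)(x-\mu_t)^\top = xx^\top - x\mu_t^\top - \mu_t x^\top + \mu_t\mu_t^\top$ inside the target integrand and verify term-by-term that $\int x\mu_t^\top(x-\mu_t)^\top dW_t\,p_t\,dx = A_t\,dW_t\,\mu_t^\top = d\mu_t\,\mu_t^\top$ (and symmetrically for the transposed cross term), while the $\mu_t\mu_t^\top$ piece integrates to zero since $\int(x-\mu_t)p_t = 0$. The martingale parts then match exactly with the first summand above, and the only surviving drift is $-A_t^2\,dt$, inherited from the quadratic-variation term in the It\^o expansion of $\mu_t\mu_t^\top$. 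This is the claimed identity.

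The main obstacle I anticipate is the matrix-valued It\^o bookkeeping in the final assembly step. One has a matrix $xx^\top$ multiplied by a \emph{scalar} increment $(x-\mu_t)^\top dW_t$ inside an integral, and this has to be matched component-by-component against $d(\mu_t\mu_t^\top)$; in particular one must distinguish $d\mu_t\,\mu_t^\top$ from $\mu_t\,d\mu_t^\top$ and keep track of index placement. The subtlety is that the $-A_t^2\,dt$ correction arises entirely from the quadratic variation $A_t(dW_t dW_t^\top)A_t^\top = A_t^2\,dt$, which relies on symmetry of $A_t$; once this is tracked correctly, no further estimates are required and the lemma follows.
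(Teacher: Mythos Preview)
Your derivation is correct and is the standard It\^o-calculus computation for this lemma. Note, however, that the present paper does not actually give a proof: the lemma is simply quoted from \cite[Lemma~27]{DBLP:journals/corr/LeeV16a} and used as a black box. Your argument (computing $dp_t(x)=p_t(x)(x-\mu_t)^\top dW_t$, then $d\mu_t=A_t\,dW_t$, and finally assembling $dA_t$ from $d(\int xx^\top p_t)-d(\mu_t\mu_t^\top)$ with the $-A_t^2\,dt$ term coming from the quadratic variation $d\mu_t\,d\mu_t^\top$) is exactly the proof given in that reference, so there is nothing to compare.
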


We bound $\|A_{t}\|_{\op}$ using the potential $\Phi_{t}\defeq\tr A_{t}^{q}$.
We will use fractional $q$ and there is no simple formula of $d\Phi_{t}$.
The upper bound on $d\Phi_{t}$ is known (see \cite{eldan2013thin}).
For completeness, we give an alternative proof here. Our proof relies
on the following lemma about the smoothness of the trace function.
\begin{lem}[\cite{juditsky2008large}, Proposition 3.1]
\label{lem:twice_matrix_func_upper}Let $f$ be a twice differentiable
function on $(\alpha,\beta)$ such that for some $\theta_{\pm},\mu_{\pm}\in\R$,
for all $\alpha<a<b<\beta$, we have
\[
\theta_{-}\frac{f''(a)+f''(b)}{2}+\mu_{-}\leq\frac{f'(b)-f'(a)}{b-a}\leq\theta_{+}\frac{f''(a)+f''(b)}{2}+\mu_{+}.
\]
Then, for any matrix $X$ with eigenvalues in $(\alpha,\beta)$, we
have
\[
\theta_{-}\tr(f''(X)H^{2})+\mu_{-}\tr H^{2}\leq\left.\frac{\partial^{2}\tr f(X)}{\partial X^{2}}\right|_{H,H}\leq\theta_{+}\tr(f''(X)H^{2})+\mu_{+}\tr H^{2}.
\]
\end{lem}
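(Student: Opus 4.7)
The plan is to diagonalize $X$, compute the second directional derivative of $\tr f(X)$ via standard eigenvalue perturbation theory (i.e., the Daleckii--Krein formula), and then apply the scalar inequality entry by entry in the eigenbasis of $X$.

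First I would pick an orthonormal basis in which $X$ is diagonal with eigenvalues $\lambda_1,\dots,\lambda_n\in(\alpha,\beta)$, and write $H$ in that same basis. When the eigenvalues of $X$ are distinct, non-degenerate perturbation theory gives
\[
\mu_k(t) = \lambda_k + t H_{kk} + t^2 \sum_{j\neq k} \frac{H_{kj}^2}{\lambda_k-\lambda_j} + O(t^3)
\]
for the eigenvalues $\mu_k(t)$ of $X+tH$. Expanding $f(\mu_k(t))$ to second order in $t$ and summing over $k$ yields
\[
\left.\frac{\partial^{2}\tr f(X)}{\partial X^{2}}\right|_{H,H} = \sum_{i\neq j} \frac{f'(\lambda_i)-f'(\lambda_j)}{\lambda_i-\lambda_j}\, H_{ij}^2 + \sum_i f''(\lambda_i)\, H_{ii}^2.
\]
The case of repeated eigenvalues is handled by continuity of both sides in $X$ (perturb $X$ generically and pass to the limit).

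Next I would apply the hypothesis pointwise. For $i\neq j$, the assumption directly bounds the divided difference by $\theta_{+}(f''(\lambda_i)+f''(\lambda_j))/2+\mu_{+}$; letting $b\to a$ in the hypothesis shows the same bound also applies when $i=j$ (with the divided difference interpreted as $f''(\lambda_i)$). Substituting term by term and exploiting $H_{ij}^2 = H_{ji}^2$, the cross-term collapses:
\[
\sum_{i,j} \frac{f''(\lambda_i)+f''(\lambda_j)}{2} H_{ij}^2 = \sum_{i,j} f''(\lambda_i)\, H_{ij}^2 = \sum_i f''(\lambda_i)(H^2)_{ii} = \tr\bigl(f''(X)H^2\bigr),
\]
since $X$ (hence $f''(X)$) is diagonal in the chosen basis. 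Combined with $\sum_{i,j}H_{ij}^2=\tr H^2$, this gives the desired upper bound. The lower bound is obtained by identical manipulations using the left-hand scalar inequality.

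The main obstacle, such as it is, is cleanly deriving the Daleckii--Krein expansion and justifying the degenerate-eigenvalue case; that is textbook material and once it is in hand the remainder is a one-line algebraic reduction: apply the scalar hypothesis termwise, symmetrize using $H_{ij}=H_{ji}$, and recognize the resulting sums as the matrix traces appearing in the conclusion.
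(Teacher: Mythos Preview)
The paper does not prove this lemma at all; it is quoted verbatim as Prop.~3.1 from arXiv:0809.0813 and used as a black box. Your argument is correct and is in fact the standard route to such trace-function inequalities: diagonalize $X$, use the Daleckii--Krein formula to write the second directional derivative as $\sum_{i,j} f'^{[1]}(\lambda_i,\lambda_j)\,H_{ij}^2$ with the first divided difference $f'^{[1]}$, bound each divided difference by the scalar hypothesis (the diagonal case following by taking $b\to a$), and then symmetrize to recover $\tr(f''(X)H^2)$ and $\tr H^2$. The only point to be slightly careful about is that the limiting step for the diagonal terms implicitly uses continuity of $f''$; this is harmless here since in the paper's sole application $f(x)=x^q$ is smooth, and the cited reference likewise works under a $C^2$ assumption.
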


Now, we can use this to upper bound the derivative of $\Phi_{t}$.
\begin{lem}
For any $q>1$, we have that
\begin{align*}
d\Phi_{t}\leq & q\E_{x\sim p_{t}}(x-\mu_{t})^{\top}A_{t}^{q-1}(x-\mu_{t})(x-\mu_{t})^{\top}dW_{t}\\
 & +q(q-1)\cdot\E_{x,y\sim p_{t}}((x-\mu_{t})^{\top}(y-\mu_{t}))^{2}(x-\mu_{t})^{\top}A_{t}^{q-2}(y-\mu_{t})dt.
\end{align*}
\end{lem}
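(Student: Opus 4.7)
To establish the bound, apply Ito's formula to $\Phi_t = \tr f(A_t)$ with $f(x) = x^q$. Lemma \ref{lem:def-pt} decomposes $dA_t = dM_t - A_t^2\,dt$, where $dM_t = \int (x - \mu_t)(x - \mu_t)^\top \bigl((x-\mu_t)^\top dW_t\bigr) p_t(x)\,dx$ is the martingale part. The first-order contribution $\tr(f'(A_t)\,dA_t) = q\tr(A_t^{q-1}\,dA_t)$ then splits into the stochastic integral
\[
q\int (x-\mu_t)^\top A_t^{q-1}(x-\mu_t)\cdot(x-\mu_t)^\top dW_t\,p_t(x)\,dx = q\,\E_{x\sim p_t}\bigl[(x-\mu_t)^\top A_t^{q-1}(x-\mu_t)\,(x-\mu_t)^\top dW_t\bigr],
\]
which matches the first term of the claim, plus the drift $-q\tr(A_t^{q+1})\,dt \leq 0$, which may be discarded to form an upper bound.

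For the Ito (quadratic variation) correction $\tfrac{1}{2} D^2\tr f(A_t)[dM_t, dM_t]$, invoke Lemma \ref{lem:twice_matrix_func_upper} with $\theta_+ = 2$ and $\mu_+ = 0$. This choice is valid for $f(x) = x^q$, $q>1$: since $f''(x) = q(q-1)x^{q-2} \geq 0$ on $(0,\infty)$,
\[
\frac{f'(b) - f'(a)}{b-a} \;=\; \frac{1}{b-a}\int_a^b f''(x)\,dx \;\leq\; \max_{x\in[a,b]} f''(x) \;\leq\; f''(a)+f''(b) \;=\; 2\cdot\frac{f''(a)+f''(b)}{2}.
\]
Consequently, the Ito correction is at most $\tr(f''(A_t)(dM_t)^2) = q(q-1)\tr(A_t^{q-2}(dM_t)^2)$.

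It remains to evaluate the matrix quadratic variation entrywise. Writing $(dM_t)_{ij} = T_{ijk}\,dW_t^k$ with $T_{ijk} = \E_{z\sim p_t}[(z-\mu_t)_i(z-\mu_t)_j(z-\mu_t)_k]$ and using $dW_t^k\,dW_t^m = \delta_{km}\,dt$, a direct computation yields
\[
[(dM_t)^2]_{il} \;=\; \sum_{j,k} T_{ijk}T_{jlk}\,dt \;=\; \E_{x,y\sim p_t}\bigl[(x-\mu_t)_i(y-\mu_t)_l\,((x-\mu_t)^\top(y-\mu_t))^2\bigr]\,dt.
\]
Contracting against $A_t^{q-2}$ and using its symmetry to swap the roles of $x$ and $y$ produces the expression $q(q-1)\,\E_{x,y\sim p_t}[((x-\mu_t)^\top(y-\mu_t))^2\,(x-\mu_t)^\top A_t^{q-2}(y-\mu_t)]\,dt$, matching the second term of the claim.

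The main technical step is the uniform choice $\theta_+ = 2$ in Lemma \ref{lem:twice_matrix_func_upper}, which is what makes a single bound work for every $q > 1$; the alternative $\theta_+ = 1$ is only valid when $f''$ is convex, i.e.\ for $q\in(1,2)\cup[3,\infty)$. A minor subtlety is that when $A_t$ has zero eigenvalues and $q<2$, the operator $A_t^{q-2}$ must be interpreted via the standard regularization $A_t \leadsto A_t + \varepsilon I$ with $\varepsilon\to 0^+$; the identities above are continuous in $\varepsilon$, so the stated inequality passes to the limit.
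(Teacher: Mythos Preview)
Your proof is correct and follows essentially the same approach as the paper: It\^o's formula applied to $\tr A_t^q$, the first-order drift $-q\tr A_t^{q+1}$ dropped, and Lemma~\ref{lem:twice_matrix_func_upper} with $\theta_+=2$, $\mu_+=0$ used to bound the second-order term. Your justification of $\theta_+=2$ via $\frac{1}{b-a}\int_a^b f''\le \max_{[a,b]} f''\le f''(a)+f''(b)$ (using monotonicity of $x\mapsto x^{q-2}$) is exactly the ``easy to see'' step the paper leaves implicit, and your entrywise computation of $(dM_t)^2$ is the same as the paper's identity $\sum_i \tr(A_t^{q-2}Z_i^2)=\E_{x,y}[((x-\mu_t)^\top(y-\mu_t))^2(x-\mu_t)^\top A_t^{q-2}(y-\mu_t)]$.
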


\begin{proof}
By Lemma \ref{lem:def-pt}, we have that
\begin{align*}
dA_{t} & =\int_{\Rn}(x-\mu_{t})(x-\mu_{t})^{\top}\left((x-\mu_{t})^{\top}dW_{t}\right)p_{t}(x)dx-A_{t}^{2}dt\\
 & =\sum_{i}Z_{i}\cdot dW_{t,i}-A_{t}^{2}dt
\end{align*}
with $Z_{i}\defeq\E_{x\sim p_{t}}(x-\mu_{t})(x-\mu_{t})^{\top}(x-\mu_{t})_{i}$.
By Itô's formula, we have that
\[
d\Phi_{t}=\left.\frac{\partial\tr A_{t}^{q}}{\partial A_{t}}\right|_{dA_{t}}+\frac{1}{2}\sum_{i}\left.\frac{\partial^{2}\tr A_{t}^{q}}{\partial A_{t}^{2}}\right|_{Z_{i},Z_{i}}dt.
\]
For the first-order term, we have
\begin{align*}
\left.\frac{\partial\tr A_{t}^{q}}{\partial A_{t}}\right|_{dA_{t}} & =q\cdot\tr A_{t}^{q-1}dA_{t}\\
 & =q\E_{x\sim p_{t}}(x-\mu_{t})^{\top}A_{t}^{q-1}(x-\mu_{t})(x-\mu_{t})^{\top}dW_{t}-q\tr A_{t}^{q+1}.
\end{align*}
For the second-order term, we use Lemma \ref{lem:twice_matrix_func_upper}
with $f(x)=x^{q}$. It is easy to see that $\frac{f'(b)-f'(a)}{b-a}\leq f''(a)+f''(b).$
Hence, we can use Lemma \ref{lem:twice_matrix_func_upper} with $\theta_{+}=2$
and $\mu_{+}=0$ to get
\[
\left.\frac{\partial^{2}\tr A_{t}^{q}}{\partial A_{t}^{2}}\right|_{Z_{i},Z_{i}}\leq2q(q-1)\tr(A_{t}^{q-2}Z_{i}^{2}).
\]
Combining the first and second-order terms, we have
\begin{align*}
d\Phi_{t}\leq & q\E_{x\sim p_{t}}(x-\mu_{t})^{\top}A_{t}^{q-1}(x-\mu_{t})(x-\mu_{t})^{\top}dW_{t}\\
 & +q(q-1)\cdot\sum_{i}\tr(A_{t}^{q-2}Z_{i}^{2})dt\\
= & q\E_{x\sim p_{t}}(x-\mu_{t})^{\top}A_{t}^{q-1}(x-\mu_{t})(x-\mu_{t})^{\top}dW_{t}\\
 & +q(q-1)\cdot\E_{x,y\sim p_{t}}((x-\mu_{t})^{\top}(y-\mu_{t}))^{2}(x-\mu_{t})^{\top}A_{t}^{q-2}(y-\mu_{t})dt.
\end{align*}
\end{proof}
To analyze the stochastic inequality for $d\Phi_{t}$, we introduce
a 3-Tensor.
\begin{defn}[3-Tensor]
For an isotropic logconcave distribution $p$ in $\Rn$ and symmetric
matrices $A,B$ and $C$, define
\begin{align*}
T(A,B,C) & =\sup_{\text{isotropic log-concave }p}\E_{x,y\sim p}(x^{\top}Ay)(x^{\top}By)(x^{\top}Cy).
\end{align*}
\end{defn}

Using the definition above, we can simplify the upper bound of $\Phi_{t}$
as follows:
\begin{equation}
d\Phi_{t}\leq q\E_{x\sim p_{t}}(x-\mu_{t})^{\top}A_{t}^{q-1}(x-\mu_{t})(x-\mu_{t})^{\top}dW_{t}+q(q-1)\cdot T(A_{t}^{q-1},A_{t},A_{t}).\label{eq:dPhi_bound}
\end{equation}
To further bounding $d\Phi_{t}$, we need following inequalities about
logconcave distributions:
\begin{lem}[{\cite[Lemma 32 in arXiv ver 3]{DBLP:journals/corr/LeeV16a}}]
\label{lem:inq1}Let $p$ be a logconcave density with mean $\mu$
and covariance $A$. For any positive semi-definite matrix $C$, we
have that
\[
\|\E_{x\sim p}(x-\mu)(x-\mu)^{\top}C(x-\mu)\|_{2}\lesssim\|A\|_{\op}^{1/2}\tr(A^{1/2}CA^{1/2}).
\]
\end{lem}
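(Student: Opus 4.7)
\textbf{Proof proposal for Lemma~\ref{lem:inq1}.}

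The plan is to reduce the operator norm on the left to a one-dimensional expectation via duality, then apply Cauchy--Schwarz, and finally control a fourth moment using the sub-exponential concentration for logconcave measures (Lemma~\ref{lem:volume_outside_ball}).

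First, translate so that $\mu=0$; neither side depends on the mean. The left-hand side is $\|v\|_2$ where $v\defeq\E_{x\sim p}x\,(x^{\top}Cx)\in\Rn$. By duality,
\[
\|v\|_{2}=\sup_{\|u\|_{2}=1}\E_{x\sim p}(u^{\top}x)(x^{\top}Cx).
\]
For a fixed unit vector $u$, Cauchy--Schwarz gives
\[
\bigl|\E(u^{\top}x)(x^{\top}Cx)\bigr|\leq\sqrt{\E(u^{\top}x)^{2}}\cdot\sqrt{\E(x^{\top}Cx)^{2}}.
\]
The first factor is at most $\sqrt{u^{\top}Au}\leq\|A\|_{\op}^{1/2}$.

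The main work is to show $\E(x^{\top}Cx)^{2}\lesssim(\tr(A^{1/2}CA^{1/2}))^{2}=(\tr(CA))^{2}$. Set $y\defeq C^{1/2}x$, so that $y$ is logconcave (logconcavity is preserved under linear maps) with mean $0$ and covariance $\Sigma\defeq C^{1/2}AC^{1/2}$, and $x^{\top}Cx=\|y\|_{2}^{2}$. (If $C$ is singular we work on its range; nothing below depends on invertibility.) By Lemma~\ref{lem:volume_outside_ball} applied to $y$,
\[
\Pr\bigl(\|y\|_{2}\geq t\sqrt{\tr\Sigma}\bigr)\leq e^{-t+1}\qquad\text{for all }t\geq1,
\]
so $\|y\|_{2}^{2}$ is sub-gamma with mean scale $\tr\Sigma=\tr(CA)$. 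Integrating the tail gives
\[
\E\|y\|_{2}^{4}=\int_{0}^{\infty}2s\,\Pr(\|y\|_{2}^{2}\geq s)\,ds\lesssim(\tr(CA))^{2}\cdot\Bigl(1+\int_{1}^{\infty}2u\,e^{-\sqrt{u}+1}\,du\Bigr)\lesssim(\tr(CA))^{2},
\]
using the substitution $s=u\cdot\tr(CA)$ and the fact that the integral against the stretched exponential is an absolute constant.

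Combining, $|\E(u^{\top}x)(x^{\top}Cx)|\lesssim\|A\|_{\op}^{1/2}\tr(A^{1/2}CA^{1/2})$, uniformly over unit $u$. Taking the supremum over $u$ yields the claimed bound $\|v\|_{2}\lesssim\|A\|_{\op}^{1/2}\tr(A^{1/2}CA^{1/2})$. The one nontrivial step is the fourth-moment bound for $\|y\|_{2}^{2}$; this is where logconcavity is used, via Lemma~\ref{lem:volume_outside_ball}. Everything else is bookkeeping (reduction to $\mu=0$, duality, and Cauchy--Schwarz), so I do not expect any genuine obstacle beyond verifying the tail integral and the change of variables $y=C^{1/2}x$.
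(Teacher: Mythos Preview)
The paper does not prove this lemma; it is quoted from an external reference (Lee--Vempala, arXiv v3, Lemma~32) and used as a black box in Section~\ref{sec:KLS}. So there is no in-paper proof to compare against.

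Your argument is correct. The reduction to $\mu=0$, the duality step $\|v\|_{2}=\sup_{\|u\|_{2}=1}\E(u^{\top}x)(x^{\top}Cx)$, and Cauchy--Schwarz are routine; the only substantive point is the fourth-moment bound $\E(x^{\top}Cx)^{2}\lesssim(\tr(CA))^{2}$. Your derivation of that bound via Lemma~\ref{lem:volume_outside_ball} applied to the logconcave pushforward $y=C^{1/2}x$ (covariance $C^{1/2}AC^{1/2}$, trace $\tr(CA)$) is sound: the tail $\Pr(\|y\|_{2}\geq t\sqrt{\tr(CA)})\leq e^{-t+1}$ integrates to give $\E\|y\|_{2}^{4}\lesssim(\tr(CA))^{2}$, since $\int_{1}^{\infty}u\,e^{-\sqrt{u}}\,du$ is an absolute constant. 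The handling of singular $C$ by restricting to the range of $C^{1/2}$ is also fine, as the marginal on that subspace is still logconcave.
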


\begin{lem}[{\cite[Lemma 41]{jiang2019generalized}}]
\label{lem:inq2}For any $0\leq\alpha\leq1$, $A\succeq0$, and $C\succeq0$,
we have $T(A^{\alpha},A^{1-\alpha},C)\leq T(A,I,C)$.
\end{lem}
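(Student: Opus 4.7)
My plan is to establish the inequality pointwise in the distribution $p$ and take the supremum over isotropic log-concave $p$ only at the end. Since $A \succeq 0$, the matrices $A^\alpha$, $A^{1-\alpha}$, and $A$ are simultaneously diagonalized by the same orthogonal $U$, so the change of variable $x \mapsto U^\top x$, $y \mapsto U^\top y$ (which preserves isotropy and log-concavity of $p$, and only replaces $C$ by the still-PSD matrix $U^\top C U$) lets me assume without loss of generality $A = \Lambda = \diag(\lambda_1, \dots, \lambda_n)$. In this basis, expanding the bilinear forms entrywise,
\[
\E_p (x^\top A^\alpha y)(x^\top A^{1-\alpha} y)(x^\top C y) = \sum_{i,j} \lambda_i^\alpha\, \lambda_j^{1-\alpha}\, M_{ij}, \quad M_{ij} \defeq \E_p\bigl[x_i y_i x_j y_j (x^\top C y)\bigr].
\]

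The crux is to show $M_{ij} \ge 0$ for every $i, j$. Using independence of $x$ and $y$ under $p \otimes p$ and introducing the third-moment vector $\nu_{ij} \in \R^n$ with entries $(\nu_{ij})_k = \E_p[x_i x_j x_k]$,
\[
M_{ij} = \sum_{k,\ell} C_{k\ell}\, \E_p[x_i x_j x_k]\, \E_p[y_i y_j y_\ell] = \nu_{ij}^\top\, C\, \nu_{ij} \ge 0
\]
because $C \succeq 0$. With entrywise nonnegativity of $M$ established, I apply the weighted AM-GM (Young's) inequality $\lambda_i^\alpha \lambda_j^{1-\alpha} \le \alpha \lambda_i + (1-\alpha)\lambda_j$, valid for $\lambda_i, \lambda_j \ge 0$ and $\alpha \in [0,1]$, term by term, and then use the symmetry $M_{ij} = M_{ji}$ to collapse the two resulting sums:
\[
\sum_{i,j}\lambda_i^\alpha \lambda_j^{1-\alpha} M_{ij} \le \alpha \sum_{i,j}\lambda_i M_{ij} + (1-\alpha)\sum_{i,j}\lambda_j M_{ij} = \sum_{i,j}\lambda_i M_{ij} = \E_p (x^\top A y)(x^\top y)(x^\top C y).
\]

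This inequality is pointwise in $p$, so taking the supremum over isotropic log-concave $p$ on both sides yields exactly $T(A^\alpha, A^{1-\alpha}, C) \le T(A, I, C)$. The main obstacle I anticipate is the positivity step $M_{ij} \ge 0$: without it, the termwise AM-GM bound would not survive summation. The sum-of-squares factorization $M_{ij} = \nu_{ij}^\top C\, \nu_{ij}$ relies critically on $x$ and $y$ being independent copies of the same distribution, and this structural feature is what ties the tensor $T(\cdot,\cdot,C)$ to a quadratic form against $C$. The boundary cases $\alpha \in \{0, 1\}$ and rank-deficient $A$ require no separate argument, since Young's inequality holds verbatim (with $0^0 = 1$ absent) on the nonnegative reals.
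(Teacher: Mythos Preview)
Your argument is correct. The paper does not supply its own proof of this lemma---it simply cites \cite{jiang2019generalized}---and your approach (diagonalize $A$, expand to $\sum_{i,j}\lambda_i^\alpha\lambda_j^{1-\alpha}M_{ij}$, observe $M_{ij}=\nu_{ij}^\top C\,\nu_{ij}\ge 0$ via the independence of $x$ and $y$, then apply Young's inequality termwise and symmetrize using $M_{ij}=M_{ji}$) is precisely the standard direct proof. The only cosmetic remark is that the change of basis is being used to establish the \emph{pointwise} inequality for a fixed $p$ and $C$ (both sides transform identically), after which the supremum over $p$ is taken in the original coordinates; you state this, but it is worth emphasizing that the ``WLOG $A$ diagonal'' is local to the computation and does not alter the matrices $A,C$ appearing in the final $T(\cdot,\cdot,\cdot)$.
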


\begin{lem}[{\cite[Lemma 40]{jiang2019generalized}}]
\label{lem:inq3}Suppose that $\psi_{n}\leq\alpha n^{\beta}$ for
all $n$ with some fixed $0\leq\beta\leq\frac{1}{2}$ and $\alpha\geq1$.
For any two symmetric matrices $A$ and $B$, we have
\[
T(A,B,I)\lesssim\alpha^{2}\log n\cdot\|A\|_{1}\cdot\|B\|_{1/(2\beta)}.
\]
\end{lem}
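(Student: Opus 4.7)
The plan is to bound $T(A,B,I) = \sup_p \E_{x,y \sim p}[(x^{\top}Ay)(x^{\top}By)(x^{\top}y)]$ by reducing it to a matrix Hölder inequality and then invoking the KLS hypothesis $\psi_n \leq \alpha n^{\beta}$ through the stochastic localization framework already set up in the excerpt. Since the supremum is over all isotropic logconcave $p$ on $\R^n$ and both sides of the claimed inequality are invariant under simultaneous orthogonal change of basis, I first rotate coordinates to diagonalize $A = \sum_i a_i e_i e_i^{\top}$, and after splitting $A = A_+ - A_-$ (neither part exceeding $\|A\|_1$ in trace norm), reduce to $a_i \geq 0$ with $\sum_i a_i = \|A\|_1$.

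Conditioning on $y$, the inner expectation is the third-moment trilinear form $\tilde M(Ay, By, y) = \E_{x\sim p}[(x^{\top}Ay)(x^{\top}By)(x^{\top}y)]$, so that $T(A,B,I) = \sum_i a_i \E_{y\sim p}[\,y_i\,\tilde M(e_i, By, y)\,]$. Viewed as a linear functional of $B$ via the fourth-moment tensor of $p$, each summand equals $\tr(B\,\Sigma_i)$ for an effective PSD matrix $\Sigma_i$ constructed from moments of $p$ along the direction $e_i$. Matrix Hölder with dual exponents $1/(2\beta)$ and $1/(1-2\beta)$ then gives $|\tr(B\Sigma_i)| \leq \|B\|_{1/(2\beta)}\cdot \|\Sigma_i\|_{1/(1-2\beta)}$, reducing the problem to the bound $\sum_i a_i\,\|\Sigma_i\|_{1/(1-2\beta)} \lesssim \alpha^2 \log n \cdot \|A\|_1$.

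The heart of the argument is the latter Schatten-norm bound, and this is where I would invoke stochastic localization (Definition \ref{def:A}) applied to $p$. Using It\^o calculus on the localized density $p_t$ (as in the derivation of $d\Phi_t$ preceding the lemma), the matrices $\Sigma_i$ can be written as a time integral against the moving covariance $A_t$. The KLS hypothesis, combined with Lemma \ref{lem:boundAgivesKLS} (used in the contrapositive direction), controls $\int_0^T \|A_t\|_{\op}\,dt$ on a high-probability event in terms of $\alpha^2 n^{2\beta}$. Lemma \ref{lem:inq1} then bounds the operator norm of the third-moment tensor contracted against PSD matrices, and interpolating between the trivial trace-norm bound and this operator-norm bound via a dyadic splitting of the spectrum of $A_t$ yields the Schatten $1/(1-2\beta)$-norm bound on each $\Sigma_i$, with a multiplicative $\log n$ from the number of dyadic levels. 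The main obstacle is precisely this interpolation: a naive estimate gives either $\|B\|_{\op}$ (which would require the full KLS conjecture) or $\|B\|_1$ (too crude), and the careful matching of the dyadic scales with the time-slices of the localization process is what produces the exponent $1/(2\beta)$ and the unavoidable logarithmic factor.
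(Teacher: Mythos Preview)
The paper does not prove this lemma; it is simply quoted from \cite{jiang2019generalized}, so there is no in-paper argument to compare against. Evaluating your proposal on its own merits, there is a genuine gap in the crucial step.

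Your reduction via diagonalizing $A$ and writing each summand as $\tr(B\Sigma_i)$ followed by matrix H\"older is a reasonable opening, but the core of the argument---step~5, where the KLS hypothesis enters---does not work as stated. Lemma~\ref{lem:boundAgivesKLS} is a one-way implication: control of $\int_0^T\|A_s\|_{\op}\,ds$ implies a bound on $\psi_p$. Its contrapositive says only that if $\psi_p$ is \emph{large} then the integral is \emph{not} controlled with probability $\ge 3/4$; it gives no upper bound on $\int_0^T\|A_s\|_{\op}\,ds$ from the assumption that $\psi_n$ is small. Stochastic localization is a tool for \emph{proving} KLS-type inequalities, not for \emph{extracting consequences} from an assumed bound $\psi_n\le\alpha n^{\beta}$. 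Moreover, the claimed ``time-integral representation'' of the fourth-moment matrices $\Sigma_i$ in terms of the moving covariance $A_t$ is never written down, and there is no obvious identity of that form: $\Sigma_i$ is a static fourth-moment object of the original $p$, whereas $A_t$ is the second moment of a random tilted density.

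The standard way to use the hypothesis $\psi_n\le\alpha n^{\beta}$ is through the Poincar\'e inequality (equivalent to the Cheeger constant up to constants for logconcave measures), applied to marginals on appropriately chosen low-dimensional subspaces. Projecting onto the span of the top eigenvectors of $B$ gives a $k$-dimensional logconcave marginal with KLS constant $\lesssim\alpha k^{\beta}$, and optimizing over $k$ (or summing over dyadic spectral bands of $B$) is what produces the Schatten-$\frac{1}{2\beta}$ norm of $B$ and the $\log n$ factor. Your outline never invokes Poincar\'e or any marginal argument, so the mechanism by which the exponent $1/(2\beta)$ is supposed to emerge is missing.
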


Using the lemmas above, we have the following:
\begin{lem}
\label{lem:Phi_upper}Suppose that $\psi_{n}=O(n^{\beta})$ for all
$n$ with some fixed $0\leq\beta\leq\frac{1}{2}$. For any $q=\frac{1}{2\beta}$,
we have $d\Phi_{t}\le\delta_{t}dt+v_{t}^{\top}dW_{t}$ with $\delta_{t}\lesssim\frac{\log n}{\beta^{2}}\cdot\Phi_{t}^{1+2\beta}$
and $\|v_{t}\|_{2}\lesssim\frac{1}{\beta}\cdot\Phi_{t}^{1+\beta}$.
\end{lem}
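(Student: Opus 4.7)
The plan is to start from the stochastic inequality \eqref{eq:dPhi_bound}, which already decomposes $d\Phi_t$ into a drift and a martingale part. Reading off coefficients gives
\[
v_t = q\,\E_{x\sim p_t}(x-\mu_t)(x-\mu_t)^\top A_t^{q-1}(x-\mu_t),\qquad \delta_t = q(q-1)\,T(A_t^{q-1},A_t,A_t),
\]
so the task reduces to bounding $\|v_t\|_2$ and the 3-tensor appearing in $\delta_t$ in terms of $\Phi_t = \tr A_t^q$.

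For the volatility, I would apply Lemma \ref{lem:inq1} directly with $C = A_t^{q-1}$. This yields $\|v_t\|_2 \lesssim q\,\|A_t\|_{\op}^{1/2}\,\tr(A_t^{1/2}A_t^{q-1}A_t^{1/2}) = q\,\|A_t\|_{\op}^{1/2}\,\Phi_t$. Since the top eigenvalue is controlled by the $q$-norm, $\|A_t\|_{\op} \le (\tr A_t^q)^{1/q} = \Phi_t^{1/q} = \Phi_t^{2\beta}$, and plugging in together with $q = 1/(2\beta)$ gives $\|v_t\|_2 \lesssim \frac{1}{2\beta}\Phi_t^{1+\beta}$, matching the claimed bound up to the implicit constant.

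For the drift, the natural route is to first massage $T(A_t^{q-1},A_t,A_t)$ into a form where Lemma \ref{lem:inq3} applies, then evaluate. Writing $A_t^{q-1} = (A_t^q)^{(q-1)/q}$ and $A_t = (A_t^q)^{1/q}$, I invoke Lemma \ref{lem:inq2} with base matrix $A_t^q$, exponent $\alpha = (q-1)/q \in [0,1]$ (legal because $q = 1/(2\beta) \ge 1$), and $C = A_t$, obtaining
\[
T(A_t^{q-1},A_t,A_t) \le T(A_t^q, I, A_t) = T(A_t^q, A_t, I),
\]
by the symmetry of $T$ in its arguments. Then Lemma \ref{lem:inq3} gives $T(A_t^q,A_t,I) \lesssim \log n\,\|A_t^q\|_1\,\|A_t\|_{1/(2\beta)} = \log n\,\Phi_t\,\|A_t\|_q = \log n\,\Phi_t^{1 + 1/q} = \log n\,\Phi_t^{1+2\beta}$. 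Multiplying by the prefactor $q(q-1) \le q^2 = 1/(4\beta^2)$ recovers $\delta_t \lesssim \frac{\log n}{\beta^2}\Phi_t^{1+2\beta}$.

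The main obstacle is the right choice of interpolation exponent in Lemma \ref{lem:inq2}: one must split $A_t^{q-1}$ and $A_t$ as fractional powers of the common base $A_t^q$ so that Lemma \ref{lem:inq2} is applicable and the subsequent Lemma \ref{lem:inq3} sees exactly the norms $\|A_t^q\|_1$ and $\|A_t\|_q$ that collapse to powers of $\Phi_t$. Once that normalization is chosen, everything else is bookkeeping with the identities $\|A_t^s\|_p = \Phi_t^{s/q}$ for appropriate $s,p$, and the constraint $\beta \le 1/2$ (i.e.\ $q \ge 1$) enters only to guarantee the exponent $(q-1)/q$ lies in $[0,1]$.
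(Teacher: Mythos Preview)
Your proposal is correct and follows essentially the same route as the paper: apply Lemma~\ref{lem:inq1} with $C=A_t^{q-1}$ for the martingale term, and for the drift use Lemma~\ref{lem:inq2} (with base $A_t^q$ and exponent $(q-1)/q$) followed by Lemma~\ref{lem:inq3}. If anything, you are more explicit about the interpolation step and you correctly record $\tr A_t^q$ in the volatility bound, where the paper has a typo writing $\tr A_t$.
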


\begin{proof}
From (\ref{eq:dPhi_bound}), we have
\begin{align*}
\delta_{t} & \leq q(q-1)\cdot T(A_{t}^{q-1},A_{t},A_{t})\\
 & \leq q(q-1)\cdot T(A_{t}^{q},A_{t},I)\\
 & \lesssim q(q-1)\cdot\log n\cdot\tr A_{t}^{q}\cdot(\tr A_{t}^{1/(2\beta)})^{2\beta}\\
 & \lesssim\frac{\log n}{\beta^{2}}\cdot\Phi_{t}^{1+2\beta}
\end{align*}
where we used Lemma \ref{lem:inq2} in the second inequality and Lemma
\ref{lem:inq3} at the end. From (\ref{eq:dPhi_bound}) and Lemma
\ref{lem:inq1}, we have
\begin{align*}
\|v_{t}\|_{2} & \defeq q\|\E_{x\sim p_{t}}(x-\mu_{t})^{\top}A_{t}^{q-1}(x-\mu_{t})(x-\mu_{t})\|_{2}\\
 & \lesssim q\|A_{t}\|_{\op}^{1/2}\cdot\tr A_{t}\lesssim\frac{1}{\beta}\cdot\Phi_{t}^{1+\beta}.
\end{align*}
\end{proof}
Finally, we use the next lemma to bound stochastic inequality and apply it in the proof of the next theorem.
\begin{lem}[{\cite[Lemma 35]{jiang2019generalized}}]
\label{lem:stochastic_inq}Let $\Phi_{t}$ be a stochastic process
such that $\Phi_{0}\leq\frac{U}{2}$ and $d\Phi_{t}=\delta_{t}dt+v_{t}^{\top}dW_{t}$.
Let $T>0$ be some fixed time, $U>0$ be some target upper bound,
and $f$ and $g$ be some auxiliary functions such that for all $0\leq t\leq T$
\begin{enumerate}
\item $\delta_{t}\leq f(\Phi_{t})$ and $\|v_{t}\|_{2}\leq g(\Phi_{t})$,
\item Both $f(\cdot)$ and $g(\cdot)$ are non-negative non-decreasing functions,
\item $f(U)\cdot T\leq\frac{U}{8}$ and $g(U)\cdot\sqrt{T}\leq\frac{U}{8}$.
\end{enumerate}
Then, we have that $\P\left[\max_{t\in[0,T]}\Phi_{t}\geq U\right]\leq0.01$.
\end{lem}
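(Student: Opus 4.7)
The plan is a stopping-time argument combined with a martingale maximal inequality. Let $\tau = \inf\{t \ge 0 : \Phi_t \ge U\}$ and decompose the semimartingale as $\Phi_t = \Phi_0 + D_t + M_t$, where $D_t = \int_0^t \delta_s\, ds$ is the drift part and $M_t = \int_0^t v_s^\top dW_s$ is a continuous local martingale with quadratic variation $\langle M\rangle_t = \int_0^t \|v_s\|_2^2\, ds$. The goal is to show $\P(\tau \le T) \le 0.01$.

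By the stopping-time definition, $\Phi_t \le U$ for all $t \le \tau$, and monotonicity of $f$ and $g$ then gives $\delta_t \le f(U)$ and $\|v_t\|_2 \le g(U)$ on $[0, \tau]$. Combining this with the hypotheses $f(U)T \le U/8$ and $g(U)\sqrt{T} \le U/8$, one obtains $D_{t \wedge \tau} \le U/8$ for all $t \le T$ together with the deterministic bound $\langle M\rangle_{T \wedge \tau} \le U^2/64$. Since $\Phi_0 \le U/2$ and paths of $\Phi$ are continuous, on the event $\{\tau \le T\}$ one must have $M_\tau \ge U - U/2 - U/8 = 3U/8$, so
\[
\P(\tau \le T) \;\le\; \P\!\left(\sup_{t \in [0, T]} |M_{t \wedge \tau}| \ge \tfrac{3U}{8}\right).
\]
I would then apply a maximal inequality to the stopped continuous martingale $M^\tau$, whose quadratic variation is bounded deterministically by $(U/8)^2$.

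The main obstacle is getting the constant $0.01$ on the nose: the raw Doob $L^2$ maximal inequality only yields $(U/8)^2/(3U/8)^2 = 1/9$, which is too weak. To push the probability down to $0.01$, I would invoke an exponential (Bernstein/Azuma-type) martingale tail bound for continuous local martingales with bounded quadratic variation, of the form $\P(\sup_{t} |M^\tau_t| \ge \lambda) \le 2\exp\!\left(-\lambda^2/(2\langle M^\tau\rangle_\infty)\right)$, which gives roughly $2\exp(-9/2) < 0.01$. A gentler alternative, if one prefers to avoid exponential inequalities, is to partition $[0, T]$ into many short subintervals, apply Doob's $L^2$ inequality on each, and take a union bound; this costs only polylogarithmic factors and can achieve the target constant after tightening the hypothesis constants from $U/8$ to something smaller.
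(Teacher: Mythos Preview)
The paper does not prove this lemma; it is quoted from \cite[Lemma 35]{jiang2019generalized} and used as a black box in the proof of Theorem~\ref{thm:KLS_ani}. Your stopping-time decomposition followed by an exponential martingale tail bound is exactly the standard argument (and is essentially what the cited source does): stop at $\tau=\inf\{t:\Phi_t\ge U\}$, use monotonicity of $f,g$ and $\Phi_s\le U$ on $[0,\tau]$ to bound the drift $D_{t\wedge\tau}\le f(U)T\le U/8$ and the quadratic variation $\langle M\rangle_{T\wedge\tau}\le g(U)^2T\le(U/8)^2$, deduce $M_\tau\ge 3U/8$ on $\{\tau\le T\}$, and finish with the Gaussian tail bound for a continuous martingale with deterministically bounded bracket.

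One arithmetic slip: with $\lambda=3U/8$ and $\sigma^2=(U/8)^2$, the one-sided exponential bound gives $\exp\!\bigl(-\lambda^2/(2\sigma^2)\bigr)=e^{-9/2}\approx 0.0111$, and your stated two-sided bound $2e^{-9/2}\approx 0.022$ is certainly not below $0.01$. This is harmless for two reasons. First, only the one-sided event $M_\tau\ge 3U/8$ matters, so the factor of $2$ is unnecessary. Second, the specific constant $0.01$ is immaterial in the application: Lemma~\ref{lem:boundAgivesKLS} only needs probability at least $3/4$, so any bound strictly below $1/4$ suffices. If you want the literal $0.01$, either sharpen the hypothesis constants (e.g., replace $U/8$ by $U/10$) or, since the lemma is cited rather than proved here, simply note that the proof in \cite{jiang2019generalized} handles the constants.
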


\begin{thm}
\label{thm:KLS_ani}Suppose that $\psi_{n}=O(n^{\beta})$ for all
$n$ with some fixed $0\leq\beta\leq\frac{1}{2}$. Then, for any logconcave
distribution $p$ with covariance matrix $A$, we have that
\[
\psi_{p}\lesssim\frac{\sqrt{\log n}}{\beta}\|A\|_{\frac{1}{2\beta}}^{1/2}.
\]
\end{thm}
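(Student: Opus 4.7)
The plan is to apply the stochastic localization framework already laid out: by Lemma~\ref{lem:boundAgivesKLS}, it suffices to produce a time $T > 0$ such that $\int_0^T \|A_s\|_{\op}\,ds \leq 1/64$ with probability at least $3/4$, whence $\psi_p = O(T^{-1/2})$. So the whole task reduces to controlling the operator norm of $A_t$ along the process, for which the natural handle is the potential $\Phi_t = \tr A_t^q$ with $q = 1/(2\beta)$, since this is precisely the choice for which $\Phi_t^{1/q} \geq \|A_t\|_{\op}$ and $\Phi_0 = \|A\|_{1/(2\beta)}^{q}$ is exactly the quantity appearing in the target bound.

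First, I would read off from Lemma~\ref{lem:Phi_upper} the stochastic inequality $d\Phi_t \leq \delta_t\,dt + v_t^\top dW_t$ with $\delta_t \lesssim (\log n/\beta^2)\Phi_t^{1+2\beta}$ and $\|v_t\|_2 \lesssim (1/\beta)\Phi_t^{1+\beta}$, then feed this into Lemma~\ref{lem:stochastic_inq} with target ceiling $U = 2\Phi_0$. The two conditions $f(U)T \leq U/8$ and $g(U)\sqrt{T}\leq U/8$ become, respectively, $T \lesssim \beta^2/(\log n \cdot U^{2\beta})$ and $T \lesssim \beta^2/U^{2\beta}$; the first is binding. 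Substituting $U = 2\Phi_0 = \Theta(\|A\|_{1/(2\beta)}^{q})$ and using $2\beta q = 1$, this gives the admissible choice
\begin{equation*}
T \;=\; c\cdot\frac{\beta^2}{\log n \cdot \|A\|_{1/(2\beta)}}
\end{equation*}
for a sufficiently small absolute constant $c$. By Lemma~\ref{lem:stochastic_inq}, with probability at least $0.99$ the bound $\Phi_t \leq U$ holds throughout $[0,T]$.

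Next, on this event, $\|A_t\|_{\op} \leq \Phi_t^{1/q} \leq U^{1/q} = O(\|A\|_{1/(2\beta)})$, so
\begin{equation*}
\int_0^T \|A_s\|_{\op}\,ds \;\leq\; T \cdot O(\|A\|_{1/(2\beta)}) \;\lesssim\; \frac{\beta^2}{\log n} \;\leq\; \frac{1}{64},
\end{equation*}
where the last step uses $\beta \leq 1/2$ together with the freedom to absorb a constant factor into the choice of $c$ (so we re-pick $c$ small enough to make the total integral strictly below $1/64$). This verifies the hypothesis of Lemma~\ref{lem:boundAgivesKLS}, which then yields
\begin{equation*}
\psi_p \;=\; O\!\left(T^{-1/2}\right) \;=\; O\!\left(\tfrac{\sqrt{\log n}}{\beta}\,\|A\|_{1/(2\beta)}^{1/2}\right),
\end{equation*}
as desired.

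The only real subtlety, and the step where I expect to spend care, is matching up the two constraints in Lemma~\ref{lem:stochastic_inq} with the choice $q = 1/(2\beta)$ so that (i) the diffusion term $g(U)\sqrt{T}$ does not dominate, and (ii) the resulting $T$ yields exactly the $\|A\|_{1/(2\beta)}^{-1}$ scaling that makes $T\cdot\|A_t\|_{\op}$ collapse to a constant independent of $A$. Every other piece, including the tensor bound $T(A,B,I) \lesssim \alpha^2\log n \cdot \|A\|_1 \|B\|_{1/(2\beta)}$ used to derive $\delta_t$, is already packaged in Lemmas~\ref{lem:inq1}--\ref{lem:Phi_upper}, so no new inequalities are needed beyond careful bookkeeping of $\beta$, $q$, and the constants.
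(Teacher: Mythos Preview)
Your proposal is correct and follows essentially the same argument as the paper's own proof: set $q=1/(2\beta)$, invoke Lemma~\ref{lem:Phi_upper} for the drift/diffusion bounds on $\Phi_t=\tr A_t^q$, plug $U=2\Phi_0$ into Lemma~\ref{lem:stochastic_inq} to obtain $T\asymp \beta^2/(\log n\cdot U^{2\beta})$, then translate $\Phi_t\le U$ into the operator-norm bound needed for Lemma~\ref{lem:boundAgivesKLS}. Your identification of the drift constraint as the binding one and the bookkeeping $U^{2\beta}\asymp\|A\|_{1/(2\beta)}$ match the paper exactly.
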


\begin{proof}
Consider the stochastic process starts with $p_{0}=p$. Let $\Phi_{t}=\tr A_{t}^{1/2\beta}$.
Lemma \ref{lem:Phi_upper} shows that $d\Phi_{t}=\delta_{t}dt+v_{t}^{\top}dW_{t}$
with $\delta_{t}\leq\frac{\log n}{\beta^{2}}\cdot\Phi_{t}^{1+2\beta}$
and $\|v_{t}\|_{2}\leq\frac{1}{\beta}\cdot\Phi_{t}^{1+\beta}$. Let
$U=2\tr A^{1/2\beta}$, $f(\Phi)=c\frac{\log n}{\beta^{2}}\Phi^{1+2\beta}$,
and $g(\Phi)=c\frac{1}{\beta}\cdot\Phi^{1+\beta}$ for some large
enough constant $c$. Take $T=c'\frac{\beta^{2}}{U^{2\beta}\log n}$
with some small enough constant $c'$. Note that $f(U)\cdot T\leq\frac{U}{8}$
and $g(U)\cdot\sqrt{T}\leq\frac{U}{8}$. This verifies the conditions
in Lemma \ref{lem:stochastic_inq} and hence this shows that
\[
\P\left[\max_{t\in[0,T]}\Phi_{t}\geq U\right]\leq0.01.
\]
Hence, we have $\P\left[\max_{t\in[0,T]}\|A_{t}\|_{\op}\geq U^{2\beta}\right]\leq0.01$
and 
\[
\P\left(\int_{0}^{T}\norm{A_{s}}_{\op}ds\leq\frac{1}{64}\right)\geq\frac{3}{4}.
\]
Using this, Lemma \ref{lem:boundAgivesKLS} shows that
\[
\psi_{p}\lesssim T^{-1/2}\lesssim\frac{1}{\beta}U^{\beta}\sqrt{\log n}\lesssim\frac{\sqrt{\log n}}{\beta}\|A\|_{\frac{1}{2\beta}}^{1/2}.
\]
\end{proof}

\end{document}